\documentclass[aps,reprint,superscriptaddress,nofootinbib, showkeys]{revtex4-2}

\usepackage[utf8]{inputenc}
\usepackage[T1]{fontenc}
\usepackage[english]{babel}
\usepackage{selinput}

\usepackage{graphics}
\usepackage{graphicx}
\usepackage{float}
\usepackage{adjustbox}
\usepackage{placeins}

\usepackage{amsmath}
\usepackage{amsthm}
\usepackage{amsfonts}
\usepackage{amssymb}
\usepackage{mathtools}
\usepackage{bm}
\usepackage{bbm}
\usepackage{dsfont}
\usepackage{mathdots}
\usepackage{physics}
\usepackage{braket}
\usepackage{nccmath}

\usepackage[normalem]{ulem}
\usepackage[shortlabels]{enumitem}
\usepackage{csquotes}
\usepackage{comment}
\usepackage{verbatim}
\usepackage{lipsum}

\usepackage[linesnumbered,ruled,vlined]{algorithm2e}
\SetKwInput{kwInit}{Init}

\usepackage[caption=false]{subfig}
\usepackage{tabularx,booktabs}
\usepackage{multirow}
\usepackage{ytableau}

\usepackage[toc,page]{appendix}
\usepackage{natbib}
\usepackage{xcolor}
\usepackage[colorlinks=true,citecolor=blue,linkcolor=magenta]{hyperref}

\usepackage{titlesec}
\titleformat{\paragraph}[runin]
  {\normalfont\bfseries}
  {}
  {0pt}
  {}

\makeatletter
\let\newfloat\newfloat@ltx
\makeatother

\def\Cbb{\mathbb{C}}

\def\HC{\mathcal{H}}

\def\LC{\mathcal{L}}

\def\ad{^{\dagger}}

\newcommand{\majoranaNorm}[1]{\|#1\|_{{\rm maj},2}}

\newcommand{\fsnull}[1]{}
\newcommand{\old}[1]{}

\newcommand{\ketbraq}[1]{\ketbra{#1}{#1}}
\newcommand{\bramatket}[3]{\langle #1 \hspace{1pt} | #2 | \hspace{1pt} #3 \rangle}
\newcommand{\bramatketq}[2]{\bramatket{#1}{#2}{#1}}

\newcommand{\poly}{\operatorname{poly}}

\newcommand{\Ebb}{\mathbb{E}}

\newcommand{\R}{\mathbb{R}}

\newcommand{\AC}{\mathcal{A}}
\newcommand{\BC}{\mathcal{B}}
\newcommand{\CC}{\mathcal{C}}

\newcommand{\GC}{\mathcal{G}}

\newcommand{\MC}{\mathcal{M}}

\newcommand{\OC}{\mathcal{O}}
\newcommand{\PC}{\mathcal{P}}

\newcommand{\VC}{\mathcal{V}}

\newcommand{\XC}{\mathcal{X}}

\newcommand{\Var}{{\rm Var}}

\renewcommand{\geq}{\geqslant}
\renewcommand{\leq}{\leqslant}

\newcommand{\spn}{{\rm span}}

\renewcommand{\vec}[1]{\boldsymbol{#1}}

\newcommand*{\id}{\openone}

\newcommand{\bs}{\textsf{BS}}

\newcommand{\al}{\alpha }

\newcommand{\dl}{\delta }

\newcommand{\ep}{\epsilon}

\newcommand{\lm}{\lambda }

\newcommand{\sg}{\sigma }

\newcommand{\liea}{\mathfrak{g}}

\def\be{\begin{equation}}
\def\ee{\end{equation}}
\def\bs{\begin{split}}
\def\e{\end{split}}
\def\ba{\begin{eqnarray}}
\def\bea{\begin{eqnarray}}

\def\tea{\end{eqnarray}}
\def\ea{\end{eqnarray}}
\def\eea{\end{eqnarray}}

\def\eye{\mathds{1}}

\def\R{\mathds{R}}

\def\lieg{\mathds{G}}

\def\U{\mathrm{U}}

\newtheorem{theorem}{Theorem}
\newtheorem{lemma}{Lemma}

\newtheorem{corollary}{Corollary}

\newtheorem{definition}{Definition}

\newsavebox{\mstrut}
\newcommand{\rrangle}{\rangle\kern-0.4\ht\mstrut\right\rangle}
\newcommand{\llangle}{\langle\kern-0.4\ht\mstrut\left\langle}

\newcommand{\Cl}{{\rm Cl}}
\newcommand{\FGU}{{\rm FGU}}

\newcommand{\QS}{\rm Q.S.}
\newcommand{\CT}{\rm C.T}
\newcommand{\QT}{\rm Q.T}

\newcolumntype{L}{>{$}l<{$}}  
\usepackage{siunitx}

\begin{document}

\title{The cost of simulating classically tractable quantum circuits and dynamics }

\author{Su Yeon Chang}
\email{suyeon.chang97@gmail.com}
\affiliation{Theoretical Division, Los Alamos National Laboratory, Los Alamos, NM 87545, USA}
\affiliation{European Organization for Nuclear Research (CERN), Geneva 1211, Switzerland}
\affiliation{Institute of Physics, Ecole Polytechnique Fédérale de Lausanne (EPFL),  Lausanne 1015, Switzerland}

\author{Supanut Thanasilp}
\affiliation{Chula Intelligent and Complex Systems, Department of Physics, Faculty of Science, Chulalongkorn University, Bangkok 10330, Thailand}

\author{Zo\"e Holmes}
\affiliation{Institute of Physics, Ecole Polytechnique Fédérale de Lausanne (EPFL),  Lausanne 1015, Switzerland}
\affiliation{Centre for Quantum Science and Engineering, Ecole Polytechnique Fédérale de Lausanne (EPFL),  Lausanne 1015, Switzerland}

\author{M. Cerezo}
\email{cerezo@lanl.gov}
\affiliation{Information Sciences, Los Alamos National Laboratory, Los Alamos, NM 87545, USA}

\begin{abstract}
Determining whether a quantum evolution can be efficiently simulated
classically is central to understanding the boundary between classical
and quantum computation.
However, polynomial-time simulability is an asymptotic statement, and does not by itself determine whether the (quantum-inspired) classical simulation is actually practical. 
Indeed, different
polynomial scalings can lead to vastly different computational costs,
particularly when expensive preprocessing or quantum data acquisition
is required. In this work, we ask whether classically simulable quantum
dynamics are in practice more resource-efficient to simulate
classically than to execute directly on quantum hardware. We analyze
this question using three resource metrics, quantum sample, quantum
time, and classical time complexity, for several widely studied
classically simulable circuit families.  
Using representative hardware-level estimates, we identify regimes in
which quantum simulation can be faster despite the existence of a
polynomial-time classical algorithm, as well as regimes in which
classical simulation remains more efficient. 
At the same time, the large quantum
sampling cost needed to characterize unknown input states can make this polynomial-time classical simulation prohibitively expensive with current cloud-based  hardware access prices.
Ultimately, our work indicates that guarantees of classical simulability with polynomial resources alone are insufficient to determine the
preferred implementation.

\end{abstract}

    \maketitle

\section{Introduction}

While simulating generic quantum circuits and dynamics is generally believed to be classically intractable in the worst case, a broad range of structured quantum processes admit efficient classical simulation. Prominent examples include Clifford circuits acting on stabilizer states~\cite{gottesman1998heisenberg,aaronson2004improved}, free-fermionic dynamics~\cite{terhal2002classical,valiant2001quantum,jozsa2008matchgates}, and low-entanglement dynamics accessible through tensor-network methods~\cite{vidal2003efficient,markov2008simulating}. More recently, approaches based on Lie-algebraic simulation, Pauli Propagation, symmetry, and architecture-specific surrogate models have substantially enlarged the set of quantum evolutions that can be treated efficiently on classical hardware~\cite{goh2023lie,rudolph2025pauli,teng2025leveraging,bermejo2024quantum}. Such methods are useful both for understanding the boundary between classical and quantum computation and as practical tools for developing quantum algorithms or hardware
~\cite{horodecki2009quantum,bravyi2005universal,jozsa2010matchgate,cerezo2020variationalreview,gujju2024quantum,wang2024comprehensive,chang2025primer,cerezo2023does}. These results establish when classical simulation is possible within polynomial runtime, but not whether it is actually the preferable implementation.

\begin{figure*}[t]
\subfloat[Quantum Simulation]{
    \includegraphics[width=0.5\linewidth]
    {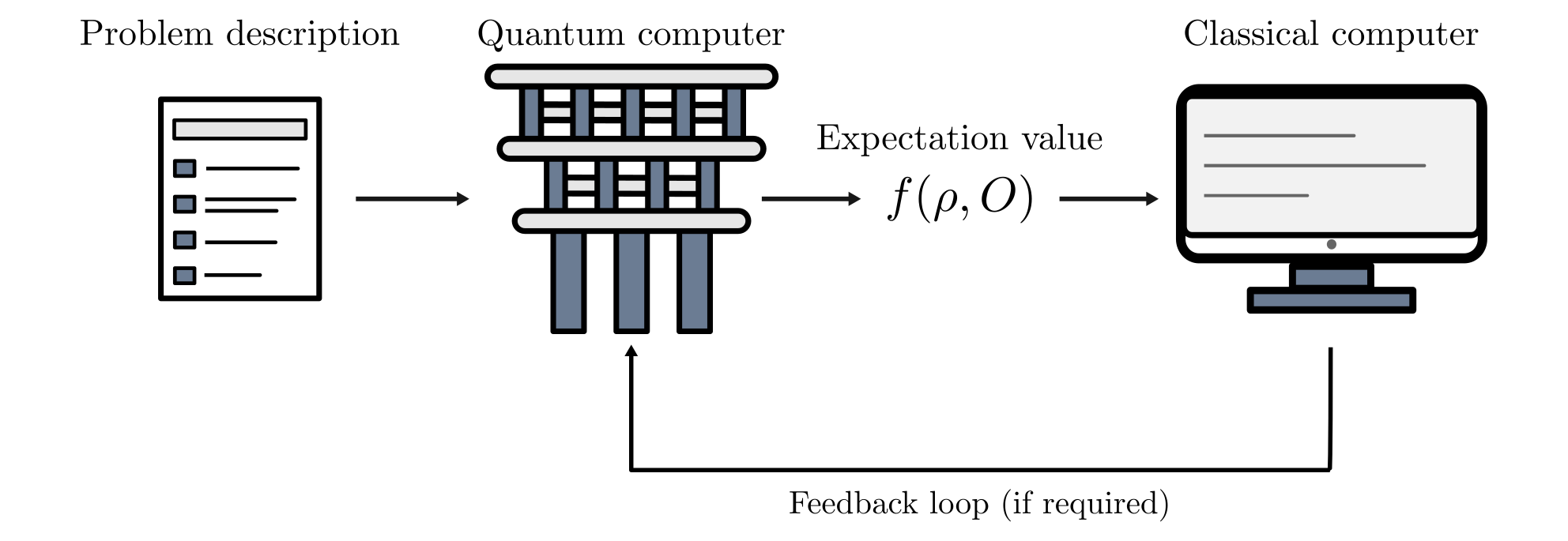}
}
\subfloat[Classical Simulation]{
    \includegraphics[width=0.5\linewidth]
    {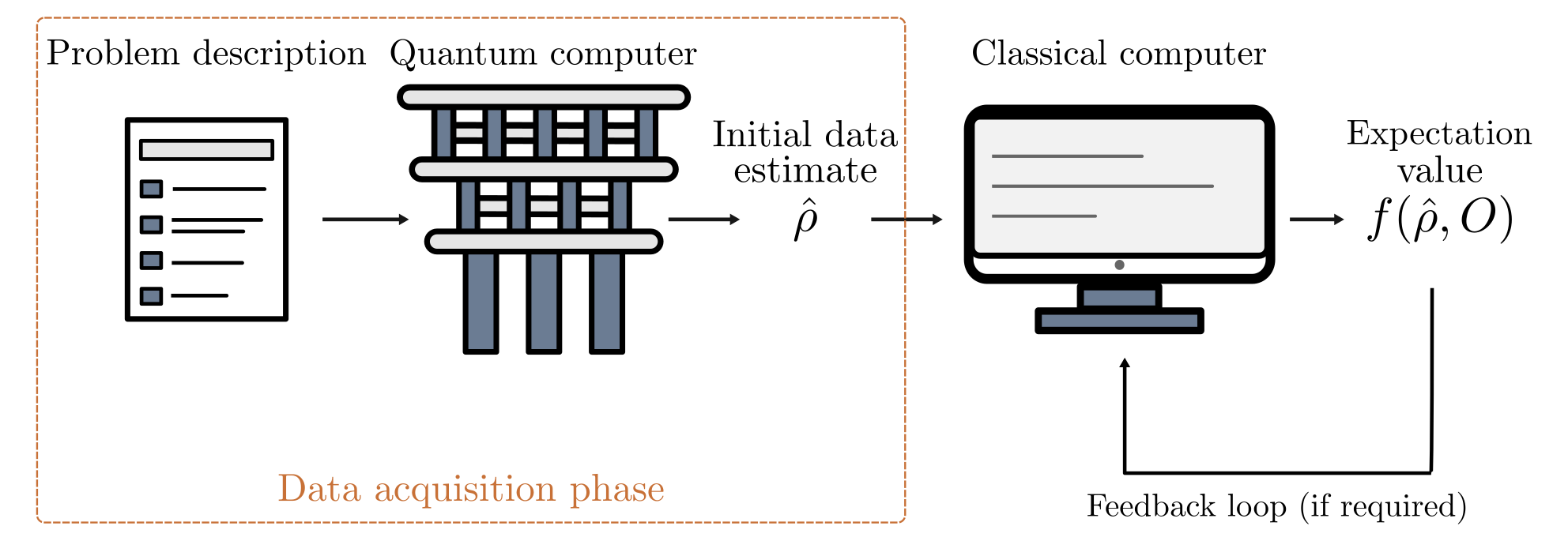}
}
\caption{\textbf{Schematic description of Quantum and Classical
Simulations.}
a) In Quantum Simulation, each circuit instance is executed directly on
the quantum hardware and the corresponding expectation value is
estimated from quantum measurements. When required, a classical
computer updates the circuit parameters between evaluations.
b) In CS, the quantum evolution is replaced by a
classical surrogate. If the required information about the initial state
is not known classically, it is obtained during an initial quantum
data-acquisition stage and subsequently reused by the Classical
Simulation for different circuit instances (e.g. for different simulation times or parameter values).}
\label{fig:qsim_qesim}
\end{figure*}

Indeed, classical simulability is fundamentally an asymptotic statement. A simulation whose runtime scales as a high-degree polynomial can still be prohibitively expensive at the system sizes of interest, while direct execution on quantum hardware can incur a very different combination of circuit depth and sampling costs. Moreover,
classical simulations for non-classical input states require information about the input
state that must itself be acquired experimentally. Thus, once a quantum
evolution is known to be classically simulable, the relevant question
is no longer simply whether it \emph{can} be simulated classically, but
whether it \emph{should} be. This motivates the central question of
this work:
\begin{quote}
\textit{If a quantum evolution is classically simulable, is it actually
more resource-efficient to simulate it classically than to run it on
quantum hardware?}
\end{quote}

To address this question, we compare the two computational routes shown
schematically in Fig.~\ref{fig:qsim_qesim}. We refer to a
Quantum Simulation (QS) as an algorithm in which the quantum
evolution is executed directly on quantum hardware. For each circuit
instance, the input state is prepared, evolved, and measured to estimate
the desired expectation values. In contrast, a (quantum-inspired) Classical
Simulation (CS) exploits structure in the circuit, observable, or relevant
operator space to replace the quantum evolution by a polynomial-time
classical computation. Importantly, CS does not
necessarily eliminate access to quantum hardware. When the initial
state is not known classically, the surrogate can require an initial
quantum data-acquisition stage, for instance through direct
measurements, tomography, or classical shadows~\cite{elben2022randomized,west2026classical}. This information can
then be processed and reused across subsequent circuit instances. When
the input state is already known classically, this quantum
data-acquisition step is absent.

The goal of this work is to quantify the resources required by these two
approaches on equal footing. We consider several families of provably
classically simulable quantum evolutions, including Clifford circuits,
shallow local circuits, Matchgate circuits, $S_n$- and
$\U(1)$-equivariant circuits, randomly initialized quantum convolutional neural networks,
and a randomized model  of ADAPT-VQE circuits. For each problem, we
construct or identify a suitable CS and derive its
Quantum Sample (number of measurements), Quantum Time (quantum circuit depth), and Classical Time (number of classical operations) complexities. A
central distinction throughout our analysis is whether a given cost is
incurred only once, either to characterize the initial state or
construct the classical surrogate, or whether it must be paid again for
each circuit instance.

Our results show that the existence of a polynomial-time CS does not by itself determine which implementation is more resource-efficient at finite system sizes. The comparison depends on how the total cost is divided between one-time contributions and per-instance contributions. QS incurs sampling and circuit-execution costs for every circuit instance. CS may instead require a large one-time cost for quantum data acquisition and preprocessing, together with a classical evaluation cost for each circuit instance. When the resource estimates are translated into representative hardware-level costs, we identify regimes in which the relative advantage of QS and CS depends on the number of circuit instances, leading to a crossover between the two approaches. Thus, classical simulability, wall-clock advantage, and monetary advantage constitute distinct notions of efficiency.

The paper is organized as follows.
Section~\ref{sec:framework} introduces the expectation-value estimation
problem, the resource metrics used throughout the paper, and the
classically simulable circuit families considered below.
Section~\ref{sec:quantum_simulation} derives the corresponding resource
requirements for a general QS.
In Section~\ref{sec:classical_methods}, we review the two Classical
Simulation frameworks that require additional machinery, namely
$\liea$-sim and Pauli Propagation. Readers who are familiar with them may skip this section.
Section~\ref{sec:cs_for_evolutions} derives the resource complexity
of architecture-specific CS for the different model
classes.
Section~\ref{sec:practical_cost} translates these complexity estimates
into representative wall-clock times and quantum hardware access costs.
Finally, Section~\ref{sec:conclusion} discusses the scope of the
comparison and the implications for practical quantum advantage.
Detailed derivations and proofs of the computational results are
provided in the appendices.

\section{Framework\label{sec:framework}}

\subsection{Estimating expectation values\label{sec:expectation_values}}

Let $\mathcal{H}$ denote an $n$-qubit Hilbert space, and let
$\mathcal{B}(\mathcal{H}) \simeq \mathcal{H}\otimes\mathcal{H}^*$ denote
the space of linear operators acting on $\mathcal{H}$. We consider an
input state $\rho\in\mathcal{B}(\mathcal{H})$ and a quantum evolution
of the form
\begin{equation}\label{eq:generic_circuit}
    U = \prod_{\ell=1}^{L} U_\ell 
      = \prod_{\ell=1}^{L} e^{-i w_\ell H_\ell}\;,
\end{equation}
where we adopt the convention that products are ordered from right to left: $\prod_{\ell = 1}^L U_\ell = U_L \cdots U_1.$
Here, $\mathcal{G}=\{H_1,\ldots,H_L\}$ is a set of Hermitian generators
and $\bm{w}=(w_1,\ldots,w_L)\in\mathbb{R}^L$ is a real-valued vector. The physical interpretation of $\vec{w}$ depends on the specific application. For instance, in variational quantum algorithms or quantum machine learning, the components of $\vec{w}$ correspond to trainable parameters (typically denoted by $\vec{\theta}$), while in quantum dynamics simulations, they could represent the discrete time steps governing the system's evolution (denoted by $t$) for a sequence of Hamiltonians or after some form of Trotterization. Moreover, we also note that the $L = 1$ case captures continuous time dynamics.

Next, we define an observable $O$ as a linear combination of $M$ orthogonal Hermitian operators from a basis $\PC$ of $\BC$, i.e.,
\begin{equation}
    O = \sum_{\al} c_\al P_\al\,,
    \label{eq:observable}
\end{equation}
with  $c_\al \in \R$ and $P_\al\in \PC$. We denote the $\ell_1$- and $\ell_2$- norms of the coefficient vectors by $\norm{\vec{c}}_1 = \sum_\al \abs{c_\al}$ and $\norm{\vec{c}}_2 = (\sum_\al c_\al^2)^{1/2}$, respectively.  For instance, we can take $\PC$ to be the set of $n$-qubit Pauli operators $\{\eye,  X,Y,Z\}^{\otimes n}$.

The central task considered throughout this work is to estimate the
expectation value
\begin{equation}\label{eq:loss}
    f_U(\rho,O) = \Tr[U\rho U^\dagger O]
\end{equation}
to additive precision $\epsilon$. That is, we seek an estimator
$\widetilde{f}_U(\rho,O)$ satisfying
\begin{equation}
    \left|f_U(\rho,O)-\widetilde{f}_U(\rho,O)\right|\leq \epsilon\;,
\end{equation}
with a high success probability $1-\delta$ and we want to quantify the different resources required to produce it.

While Eq.~\eqref{eq:loss} describes a single evolution, we will generally
consider a collection of $T$ circuit instances
$\{U^{(i)}\}_{i=1}^{T}$. Each $U^{(i)}$ belongs to the same circuit family and is
constructed from the same set of generators, but can correspond to a
different set of parameters,
\begin{equation}
    U^{(i)}
    = \prod_{\ell = 1}^L U_\ell^{(i)} = 
    \prod_{\ell=1}^{L}
    e^{-iw_{i,\ell}H_\ell}.
\end{equation}
For circuit families with discrete gate choices, such as Clifford
circuits, different instances can instead correspond to different gate
sequences drawn from the same allowed family.

Thus, $T$ denotes the total number of distinct circuit evaluations whose
expectation values are required, i.e., $\{f_{U^{(i)}}(\rho, O) \}_{i=1}^T$. In a variational setting, $T$ can count
the circuit instances evaluated throughout the optimization, while in a
dynamical simulation it can denote the number of time points considered.
If only a single circuit is evaluated, then $T=1$. ADAPT-VQE is treated
separately below because the ansatz itself changes at each circuit evaluation. In that setting, $L$ denotes the final number of selected
generators and $T_{\rm VQE}$ denotes the total number of energy
evaluations performed across all parameter re-optimization stages.

We make two additional assumptions. First, we assume repeated access to
independently prepared copies of $\rho$, but no quantum memory allowing
coherent operations on multiple copies and no general classical description of
a circuit preparing $\rho$ (unless $\rho$ is a trivial state to prepare). Thus, each copy can be evolved and measured
independently, while multi-copy protocols are outside the scope of this
work. Second, we assume that the unitary generators and the measurement
operator admit efficient classical descriptions. In practice, $H_\ell$
and $O$ are taken to be linear combinations of $\mathcal{O}(\poly(n))$
elements of $\mathcal{P}$.  With these assumptions in mind, we evaluate how different methods to estimate expectation values $f_U(\rho,O)$ fare against each other in terms of the following three metrics:
\begin{itemize}
    \item \textbf{Quantum Sample complexity}: number of copies of $\rho$
    required to estimate the quantities of interest. In all settings
    considered here, this coincides with the total number of circuit shots
    and therefore with the number of accesses to quantum hardware. This
    metric is also directly relevant when the quantum device is accessed
    through a per-shot cost model.

    \item \textbf{Quantum Time complexity}: quantum circuit depth required
    for a single sample. This includes the gates needed to implement the
    evolution of $\rho$ and any basis changes required before measurement.
    Together with hardware-dependent initialization, gate, readout, and
    reset times, this quantity can later be converted into a wall-clock
    cost per shot.

    \item \textbf{Classical Time complexity}: number of elementary
    operations performed on the classical computer. This includes the
    processing of measurement outcomes and, when applicable, the
    optimization or surrogate-evaluation steps. Once a classical hardware
    model is specified, this quantity can likewise be converted into a
    wall-clock time.
\end{itemize}

\subsection{Classically simulable evolutions\label{sec:classically_simulable_evolutions}}
In this section, we review the classes of quantum evolutions considered
throughout this work. In all cases, we only briefly discuss the structure
and purpose of each unitary, and refer the reader to the corresponding
references for additional details. Crucially for our purposes,
polynomial simulability alone does not determine practical cost. The
dimension of the reduced representation, the amount of information
required about the input state, and whether this information can be
reused across different evolutions can lead to very different
polynomial overheads.

Although these evolutions act on an exponentially large Hilbert space,
their structure allows the expectation value in
Eq.~\eqref{eq:loss} to be estimated using classical resources that scale
at most polynomially with the system size. We organize the examples
below according to the structure underlying this reduction. We first
consider exact simulations, beginning with Clifford and shallow local
circuits, where simulability follows directly from closure of the Pauli
basis or restriction to a polynomially sized light cone. We then turn
to exact simulations that exploit more structured reduced
representations, including invariant Majorana sectors, permutation
symmetry, and fixed Hamming-weight subspaces. Finally, we consider
QCNNs and ADAPT-VQE, where polynomial runtime is obtained by truncating
sufficiently high-weight Pauli or Majorana contributions and therefore
introduces a controllable approximation error. Importantly, classical
simulability is generally a property of the complete estimation problem,
rather than of the circuit alone, and depends on the circuit
architecture, the input state $\rho$, and the observable $O$. As such, throughout each example we will specify the conditions over the input state and measured operator that enable efficient simulability.

\subsubsection{Clifford evolutions}

Perhaps the simplest and most well-known example of a classically simulable quantum evolution is a Clifford circuit. Let $\Cl(n)$ denote the $n$-qubit Clifford group, defined as the normalizer of the Pauli group $\mathbb{P}_n=\langle \pm 1,\pm i\rangle\times \langle \{\eye,X,Y,Z\}^{\otimes n}\rangle$. A Clifford unitary maps every Pauli string to another Pauli string under conjugation,
\begin{equation}
U\ad P_\al U = \pm P_{\beta}\,,
\qquad U\in \Cl(n)\,, \quad P_\al,P_\beta\in\mathbb{P}_n\,.
\end{equation}
Consequently, Clifford evolution does not generate a linear combination of Pauli strings and can be represented by a linear transformation over a $2n$-dimensional binary vector space. More specifically, the action of a Clifford unitary can be encoded by a symplectic matrix in $\mathrm{Sp}(2n,\mathbb{F}_2)$.

When the initial state is a stabilizer state, the resulting circuit can be simulated efficiently through the standard Gottesman--Knill framework~\cite{gottesman1998heisenberg,nest2008classical}. In this work, since we allow $\rho$ to be a general quantum state, the circuit evolution can still be efficiently simulated in the Heisenberg picture provided that $O$ is expressed as a linear combination of Pauli operators. Indeed, in such a case one can  backpropagate each operator appearing in $O$ through the Clifford circuit, and estimate the expectation values of the resulting Pauli strings with respect to the initial state.

\subsubsection{Shallow local evolutions}

Another class of evolutions considered in this work consists of shallow circuits generated by geometrically local gates. An archetypal representative of this class is the so-called hardware efficient ansatz (HEA)~\cite{kandala2017hardware,cerezo2020cost} used in variational quantum algorithms and quantum machine learning, which is composed of layers of neighboring two-qubit gates arranged in a brick-like fashion, with circuit depth
\begin{equation}
L\in\OC(\log(n))\,,
\end{equation}
followed by the measurement of a $k$-local observable with $k\in\OC(1)$. More generally, we define the set of $k$-local Pauli operators as:  
\begin{equation}
    \PC=\{P\in\{\id,X,Y,Z\}^{\otimes n}\,|\,|P|\leq k\,,\quad k\in\OC(1)\}\,,
\end{equation}
where here $|P|$ denotes the bodyness of the operator.  Although this definition includes $k$-local observables with arbitrarily separated support, in the subsequent analysis for shallow HEA, we further restrict to geometrically $k$-local observables whose support is contained within a connected region of at most $k$ neighboring qubits in the underlying interaction geometry. 

The classical simulability of these circuits follows from a simple backward-light-cone argument. For a one-dimensional alternating-layer architecture, backpropagating a local observable through one circuit layer can enlarge its support only by a constant number of neighboring qubits. Therefore, after $L$ layers, the observable is supported on at most $\OC(k+L)=\OC(\log(n))$ qubits. Thus, the corresponding reduced Hilbert space has dimension
\begin{equation}
2^{\OC(\log(n))}=\OC(\poly(n))\,,
\end{equation}
and the reduced circuit contained within the light cone can be represented and manipulated using polynomial classical resources~\cite{basheer2023alternating}. Here we can see how, although the full circuit acts on $n$ qubits, the expectation value of a local observable depends only on a logarithmically sized portion of the input state and circuit. This enables an exact CS based on reduced unitary matrices, provided that the relevant reduced information about the input state can be efficiently obtained.

\subsubsection{Matchgate, or free-fermionic, evolutions}

Matchgate circuits form another important family of exactly classically simulable evolutions~\cite{valiant2001quantum,jozsa2008matchgates,terhal2002classical}. As a concrete example, they can be generated by the set of operators
\begin{equation}
\GC_{\rm MG}
=
\left\{Z_i\right\}_{i=1}^n
\bigcup
\left\{X_iX_{i+1}\right\}_{i=1}^{n-1}\;.
\label{eq:matchgate_generators_framework}
\end{equation}
Under the Jordan--Wigner transformation, these circuits correspond to the evolution of non-interacting fermions generated by Hamiltonians that are quadratic in Majorana operators. Indeed, for an $n$-qubit system, we introduce the $2n$ Majorana operators
\begin{align}
    &\gamma_1 = X\eye\cdots \eye\;,~\gamma_3 = ZX\eye\cdots \eye\;,~\gamma_{2n-1} = Z\cdots ZX\;, \nonumber \\
    &\gamma_2 = Y\eye\cdots \eye\;,~\gamma_4 = ZY\eye\cdots \eye\;,~\gamma_{2n} = Z\cdots ZY\;,  
    \label{eq:majorana}
\end{align}
which satisfy the anti-commutation relation, $\{\gamma_\mu, \gamma_\nu\} = 2\dl_{\mu\nu}\eye$ for all $\mu, \nu = 1,\dots, 2n\;$. 

We further define Majorana string $\gamma_{\vec{\mu}}$ (a fermionic analogue of a Pauli string) as an ordered product of Majorana operators
\begin{equation}
    \gamma_{\vec{\mu}} = \gamma_{\mu_1}\gamma_{\mu_2} \cdots \gamma_{\mu_\kappa}\;, \quad \vec{\mu} \in \CC_{2n, \kappa}\,,
    \label{eq:majorana_string}
\end{equation}
where $\kappa$ is referred to as the fermionic locality and $\CC_{2n, \kappa}$ denotes the admissible set of ordered vectors with $n$ modes and fermionic locality $\kappa$ defined as
\begin{equation}
    \CC_{2n, \kappa} = \left\{\vec{\mu} \in \mathbb{N}^{\kappa}  \; \middle| \;  1 \le \mu_1 < \dots < \mu_\kappa \le 2n\right\} \,. 
    \label{eq:CC_def}
\end{equation}Each set of $\kappa$-local Majorana strings forms a basis of the invariant submodule $\BC_\kappa = \spn\{G_{\vec{\mu}}^\kappa\}_{\vec{\mu} \in \CC_{2n, \kappa}}$, where we define $G_{\vec{\mu}}^\kappa = i^{\kappa(\kappa-1)/2} \gamma_{\vec{\mu}}$.  By construction, each subspace $\BC_{\kappa}$ has dimension $\tbinom{2n}{\kappa}$ and the full operator space $\BC$ decomposes as a direct sum, $\BC = \bigoplus_\kappa \BC_\kappa$.  

Under this mapping, any Matchgate circuit generated by $\GC$ corresponds to the evolution under a free-fermionic (Hermitian) Hamiltonian that is quadratic in Majorana strings, given as
\begin{equation}
    H_Q = \frac{i}{2} \sum_{\nu_1 < \nu_2} \Lambda_{\nu_1 \nu_2} \gamma_{\nu_1} \gamma_{\nu_2}\,,
    \label{eq:H_majorana}
\end{equation}
where $\Lambda = \{\Lambda_{\nu_1 \nu_2}\}_{\nu_1, \nu_2 = 1,\dots 2n}$ is a real antisymmetric tensor of size $2n \times 2n$. Here, the subscript $Q$ denotes the associated special orthogonal matrix defined as $Q = e^{\Lambda} \in \textrm{SO}(2n)$, governing the evolution of the system. Moreover, the action of the Matchgate circuit, expressed as $U_Q = e^{-iH_Q}$, reduces to the linear transformation at the level of Majorana strings and preserves the fermionic locality of the Majorana strings, leaving the subspace $\BC_\kappa$ invariant. Explicitly, the Matchgate circuit can be fully characterized by the following adjoint action
\begin{equation}
    U\ad_Q \gamma_{\vec{\mu}} U_Q = \sum_{\vec{\nu}\in \CC_{2n, \kappa}} \det({Q_{\vec{\mu}\vec{\nu}}}) \gamma_{\vec{\nu}}\;,
\end{equation}
where $Q_{\vec{\mu} \vec{\nu}}$ represents the $\kappa \times \kappa$ submatrix of $Q$ with rows indexed by $\vec{\mu}$ and columns by $\vec{\nu}$. As a result, the action of the Matchgate circuit over an observable, expressed as a linear combination of Majorana monomials with fixed locality $\kappa \in \OC(1)$, can be reduced to a linear transformation in the polynomially-sized Majorana subspaces, rather than acting on the full $2^n$-dimensional Hilbert space, which leads to efficient classical simulability. Hence, here we have
\begin{equation}
    \PC=\{G_{\vec{\mu}}^\kappa = i^{\frac{\kappa(\kappa-1)}{2}}\gamma_{\vec{\mu}}\,|\,\vec{\mu} \in \CC_{2n, \kappa}\,, \quad \kappa\in\OC(1)\}\,.
\end{equation}

\subsubsection{$S_n$-equivariant evolutions}

We also consider quantum evolutions that are equivariant under arbitrary permutations of the qubits~\cite{schatzki2022theoretical,anschuetz2022efficient,kazi2023universality,chang2026practical}. Let $R(\pi)$ denote the qubit-permuting representation of a permutation $\pi\in S_n$. That is
\begin{equation}
    R(\pi)\ket{i_1i_2\ldots i_n}=\ket{i_{\pi^{-1}(1)}i_{\pi^{-1}(2)}\ldots i_{\pi^{-1}(n)}}\,.
\end{equation}
Then, we say an operator $A$ is $S_n$-equivariant if
\begin{equation}
[A,R(\pi)]=0
\qquad
\text{for all }\pi\in S_n.
\end{equation}

Next, $S_n$-equivariant quantum evolutions can be obtained when the generators are given by symmetrized Pauli strings, defined as sums of Pauli strings permuted over all qubits. In particular, we restrict our attention to one- or two-local Pauli operators, which encompass most physically motivated problems~\cite{vidal2006concurrence, luo2025hamiltonian}. More concretely, $S_n$-equivariant generators can be written as
\begin{equation}
    \GC_{S_n} = \left\{\frac{1}{n}\sum_{j=1}^n P_j\;, \frac{1}{n(n-1)}\sum^n_{j\neq k } P_jP_k \right\}\,,
    \label{eq:generator_Sn}
\end{equation}
where $P_j \in \{X,Y,Z\} $ denotes a single-qubit Pauli string acting on qubit $j$.

The classical simulability of $S_n$-equivariant evolutions paired with an $S_n$-equivariant observable $O$, e.g.,
\begin{equation}
    \PC=\GC_{S_n}\,.
\end{equation}
follows from their representation-theoretic block structure~\cite{fulton1991representation}. After transforming to the Schur basis, every $S_n$-equivariant operator admits a decomposition into irreducible representation (irrep) blocks as:
\begin{equation}
    A \cong \bigoplus_\lm \eye_{m_\lm} \otimes A_\lm\;, 
\end{equation}
where $A_\lm$ denotes the irrep block of label $\lm$ with size $d_\lm \times d_\lm$, which scales at most $(n + 1)\times(n+1)$, and $m_\lm$ is the multiplicity of the irrep $\lm$. 

Under this change of basis,  $S_n$-equivariant quantum evolutions $U$ and observables $O$ can be decomposed similarly as $U  \cong \bigoplus_\lm (\eye_{m_\lm} \otimes U_\lm)$ and $O \cong \bigoplus_\lm (\eye_{m_\lm} \otimes O_\lm)$ where $U_\lm$ and $O_\lm$ are $d_\lm \times d_\lm$ matrices. Then, the expectation value $f_U(\rho, O)$ defined in Eq.~\eqref{eq:loss} can be expressed as a sum over all the irreps $\lm$ as 
\begin{equation}
    f_U(\rho, O) = \sum_\lm \Tr[\rho_\lm U\ad_\lm O_\lm U_\lm], 
\end{equation}
where $\rho_\lm$ denotes the projection of $\rho$ onto the corresponding irrep subspace, summed over all $m_\lm$ multiplicity labels. In particular, we note that the initial state need not itself be permutation invariant; only its reduced components within the irrep blocks relevant to the observable must be reconstructed. Thus, the full exponentially large unitary evolution is replaced by a collection of polynomial-size matrix evolutions~\cite{chang2026practical}.

\subsubsection{$\U(1)$-equivariant evolutions}

The next family consists of $\U(1)$-equivariant, or number-conserving, quantum circuits. These evolutions commute with the total magnetization operator
\begin{equation}
M_z=\sum_{i=1}^n Z_i,
\qquad
[U,M_z]=0,
\end{equation}
and therefore preserve the Hamming weight of computational basis states. To describe the $\U(1)$-equivariant circuit, we define a set of operators  
\begin{equation}
    \XC =  \{ I, Z, a, a\ad\}^{\otimes n}\;, 
\end{equation}
where $a$ and $a\ad$ are the lowering and raising operators, respectively, defined as
\begin{equation}
a\ad = \frac{X - iY}{2} = |1\rangle\langle 0 |\,, \quad a = \frac{X + iY}{2} = |0\rangle\langle 1 |\;.
\end{equation}
That is, $a\ad$ maps $\ket{0}$ to $\ket{1}$ and hence raises the
Hamming weight by one, while $a$ lowers it. An operator $O \in \XC$ is $\U(1)$-equivariant (or number-conserving) if and only if $n_{+}$ and $n_{-}$, the numbers of $a$ and $a\ad$ in $O$, are equal. We define the set of $\U(1)$-equivariant operators as
\begin{equation}
    \XC_{\U(1)} = \left\{A \in \{ I, Z, a, a\ad \}^{\otimes n} \, \middle\lvert \, n_{a,+} = n_{a,-} \right\}\;.  
    \label{eq:U1_basis}
\end{equation}
We also denote by $n_z$ the number of $Z$ operators in $O$.  This set forms an orthogonal basis that spans all $\U(1)$-equivariant operators on $n$ qubits.

Alternatively, any $\U(1)$-equivariant quantum circuit $U$ can be generated by exponentiation of the generators in  $\GC = \{H^{ij}_{\rm RBS}\}_{ij} $ where $H_{\rm RBS}^{ij}$ is a Hamiltonian acting on qubit $i$ and $j$, defined as
\begin{equation}
\renewcommand\arraystretch{1}
    H_{\rm RBS}^{ij} = i(a_i\ad a_j - a_i a_j\ad) = 
    \begin{pmatrix}
        0 & 0 & 0 & 0 \\
        0 & 0 & i & 0 \\ 
        0 & -i & 0 & 0 \\ 
        0 & 0 & 0 & 0 
    \end{pmatrix}\;.  
\end{equation}         
This Hamiltonian generator is a so-called reconfigurable beam splitter (RBS) gate written as follows
\begin{equation}
\renewcommand\arraystretch{1.2}
G_{ij}(w) = e^{-i w H^{ij}_{\rm RBS}} = 
    \begin{pmatrix}
        1 & 0 & 0 & 0 \\
        0 & \cos(w) & \sin(w) & 0 \\ 
        0 & -\sin(w) & \cos(w) & 0 \\ 
        0 & 0 & 0 & 1 
    \end{pmatrix}\;. 
    \label{eq:givens}
\end{equation}
The set of indices $ij$ depends on the chosen circuit architecture--for example, it may follow the qubit connectivity of the underlying hardware~\cite{monbroussou2025trainability}, or adopt a pyramidal structure~\cite{kerenidis2021classical}.

Under the action of $U(1)$-equivariant circuits, the Hilbert space with computational basis can be decomposed into subspaces with the same Hamming weight $h$
\begin{equation}
    \HC = \bigoplus_{h=0}^n \HC^{(n)}_h = \bigoplus_{h=0}^n \spn(B_h^{(n)})\;, 
\end{equation}
\begin{equation}
    \abs{B^{(n)}_h} = d^{(n)}_h = \binom{n}{h}\;,
\end{equation}
where $B_h^{(n)}$ corresponds to the set of computational basis with Hamming weight $h$. 
The matrix representation of a $\U(1)$-equivariant circuit in the computational basis takes a block-diagonal form, with each block of size $d_h^{(n)}  \times d_h^{(n)}$, constrained to the subspace of a fixed Hamming weight, $h$. For a $\U(1)$-equivariant observable, i.e.,
\begin{equation}
    \PC=\XC_{\U(1)}\,,
\end{equation}
and an initial state with a fixed Hamming weight $h$, the effective dynamics are restricted to the subspace of DLA with dimension $\Theta\left(\binom{n}{h}^2\right)$. Given that $h\in \order{1}$, this scales at most polynomially as $\order{n^{2h}}$, making the simulation classically tractable.  Thus, throughout this work,  $\U(1)$-equivariant evolutions constitute an example where conditions are imposed over the initial state so that the relevant Hamming-weight sector remains polynomially sized.

\subsubsection{Quantum convolutional neural networks}

Quantum convolutional neural networks (QCNNs) were initially proposed as a quantum machine learning architecture for data classification~\cite{cong2019quantum}. In particular, the circuit is composed of alternating convolutional and pooling layers. The convolutional layers consist of local gates acting on nearest neighbors, while the pooling layers progressively reduce the number of active qubits. For an $n$-qubit input, the total number of hierarchical layers scales as
\begin{equation}
L\in\order{\log_2(n)},
\end{equation}
and the final observable acts only on one or two qubits
\begin{equation}
    \PC\in\{\id,X,Y,Z\}^{\otimes 2}\,.
\end{equation}

Under the random-initialization assumptions considered in
Refs.~\cite{pesah2020absence,bermejo2024quantum,angrisani2024classically}, sufficiently high-weight Pauli components can be
truncated while maintaining a controlled approximation error. Moreover,
Ref.~\cite{bermejo2024quantum} provides numerical evidence that this low-weight
description remains accurate throughout training for the benchmark
problems considered there. Thus, under these conditions, QCNN expectation
values admit an efficient approximation based on the propagation of
low-weight observables.

\subsubsection{ADAPT-VQE\label{sec:adapt_vqe_description}}

Finally, we consider the restricted class of adaptive fermionic evolutions arising in ADAPT-VQE for finding the ground state of quantum chemistry Hamiltonians~\cite{grimsley2019adaptive,tang2019qubit,grimsley2022adapt}. For completeness, we recall that the Variational Quantum Eigensolver (VQE) algorithm aims to variationally optimize the parameters in a quantum circuit to prepare the ground state of a given Hamiltonian~\cite{peruzzo2014variational,cerezo2020variationalreview,tilly2022variational}. As such, given an initial state $\rho_0$ and the Hamiltonian $O$ of interest, it optimizes the circuit $U \equiv U(\vec{w})$ to minimize the energy $f_{U(\vec{w})}(\rho_0,O)$. Then, the parameters $\vec{w}$ are iteratively updated using either gradient-based or gradient-free optimization method. In particular, physical Hamiltonians are expressed as linear combinations of constant-order Majorana products, meaning that 
\begin{equation}
    \PC=\{i^{\kappa(\kappa-1)/2}\gamma_{\vec{\mu}}\,|\,\vec{\mu} \in \CC_{2n, \kappa}\,, \quad \kappa\in\OC(1)\}\,.
\end{equation}

The choice of the initial state $\ket{\psi_0}$ depends on the problem of interest, but given that we focus on quantum chemistry-related problems we assume the system is initialized to a Hartree--Fock state $\ket{\psi^{\rm HF}}$~\cite{tilly2022variational, zhang2021variational, arute2020hartree}. Then, among the different variants for the choice of $U(\vec{w})$, the Adaptive Derivative-Assembled Pseudo-Trotter ansatz VQE (ADAPT-VQE)~\cite{grimsley2019adaptive, tang2019qubit, grimsley2022adapt} dynamically constructs the quantum circuit ansatz by repeatedly choosing circuit generators $\{H_\ell\}_{\ell =1}^L$ from an operator pool $\AC$. At each iteration, a new layer is added to the circuit one at a time by selecting the operator $A^* \in \AC$, which yields the largest energy gradient when evaluated with respect to the current trial state. After adding the selected operator, all parameters are re-optimized following the standard VQE optimization procedure.  

While ADAPT-VQE is not classically simulable in full generality, restricting the operators in $\AC$ can lead to the model's simulability. To make these restrictions explicit, we consider a general parameterized fermionic circuit of the form given in Eq.~\eqref{eq:generic_circuit}, where each generator $H_\ell$ is the Majorana string $G_{\vec{\mu}_\ell}^{\kappa_\ell}$ with fermionic locality  $\kappa_\ell$, such that $U(\vec{w}) = \prod_\ell e^{iw_\ell G_{\vec{\mu}_\ell}^{\kappa_\ell}/2}$. 
In particular, we restrict the ADAPT-VQE operator pool to Majorana strings with bounded fermionic locality,
\begin{equation}
\AC
\subseteq
\bigoplus_{\kappa\leq\kappa^*}\BC_\kappa,
\qquad
\kappa^*\in\order{1},
\end{equation}
and assume that the Hamiltonian and measured observables are also sums of low-locality Majorana strings. Accordingly, at each layer $\ell$, the circuit generator is selected from this pool such that $
G^{\kappa_\ell}_{\vec{\mu}_\ell}\in\AC.$
Conjugating one Majorana string by such a gate either leaves it unchanged or produces a linear combination involving another Majorana string. Iteratively propagating the observable therefore generates paths through the space of Majorana strings. The number of such strings grows rapidly in general, but can be controlled by discarding terms whose fermionic locality exceeds a threshold $\tau$~\cite{miller2025simulation}.

\section{Quantum Simulation for estimating expectation values\label{sec:quantum_simulation}}

Given the framework above, the most direct way to estimate
$f_U(\rho,O)$ is to prepare a copy of $\rho$, evolve it under $U$, and
measure the terms appearing in $O$. We refer to this procedure as a
Quantum Simulation, since the quantum device performs the
evolution and supplies the expectation value estimates. The classical
computer is used only to process the measurement outcomes and, when
required, update the circuit parameters.

In this section, we quantify the three resource costs introduced in
Section~\ref{sec:framework}. Importantly, the QS procedure does not
exploit the structure responsible for the classical simulability of the
evolution. The resulting bounds are therefore generic and apply to all
circuit families considered in the previous section.

\subsection{Quantum sample complexity}

Given the prepare-evolve-measure structure of QS, the Quantum Sample
complexity is determined by the measurements performed on
$U\rho U^\dagger$. In general, commuting terms in $O$ can be grouped
and measured simultaneously. To obtain a simple architecture-independent
upper bound, we instead assume that the $M$ terms are measured
independently. This corresponds to a worst-case measurement model, while
the case in which commuting observables are grouped is analyzed in
Appendix~\ref{adx:qsim_quantum_resource}. Since different terms can contribute unequally to $O$, we employ a
weighted measurement strategy in which the number of samples $N_\alpha$
allocated to $P_\alpha$ is proportional to $|c_\alpha|$,
\begin{equation}\label{eq:weighted_measurement_main}
    N_\alpha =
    \frac{N|c_\alpha|}{\sum_\beta |c_\beta|}\;,
\end{equation}
where $N=\sum_\alpha N_\alpha$ is the total number of measurements.

For a target error $\ep$, the following theorem, proved in Appendix~\ref{adx:qsim_quantum_resource},  provides a bound on the Quantum Sample complexity of QS:
\begin{theorem}
    Let $\{U^{(i)}\}_{i=1}^T$ be a quantum evolution, where each $U^{(i)}$ takes the form in~\eqref{eq:generic_circuit} and let  $O$ be an observable as in~\eqref{eq:observable}. Then, the total number of Quantum Samples required to estimate all expectation values $f_{U^{(i)}}(\rho,O)$ up to additive error $\ep$ and with success probability $1-\dl$ is
     \begin{equation}
    N_{\QS}  \in     \mathcal{O}\left(
        \frac{T\|\vec{c}\|_1^2}{\epsilon^2}
        \log\left(\frac{T}{\delta}\right)
    \right)\;. 
    \label{eq:Nqs_quantum}
    \end{equation}

\label{thm:Nqs_quantum}
\end{theorem}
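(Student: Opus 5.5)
The strategy is to build, for a single circuit instance, an unbiased estimator of $f_U(\rho,O)$ from the weighted measurement allocation in Eq.~\eqref{eq:weighted_measurement_main}. We then bound its deviation with a Hoeffding-type inequality and take a union bound over the $T$ instances.

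Fix one instance $U=U^{(i)}$. For each term $P_\alpha$ we draw $N_\alpha = N|c_\alpha|/\|\vec{c}\|_1$ shots of $U\rho U^\dagger$, measuring in the eigenbasis of $P_\alpha$. Each shot returns an eigenvalue $x^{(\alpha)}_j\in[-1,1]$; this holds for Pauli strings, and for the other bases in $\PC$ after normalizing to unit operator norm, with any constant absorbed. The estimator is
\begin{equation}
\widetilde f_U = \sum_\alpha \frac{c_\alpha}{N_\alpha}\sum_{j=1}^{N_\alpha} x^{(\alpha)}_j = \frac{\|\vec{c}\|_1}{N}\sum_\alpha \sum_{j=1}^{N_\alpha}\operatorname{sgn}(c_\alpha)\,x^{(\alpha)}_j ,
\end{equation}
which is unbiased because $\Ebb[x^{(\alpha)}_j]=\Tr[U\rho U^\dagger P_\alpha]$. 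The key observation is that this is a sum of $N$ independent random variables, each lying in an interval of length $2\|\vec{c}\|_1/N$. Hoeffding's inequality then gives
\begin{equation}
\Pr\left[|\widetilde f_U - f_U|\geq \epsilon\right]\leq 2\exp\left(-\frac{N\epsilon^2}{2\|\vec{c}\|_1^2}\right).
\end{equation}

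Requiring the failure probability of each instance to be at most $\delta/T$ and applying a union bound over the $T$ instances gives $N\in\OC(\|\vec{c}\|_1^2\epsilon^{-2}\log(T/\delta))$ shots per instance. Copies cannot be shared across different circuits, since each copy is evolved by a specific $U^{(i)}$ before measurement. Summing over the $T$ instances therefore yields Eq.~\eqref{eq:Nqs_quantum}.

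The only delicate step is rounding, since $N_\alpha$ must be an integer. I would take $N_\alpha=\lceil N|c_\alpha|/\|\vec{c}\|_1\rceil$. The Hoeffding bound then becomes $\sum_\alpha (2|c_\alpha|/N_\alpha)^2 \leq \sum_\alpha 4|c_\alpha|\,\|\vec{c}\|_1/N = 4\|\vec{c}\|_1^2/N$, which is unchanged. The total number of shots rises by at most $M$, an additive term that is dominated by the main term in the regime of interest or absorbed into the $\OC$ notation.
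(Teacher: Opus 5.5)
Your proposal is correct and follows the paper's proof essentially step for step: the weighted allocation $N_\alpha\propto|c_\alpha|$, Hoeffding's inequality with each single-shot term confined to an interval of width $2\norm{\vec{c}}_1/N$, failure probability $\delta/T$ per instance, and a union bound over the $T$ circuits. Your rounding remark goes beyond the paper, which simply ignores integer rounding, but two details need fixing. First, the Hoeffding sum should read $\sum_\alpha N_\alpha(2|c_\alpha|/N_\alpha)^2$. Second, the extra $M$ shots per instance are not absorbed into the stated bound when $M$ is large compared with $\norm{\vec{c}}_1^2\epsilon^{-2}\log(T/\delta)$; drawing the measured term at random with probability $|c_\alpha|/\norm{\vec{c}}_1$ on each shot avoids rounding altogether and gives exactly $N$ shots.
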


The important feature for the comparison below is that the sampling
cost in QS is incurred for every circuit instance $U^{(i)}$. In contrast,
several of the CS considered in the later sections admit preprocessing
or quantum data-acquisition steps that can be reused across different
parameter choices.

\subsection{Quantum time complexity}

In QS, the quantum time complexity is determined by the circuit depth
required to obtain a single sample. The first contribution comes from
implementing the evolutions $e^{-iw_\ell H_\ell}$ in
Eq.~\eqref{eq:generic_circuit}. The corresponding gate-level depth
depends on both the generators $\mathcal{G}$ and the quantum hardware.
For instance, an evolution may need to be decomposed into native gates,
and additional routing operations may be required to respect the device
connectivity. We denote the resulting hardware-dependent overhead per
layer by $\eta$. Since $U$ contains $L$ layers, the evolution contributes
a depth in $\mathcal{O}(\eta L)$.

The second contribution comes from the basis changes required before
measurement. For all circuit families considered in this work, these
basis changes can be implemented with constant depth. Clifford, shallow
local, Matchgate, $U(1)$-equivariant, QCNN, and ADAPT-VQE settings
ultimately require measurements of Pauli strings. Similarly, the one-
and two-local symmetrized Pauli observables considered for
$S_n$-equivariant circuits can be measured using products of local basis
rotations. Hence, the measurement contribution is in $\mathcal{O}(1)$.

Combining both contributions, the Quantum Time complexity per sample is
therefore
\begin{equation}
    N_{\rm Q.T.}^{(\rm QS)}
    \in \mathcal{O}(\eta L)\;.
\end{equation}
We stress that this is a circuit-depth cost for a single sample. The
total quantum wall-clock time additionally depends on $N_{\rm Q.S.}$ and
on initialization, readout, and reset times, which are incorporated in
Section~\ref{sec:wallclock_time}.

\subsection{Classical time complexity}
Unlike the Quantum Sample and Quantum Time complexities, the classical
cost associated with QS is not determined by the quantum circuit alone.
Rather, it depends on the classical routine wrapped around the quantum
evaluations. 

One contribution, however, is common to all settings: the measurement outcomes must be processed to construct the
required expectation value estimates. Since $N_{\QS}$ already denotes
the total number of circuit shots across all measured terms, a streaming
estimator requires $\order{N_{\QS}}$ elementary operations. A baseline estimate
for the classical work accompanying QS is therefore
\begin{align}
    N_{\CT}^{(\rm QS)}
    &\in
    \order{
        N_{\QS}+N_{\rm aux}
    }
    \nonumber\\
    &\subseteq
    \order{
        \frac{T\|\vec{c}\|_1^2}{\epsilon^2}
        \log\left(\frac{T}{\delta}\right)
        +
        N_{\rm aux}
    },
\end{align}
where $N_{\rm aux}$ denotes the application-dependent classical operations beyond the processing of measurement outcomes. 

For example, in variational settings such as those of quantum machine learning, a
classical optimizer must update the trainable parameters. For an ansatz with $L$ parameters and $T_{\rm opt}$ optimization steps, a standard first-order update contributes $\order{T_{\rm opt}L}$ to $N_{\rm aux}$. Here, $T_{\rm opt}$ is generally distinct from $T$, since each optimization step may require multiple circuit instances to evaluate the objective function and its gradient. The precise classical overhead also depends on the routines employed. Natural-gradient or second-order optimization, classical post-processing for error mitigation, and nontrivial measurement post-processing can increase this cost and should be accounted for separately. Conversely, strategies such as layerwise optimization and random coordinate descent can reduce the per-iteration cost by evaluating and updating only a subset of the trainable parameters, while stochastic gradient descent can reduce the number of measurement samples used for each gradient estimate. Any resulting reduction in the per-iteration cost may, however, be offset by an increase in the number of optimization steps required for convergence.

In the regimes considered below, the elementary classical operations
needed to accumulate measurement outcomes are typically much faster than the corresponding quantum
hardware executions. We therefore expect this contribution to be
subleading in the wall-clock comparison.

\section{Classical Simulations for estimating expectation values~\label{sec:classical_simulation}}

In this section, we provide the computational complexity of the corresponding CS frameworks for each of the classically simulable quantum dynamics presented in Section~\ref{sec:classically_simulable_evolutions}. 
While Theorem~\ref{thm:Nqs_quantum}, which characterized the QS cost, is completely general and agnostic to the specific circuit architecture, estimating the cost of CS requires an architecture-specific analysis, as different circuits call for different surrogate models and simulation techniques. In other words, \textit{there is no single classical simulation algorithm to rule them all}.

We will start by providing a overview of classical methods for simulating quantum algorithms in Section~\ref{sec:classical_methods}, intended primarily for readers who may be unfamiliar with them. Readers with prior knowledge of these methods may skip this section.

\subsection{Background\label{sec:classical_methods}}

There exists a broad range of methods for classically simulating quantum dynamics, including semi-classical approaches~\cite{mink2022hybrid,
vidal2006concurrence}, tensor-network methods~\cite{wood2011tensor,
gibbs2025learning,gibbs2024deep}, and Lie-algebraic
methods~\cite{goh2023lie}. In several of the examples considered below,
the CS reduces directly to standard matrix operations,
and we introduce those methods together with the corresponding circuit
architecture in Section~\ref{sec:classical_simulation}.

Here, we instead briefly review two simulation frameworks whose
implementation requires additional machinery, 
$\liea$-sim and Pauli propagation. Both operate naturally in the Heisenberg picture by
representing the backpropagated observable in a structured operator
basis, thereby avoiding an explicit representation of the full
$2^n$-dimensional quantum state. In particular, $\liea$-sim achieves this by exploiting invariant operator subspaces
induced by the dynamical Lie algebra of the circuit. On the other hand, Pauli propagation works in the Pauli basis, and truncates the propagated operator when needed.  
We provide only the
ingredients needed for the complexity analysis below, with additional
technical details deferred to the Appendices~\ref{adx:pauli_propagation} and \ref{adx:gsim}.

\subsubsection{\label{sec:gsim}$\liea$-sim}

Lie-algebraic simulation, referred to as $\liea$-sim, exploits the
operator-space structure induced by the dynamical Lie algebra (DLA) of
the circuit~\cite{goh2023lie,barligea2026enabling}. Given a set of
Hermitian circuit generators $\mathcal{G}$, we define
\begin{equation}
    \liea
    =
    \spn_{\mathbb{R}}
    \left\langle i\mathcal{G}\right\rangle_{\rm Lie}
    =
    \spn_{\mathbb{R}}
    \{iG_\gamma\}_{\gamma=1}^{\dim(\liea)},
    \label{eq:gsim_dla}
\end{equation}
where $\{iG_\gamma\}_\gamma$ forms a basis of the DLA. The associated
dynamical Lie group $\lieg=\exp(\liea)$ contains the unitaries generated
by the circuit.

To understand the origin of the reduced classical description used in
$\liea$-sim, we consider the action of $\lieg$ on the operator space
$\mathcal{B}$ through conjugation,
\begin{equation}
    A \longmapsto U^\dagger A U,
    \qquad
    U\in\lieg,\quad A\in\mathcal{B}.
\end{equation}
Under this action, the operator space can be decomposed into invariant
subspaces as
\begin{equation}
    \mathcal{B}
    \simeq
    \bigoplus_\lambda \mathcal{B}_\lambda,
    \label{eq:gsim_irrep_decomposition}
\end{equation}
where each $\mathcal{B}_\lambda$ satisfies
\begin{equation}
    U^\dagger A U \in \mathcal{B}_\lambda,
    \qquad
    \forall A\in\mathcal{B}_\lambda,\quad
    \forall U\in\lieg.
\end{equation}
We will refer to the $\mathcal{B}_\lambda$ as irreducible
representations (irreps), although the $\liea$-sim framework only
requires them to be invariant subspaces. We denote their dimensions as
$d_\lambda=\dim(\mathcal{B}_\lambda)$ and introduce a
Hilbert--Schmidt-orthonormal Hermitian basis
$\{B_\alpha^{(\lambda)}\}_{\alpha=1}^{d_\lambda}$ for each subspace.

The usefulness of $\liea$-sim arises when the observable of interest is
supported on one, or a constant number, of irreps whose dimensions grow
at most polynomially with the system size. In that case, the action of
the circuit can be represented inside these reduced operator spaces,
rather than in the full exponentially-large space $\mathcal{B}$.
Importantly, the relevant dimension controlling the simulation is
$d_\lambda$, which need not coincide with $\dim(\liea)$. 

The central idea of $\liea$-sim is to describe how the elements of the
DLA mix the basis operators within a given $\mathcal{B}_\lambda$.
For $iG_\gamma\in\liea$, we define the representation elements
$f_{\alpha\beta}^{(\lambda,\gamma)}$ through
\begin{equation}
    [iG_\gamma,iB_\alpha^{(\lambda)}]
    =
    \sum_{\beta=1}^{d_\lambda}
    f_{\alpha\beta}^{(\lambda,\gamma)}
    iB_\beta^{(\lambda)}.
    \label{eq:gsim_rep_elements}
\end{equation}
Equivalently, these coefficients can be obtained as
\begin{equation}
    f_{\alpha\beta}^{(\lambda,\gamma)}
    =
    \Tr\left[
        iB_\beta^{(\lambda)}
        [iB_\alpha^{(\lambda)},iG_\gamma]
    \right].
\end{equation}
They define the adjoint representation of the Lie algebra on the
$\lambda$-th irrep,
\begin{equation}
    \left(
        \Phi_{\lambda}^{\rm ad}(iG_\gamma)
    \right)_{\alpha\beta}
    \equiv
    f_{\alpha\beta}^{(\lambda,\gamma)},
    \label{eq:gsim_ad_rep}
\end{equation}
where
\begin{equation}
    \Phi_{\lambda}^{\rm ad}(iG_\gamma)
    \in
    \mathbb{R}^{d_\lambda\times d_\lambda}.
\end{equation}
Thus, for a fixed irrep, the DLA is represented by
$\dim(\liea)$ matrices, each of dimension
$d_\lambda\times d_\lambda$.

Through the exponential map, the Lie-algebra representation induces
the corresponding representation of the dynamical Lie group. For a
gate
\begin{equation}
    U_\ell=e^{-iw_\ell H_\ell},
    \qquad iH_\ell\in\liea,
\end{equation}
we define
\begin{equation}
    i\overline{H}^{(\lambda)}_\ell
    =
    \Phi_{\lambda}^{\rm ad}(iH_\ell),
\end{equation}
so that
\begin{equation}
    \Phi_{\lambda}^{\rm Ad}(U_\ell)
    =
    e^{-iw_\ell\overline{H}^{(\lambda)}_\ell}.
    \label{eq:gsim_gate_rep}
\end{equation}
The matrix
$\Phi_{\lambda}^{\rm Ad}(U_\ell)\in
\mathbb{R}^{d_\lambda\times d_\lambda}$
fully characterizes the action of the gate on
$\mathcal{B}_\lambda$. In particular,
\begin{equation}
    U^\dagger
    B_\alpha^{(\lambda)}
    U
    =
    \sum_{\beta=1}^{d_\lambda}
    \left(
        \Phi_{\lambda}^{\rm Ad}(U)
    \right)_{\alpha\beta}
    B_\beta^{(\lambda)}.
    \label{eq:gsim_adjoint_action}
\end{equation}

We can now use this reduced representation to evaluate the expectation
value $f_U(\rho,O)$ directly in the Heisenberg picture. Consider first
an observable supported on a single irrep,
\begin{equation}
    O
    =
    \sum_{\alpha=1}^{d_\lambda}
    c_\alpha B_\alpha^{(\lambda)}.
    \label{eq:gsim_observable}
\end{equation}
The backpropagated observable remains in the same invariant subspace,
\begin{equation}
    \widetilde{O}
    =
    U^\dagger O U
    =
    \sum_{\beta=1}^{d_\lambda}
    \widetilde{c}_\beta B_\beta^{(\lambda)},
    \label{eq:gsim_backprop_observable}
\end{equation}
with coefficients
\begin{equation}
    \widetilde{c}_\beta
    =
    \sum_{\alpha=1}^{d_\lambda}
    c_\alpha
    \left(
        \Phi_{\lambda}^{\rm Ad}(U)
    \right)_{\alpha\beta}.
    \label{eq:gsim_backprop_coefficients}
\end{equation}
Equivalently, if $\vec{c}$ and $\widetilde{\vec{c}}$ are taken as
column vectors,
\begin{equation}
    \widetilde{\vec{c}}
    =
    \left(
        \Phi_{\lambda}^{\rm Ad}(U)
    \right)^T
    \vec{c}.
\end{equation}

The remaining ingredient is the description of the input state on the
same irrep. We define the vector
\begin{equation}
    \left(
        \vec{e}^{\,({\rm in})}_\lambda
    \right)_\alpha
    \equiv
    \Tr\left[
        \rho B_\alpha^{(\lambda)}
    \right],
    \qquad
    \alpha=1,\ldots,d_\lambda.
    \label{eq:gsim_input_rep}
\end{equation}
Depending on the problem, these expectation values can be known from a
classical description of $\rho$, or estimated during an initial data
acquisition stage through direct measurements or a suitable classical
shadow.

Combining Eqs.~\eqref{eq:gsim_backprop_observable}
and~\eqref{eq:gsim_input_rep}, the desired expectation value is
\begin{align}
    f_U(\rho,O)
    &=
    \Tr[\rho U^\dagger O U]
    \nonumber\\
    &=
    \sum_{\alpha=1}^{d_\lambda}
    \widetilde{c}_\alpha
    \Tr\left[
        \rho B_\alpha^{(\lambda)}
    \right]
    \nonumber\\
    &=
    \widetilde{\vec{c}}^{\,T}
    \vec{e}^{\,({\rm in})}_\lambda.
    \label{eq:gsim_expectation}
\end{align}

More generally, an observable can have support on several irreps,
$O=\sum_\lambda O_\lambda$. Since the adjoint action does not mix
distinct invariant subspaces, each $O_\lambda$ can be backpropagated
independently using $\Phi_{\lambda}^{\rm Ad}(U)$, and the final
expectation value is obtained by summing the contributions from the
different irreps. Hence, $\liea$-sim remains efficient when the
observable has support on a constant number of polynomially-sized
irreps and the corresponding input vectors
$\vec{e}^{\,({\rm in})}_\lambda$ can be efficiently obtained.

The Classical Time complexity of $\liea$-sim naturally separates into
a preprocessing step and the repeated evaluation of circuit instances.
Let $K$ denote the number of distinct generators appearing in the
circuit, with $K\leq L$. For each generator, constructing and
diagonalizing its representation on $\mathcal{B}_\lambda$ requires, in
the generic dense implementation, a cost in
$\mathcal{O}(d_\lambda^3)$. Since these matrices depend only on the
generators and not on the circuit parameters, this cost is incurred only
once,
\begin{equation}
    N_{\rm pre}^{(\liea{\rm -sim})}
    \in
    \mathcal{O}(K d_\lambda^3).
\end{equation}

Once these decompositions have been stored, changing the parameters
$w_{i,\ell}$ only requires updating the corresponding diagonal
exponentials and applying the represented gates to the coefficient
vector. The dominant cost for a single $L$-layer circuit instance is
therefore
\begin{equation}
    N_{\rm eval}
    \in
    \mathcal{O}(L d_\lambda^2).
\end{equation}
Hence, evaluating $T$ circuit instances gives
\begin{equation}
    N_{\rm C.T.}
    \in
    \mathcal{O}\left(
        Kd_\lambda^3
        +
        TLd_\lambda^2
    \right)\,,
    \label{eq:gsim_classical_complexity}
\end{equation}
where the first term is a one-time preprocessing cost and can be omitted from subsequent runs once the decompositions have been stored.

\subsubsection{Pauli Propagation\label{sec:pauli_propagation}}

Pauli propagation~\cite{rall2019simulation} provides a practical
framework for classically estimating expectation values for a broad
range of quantum circuits, with rigorous guarantees available in
several noiseless~\cite{rudolph2023classical,angrisani2024classically,
rudolph2025pauli,angrisani2025simulating,rall2019simulation} and
noisy~\cite{fontana2023classical} settings. Its central idea is to
rewrite the expectation value as
\begin{equation}
    f_U(\rho,O)
    =
    \Tr[\rho U^\dagger O U]
\end{equation}
and propagate the observable backwards through the circuit in the
Heisenberg picture, rather than explicitly evolving the quantum state
in the Schr\"odinger picture.

Consider a circuit $U=U_L\cdots U_1$ and initialize
$O^{(L)}=O$. We recursively define
\begin{equation}
    O^{(\ell-1)}
    =
    U_\ell^\dagger O^{(\ell)}U_\ell
    =
    \sum_\alpha
    \tilde c_\alpha^{(\ell-1)}P_\alpha,
    \qquad
    \ell=L,\ldots,1,
    \label{eq:pauli_backprop}
\end{equation}
where $P_\alpha\in\mathcal{P}$. Clifford gates map each Pauli string
onto a single Pauli string, whereas non-Clifford gates such as Pauli
rotations can split one Pauli term into a weighted sum of several
terms. Repeated application of Eq.~\eqref{eq:pauli_backprop} therefore
generates a branching set of Pauli propagation paths.

After all layers have been propagated,
\begin{equation}
    O^{(0)}
    =
    U^\dagger O U
    =
    \sum_\alpha \tilde c_\alpha P_\alpha,
    \label{eq:UOU_pauli_prop}
\end{equation}
and hence
\begin{equation}
    f_U(\rho,O)
    =
    \sum_\alpha
    \tilde c_\alpha \Tr[\rho P_\alpha].
\end{equation}
For a generic circuit, the number of Pauli terms generated by this
procedure can grow exponentially, rendering exact propagation
intractable. Approximate Pauli propagation controls this growth by
discarding terms during the backwards evolution. Two common strategies
are coefficient truncation, in which terms with
$|\tilde c_\alpha|$ below a prescribed cutoff are removed, and weight
truncation, in which Pauli strings with weight larger than a threshold
$\tau$ are discarded.

The accuracy of either truncation depends on the circuit class under
consideration. In particular, low-weight Pauli propagation is efficient
in settings where the contribution of high-weight propagation paths is
sufficiently suppressed. In the applications below, we state the
conditions under which this truncation leads to a controlled
approximation error.

Several extensions of this framework have been developed. These include
symmetry-based methods that merge Pauli strings related by the action of
a symmetry group~\cite{teng2025leveraging}, as well as fermionic
variants in which Majorana strings replace Pauli strings as the basic
propagated objects~\cite{miller2025simulation,rudolph2026thermal}.
These variants will be used below when the structure of the circuit
makes them more efficient than standard Pauli propagation.

\renewcommand{\arraystretch}{2.2}
\begin{table*}[t]
    \centering

    \begin{adjustbox}{width=\textwidth}
\begin{tabular}{|c|c||c|c|c||c|}
    \hline
         Problem & Simulation Methods & Quantum Sample & Quantum Time & Classical Time & Theorem  \\
         \hline
         \hline
          General & Quantum Simulation &
          $\order{\frac{T\|\vec{c}\|_1^2}{\epsilon^2}\log\left(\frac{T}{\delta}\right)}$ &
          $\order{\eta L}$ &
          $\order{N_{\QS}+N_{\rm aux}}$ &
          Theorem~\ref{thm:Nqs_quantum}\\
         \hline
         Clifford circuit & $\rm{Sp}(2n, \mathbb{F}_2)$ matrix &
         $\order{\frac{T\norm{\vec{c}}_1^2}{\ep^2}\log\left(\frac{T}{\dl}\right)}$ &
         $\order{1}$ &
         $\order{N_{\QS}+TLMn^2}$ &
         Theorem~\ref{thm:classical_simulation_clifford}\\
         Shallow HEA & Reduced $U$ &
         $\order{\frac{16^k n^4}{\ep^2}\log\left(\frac{MT}{\dl}\right)\norm{\vec{c}}_1^2}$ &
         $\order{1}$ &
         $\order{M N_{\QS}16^k n^4+MTn^{2\omega}}$ &
         Theorem~\ref{thm:shallow_HEA}\\
         Matchgate & $\liea$-sim~\cite{goh2023lie} &
         $\order{\frac{n^{\kappa/2}}{\epsilon^2}\log\left(\frac{T}{\delta}\right)\norm{\vec{c}}_1^2}$ &
         $\order{n}$ &
         $\order{N_{\QS}n^{\kappa/2}+TLn^{3}+Tn^{\kappa + 1}}$ &
         Theorem~\ref{thm:classical_sim_matchgate}\\
         $S_n$-equivariant & Block-diagonalized $U$~\cite{chang2026practical} &
         $\order{\frac{n^2}{\epsilon^2}\log\left(\frac{T}{\delta}\right)\norm{\vec{c}}_1^2}$ &
         $\order{n\,\poly\left(\log\epsilon_{\rm QST}^{-1}\right)}$ &
         $\order{N_{\QS}n^2+Ln^3+TLn^{\omega+1}}$ &
         Theorem~\ref{thm:classical_sim_Sn}\\
         $\U(1)$-equivariant & Givens rotation~\cite{kerenidis2021classical} &
         $\order{\frac{n^{3h}}{\ep^2}\log\left(\frac{n}{\delta}\right)\norm{\vec{c}}_1^2}$ &
         $\order{1}$ &
         $\order{N_{\QS}n^{h+1}+TLn^{h-1}}$ &
         Theorem~\ref{thm:classical_sim_U1}\\
         QCNN & Pauli Propagation$^*$~\cite{rall2019simulation} &
         $\order{\frac{\exp(\order{\tau})}{\ep^2}\log\left(\frac{n}{\dl}\right)\norm{\vec{c}}_1^2}$ &
         $\order{1}$ &
         $\order{N_{\QS}n^{\tau}+Tn^{2\tau}}$ &
         Theorem~\ref{thm:classical_sim_qcnn}\\
         ADAPT-VQE & Majorana Propagation$^*$~\cite{miller2025simulation} &
         $0$ &
         $0$ &
         $\order{(L \abs{\AC} + T_{\rm VQE})n^{\tau}}$ &
         Theorem~\ref{thm:classical_sim_adapt_vqe}\\
         \hline
    \end{tabular}
\end{adjustbox}
    \caption{\textbf{Simulation algorithm and the simulation cost for the problems that are provably classically simulable.} We summarize the problem instances known to be classically simulable, together with the CS methods and resource costs derived below. Approximate methods (that enjoy average case error guarantees for certain architectures) are marked with $^*$. We consider $T$ circuit instances of the $n$-qubit circuit family in Eq.~\eqref{eq:generic_circuit}, each containing $L$ layers and sharing the same generators but potentially different parameters. All exact methods estimate $f_U(\rho,O)$ to additive error $\ep$ with failure probability at most $\delta$. The exponent $\omega$ denotes the matrix multiplication exponent~\cite{williams2024new}, while $k$ and $\kappa$ denote the Pauli and fermionic locality, $h$ the Hamming weight of the initial-state sector, and $\tau$ the truncation threshold used in propagation methods. Finally, $\epsilon_{\rm QST}$ denotes the implementation error of the quantum Schur transform. The Classical Time column includes the classical processing of quantum measurement data whenever such data acquisition is required. Terms independent of $T$ correspond to one-time preprocessing costs that can be reused across circuit instances. For QS, $N_{\rm aux}$ denotes the application-dependent auxiliary classical operations. For ADAPT-VQE, $T_{\rm VQE}$ denotes the total number of energy evaluations across all VQE re-optimization stages, and the table reports the Hartree--Fock setting in which the initial state is known classically and therefore $N_{\QS}=N_{\QT}=0$.}
    \label{tab:quantum_resource}
\end{table*}

\subsection{Classical Simulations for classically tractable evolutions\label{sec:cs_for_evolutions}}

With these tools in place, we now turn from general CS frameworks to a circuit-specific resource analysis. For each classically simulable evolution introduced in Section~\ref{sec:classically_simulable_evolutions}, we identify an appropriate CS method and derive its computational complexities.

A recurring feature of these methods is that some computational costs can be reused across circuit instances, whereas others must be incurred whenever the circuit instances change. To make this distinction explicit, we separate, whenever relevant, between
a one-time preprocessing cost $N_{\rm pre}$ and the cost
$N_{\rm eval}$ of evaluating a new circuit instance. Thus, over $T$
circuit instances, the total Classical Time complexity takes the form
\begin{equation}
    N_{\rm C.T.}
    =
    N_{\rm pre}
    +
    T N_{\rm eval}.
    \label{eq:preprocessing_evaluation_cost}
\end{equation}
This distinction is important when the classical representation depends
only on the circuit generators and can therefore be reused as the circuit parameters change. In particular, when quantum data are acquired through a classical-shadow protocol,
$N_{\rm pre}$ also includes any classical post-processing required to
construct a reusable description of the initial state. If instead the
same shadow data must be queried separately for the backpropagated
observable associated with each circuit instance, the corresponding
post-processing cost is included in $N_{\rm eval}$. Thus, a cost can be
incurred only once even when its required precision, and hence its
magnitude, depends logarithmically on $T$.

In Table~\ref{tab:quantum_resource}, we summarize the resource and time complexities derived in this work for various classically simulable quantum algorithms. Detailed descriptions of the classical surrogates used for each problem task, along with the proofs of the theorems, are provided in the Appendix.

\subsubsection{Clifford circuit}

We first consider the Clifford evolutions introduced in
Section~\ref{sec:classically_simulable_evolutions}. For the Classical
Simulation, we assume that each Clifford layer is specified by its
symplectic representation in $\mathrm{Sp}(2n,\mathbb{F}_2)$ together
with the corresponding phase data. A Pauli string can then be encoded
as a $2n$-bit vector according to
\begin{equation}
    I \to 00, \qquad
    X \to 01, \qquad
    Z \to 10, \qquad
    Y \to 11 \;,
\end{equation}
and its Heisenberg evolution under each Clifford layer reduces to a
symplectic matrix-vector multiplication.

Since Clifford gates map Pauli strings to Pauli strings without
generating linear combinations, the number of terms appearing in the
observable remains fixed throughout the backpropagation. This leads to
the following theorem.

\begin{theorem}[Complexity of Classical Simulation for Clifford circuits]
\label{thm:classical_simulation_clifford}
Consider $T$ $n$-qubit Clifford circuit instances
$\{U^{(i)}\}_{i=1}^{T}$, each composed of $L$ Clifford gates, and an
observable
\begin{equation}
    O=\sum_{\alpha=1}^{M}c_\alpha P_\alpha,
\end{equation}
where $P_\alpha\in\PC$ and
$\vec{c}=(c_\alpha)_\alpha$ is a real-valued coefficient vector.
Assume that each Clifford layer is specified by its symplectic
representation together with the corresponding phase data.

The expectation values can be estimated from copies of the initial
state with Quantum Sample complexity
\begin{equation}
    N_{\QS}
    \in
    \order{
        \frac{T\norm{\vec{c}}_1^2}{\ep^2}
        \log\left(\frac{T}{\dl}\right)
    }.
    \label{eq:NQS_clifford}
\end{equation}

The Classical Time complexity includes the backpropagation of the
observable and the processing of the measurement outcomes,
\begin{align}
    N_{\CT}
    &\in
    \order{
        N_{\QS}
        +
        TLMn^2
    }
    \nonumber\\
    &\subseteq
    \order{
        \frac{T\norm{\vec{c}}_1^2}{\ep^2}
        \log\left(\frac{T}{\dl}\right)
        +
        TLMn^2
    }.
    \label{eq:NCT_clifford}
\end{align}

The corresponding Quantum Time complexity is
\begin{equation}
    N_{\QT}\in\order{1}.
\end{equation}
\end{theorem}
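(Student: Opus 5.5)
The plan is to build the Classical Simulation in three stages: Heisenberg-picture backpropagation of each Pauli term through the symplectic representation, quantum estimation of the resulting Pauli expectation values on copies of $\rho$, and classical recombination. Each stage is then costed separately.

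\textbf{Classical time of backpropagation.} For a fixed instance $U^{(i)}=U^{(i)}_L\cdots U^{(i)}_1$, each $P_\alpha$ is encoded as a $2n$-bit vector together with a phase bit. We apply the symplectic matrices of $U_L^{(i)},\dots,U_1^{(i)}$ in turn, updating the sign from the stored phase data. Each step is a $2n\times 2n$ matrix--vector product over $\mathbb{F}_2$ plus an $\order{n}$ phase update, so it costs $\order{n^2}$. Over $L$ layers, $M$ terms and $T$ instances this gives $\order{TLMn^2}$. By the Clifford property, $U^{(i)\dagger}P_\alpha U^{(i)}=s_{i,\alpha}P_{\beta_{i,\alpha}}$ with $s_{i,\alpha}=\pm1$, so the propagated observable $\widetilde O^{(i)}=\sum_\alpha c_\alpha s_{i,\alpha}P_{\beta_{i,\alpha}}$ still has at most $M$ terms. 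Its coefficient vector is a signed permutation of $\vec c$, so $\|\vec c\|_1$ is unchanged; if terms merge, the norm can only decrease.

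\textbf{Quantum sample complexity.} We have $f_{U^{(i)}}(\rho,O)=\Tr[\rho\widetilde O^{(i)}]$, which is exactly the Quantum Simulation task with trivial evolution and observable $\widetilde O^{(i)}$. We therefore invoke Theorem~\ref{thm:Nqs_quantum}, with its weighted allocation $N_\alpha\propto|c_\alpha|$ and Hoeffding plus a union bound over the $T$ instances. Because $\|\tilde{\vec c}^{(i)}\|_1\le\|\vec c\|_1$, this directly yields Eq.~\eqref{eq:NQS_clifford}. The only delicate point is that the sign $s_{i,\alpha}$ multiplies the estimator of $\Tr[\rho P_{\beta_{i,\alpha}}]$. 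This leaves the range of each single-shot variable inside $[-\|\vec c\|_1,\|\vec c\|_1]$, so the Hoeffding constant is untouched.

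\textbf{Quantum time and post-processing.} Each shot prepares $\rho$ and measures a single Pauli string $P_{\beta_{i,\alpha}}$. This requires only single-qubit basis rotations followed by computational-basis readout, so the depth is $\order{1}$ independent of $L$. Streaming accumulation of the signed, weighted outcomes costs $\order{N_{\QS}}$. Adding the backpropagation cost gives Eq.~\eqref{eq:NCT_clifford}.

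\textbf{Expected obstacle.} I expect the hardest part to be rigor rather than calculation, in two places. First, the symplectic representation must track the phase correctly, since signs affect the estimator and the $\order{n^2}$-per-layer claim. The plan is to assume the phase data are given, as in the statement, so the update is an $\order{n}$ inner-product computation. Second, reusing Theorem~\ref{thm:Nqs_quantum} requires that the union bound still covers all $T$ instances even though the measured strings differ between instances. This holds because that theorem already allows arbitrary instance-dependent measurements with a fixed $\ell_1$ weight.
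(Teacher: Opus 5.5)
Your proposal is correct and follows the paper's proof essentially step for step. Both use $\order{n^2}$ symplectic backpropagation per layer, giving $\order{TLMn^2}$ in total, and both rely on the coefficient $\ell_1$-norm being preserved so that Theorem~\ref{thm:Nqs_quantum} applies directly to the backpropagated Pauli strings measured on $\rho$. The remaining ingredients also match: $\order{N_{\QS}}$ streaming post-processing and constant-depth single-qubit basis changes.

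Two harmless nitpicks. First, no merging of terms can occur, since conjugation by $U^{(i)}$ is injective on Pauli strings, so $\norm{\widetilde{\vec{c}}}_1=\norm{\vec{c}}_1$ holds exactly, as in the paper. Second, depending on how the phase data are stored, the sign update may cost up to $\order{n^2}$ rather than $\order{n}$, which is still absorbed into your per-layer bound.
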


For each circuit instance, the $M$ Pauli strings must be backpropagated
through its $L$ Clifford layers. A generic
$2n\times2n$ symplectic matrix-vector multiplication requires
$\order{n^2}$ operations, giving $\order{TLMn^2}$ operations over the
$T$ circuit instances. The resulting Pauli expectation values are then
measured directly on the initial state, so the Quantum Sample complexity
has the same scaling as in Theorem~\ref{thm:Nqs_quantum}. Processing
these measurement outcomes requires $\order{N_{\QS}}$ additional
classical operations. Since no quantum evolution is required before
measurement, the Quantum Time complexity is constant.

\subsubsection{Shallow hardware efficient ansatz}

We next consider the shallow hardware efficient ansatz (HEA) evolutions introduced in
Section~\ref{sec:classically_simulable_evolutions}. For each Pauli term
$P_\alpha$ in the observable, the CS is performed by
restricting the circuit to its backward light cone and explicitly
backpropagating $P_\alpha$ using reduced unitary matrices. The required
information about the initial state is obtained once using local Pauli
classical shadows and can be reused across all $T$ circuit instances.

While one could create an HEA on different topologies, we focus for simplicity on a one-dimensional alternating-layer architecture~\cite{cerezo2020cost}. Thus, a $k$-local observable has a backward light cone supported on at most
\begin{equation}
    \widetilde{k}\leq 2L+2k
\end{equation}
qubits. We take $L\leq\log_2(n)$, such that
$\widetilde{k}\leq2\log_2(n)+2k$. The shadow data can be processed once
into empirical reduced-state descriptions on the $M$ fixed backward
light cones and reused as the circuit parameters change. This results
in the following theorem.

\begin{theorem}[Computational complexity of Classical Simulation for shallow HEAs]
\label{thm:shallow_HEA}
Consider $T$ circuit instances $\{U^{(i)}\}_{i=1}^{T}$ of an $n$-qubit
shallow HEA with the same one-dimensional alternating-layer
architecture and depth $L\leq\log_2(n)$, but potentially different
circuit parameters. Let
\begin{equation}
    O=\sum_{\alpha=1}^{M}c_\alpha P_\alpha
\end{equation}
be an observable composed of $M$ $k$-geometrically local Pauli strings, with
$k\in\order{1}$ and
$\vec{c}=(c_\alpha)_\alpha$ a real-valued coefficient vector.

Using local Pauli classical shadows, the Quantum Sample complexity
required to estimate all $T$ expectation values up to additive error
$\epsilon$, with success probability at least $1-\delta$, satisfies
\begin{equation}
    N_{\QS}
    \in
    \order{
        \frac{16^k n^4}{\epsilon^2}
        \log\left(\frac{MT}{\delta}\right)
        \norm{\vec{c}}_1^2
    }.
    \label{eq:classical_shadow_shallow_hea}
\end{equation}

The Classical Time complexity includes the one-time processing of the
shadow data and the repeated evaluation of the parameter-dependent
reduced circuits. A conservative bound is
\begin{align}
    N_{\CT}
    &\in
    \order{
        M N_{\QS}16^k n^4
        +
        MTn^{2\omega}
    }
    \nonumber\\
    &\subseteq
    \order{
        \frac{M\,256^k n^8}{\epsilon^2}
        \log\left(\frac{MT}{\delta}\right)
        \norm{\vec{c}}_1^2
        +
        MTn^{2\omega}
    }.
    \label{eq:NCT_shallow_HEA}
\end{align}
If the reduced input-state descriptions are known classically, the
first term is absent.

The corresponding Quantum Time complexity is
\begin{equation}
    N_{\QT}\in\order{1}.
\end{equation}
\end{theorem}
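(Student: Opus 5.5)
The plan is to reduce each expectation value to a sum of local expectation values over the backward light cones, and then bound separately the shadow sampling cost and the classical cost of evaluating the reduced circuits.

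\textbf{Light-cone reduction.} For each Pauli term $P_\alpha$, let $A_\alpha$ be the set of at most $\widetilde{k}\leq 2L+2k\leq 2\log_2(n)+2k$ qubits in its backward light cone in the one-dimensional alternating-layer architecture. Gates outside the light cone cancel in $U^{(i)\dagger}P_\alpha U^{(i)}$. Hence
\begin{equation}
f_{U^{(i)}}(\rho,O)=\sum_\alpha c_\alpha \Tr\!\left[\rho_{A_\alpha}\,V_\alpha^{(i)\dagger}P_\alpha V_\alpha^{(i)}\right],
\end{equation}
where $V_\alpha^{(i)}$ is the reduced unitary on $A_\alpha$ and $\rho_{A_\alpha}$ is the reduced input state. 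The regions $A_\alpha$ depend only on the architecture, not on the parameters. They are therefore fixed across all $T$ instances, and it suffices to learn the $M$ reduced states once.

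\textbf{Sample complexity.} Expand the backpropagated operator in the Pauli basis on $A_\alpha$:
\begin{equation}
V_\alpha^{(i)\dagger}P_\alpha V_\alpha^{(i)}=\sum_{Q} b_{Q}^{(\alpha,i)} Q .
\end{equation}
Equivalently, estimate $\Tr[\rho_{A_\alpha}\widetilde O^{(i)}_\alpha]$ directly by applying the local-Pauli shadow estimator to $\widetilde O^{(i)}_\alpha=V_\alpha^{(i)\dagger}P_\alpha V_\alpha^{(i)}$. The standard shadow-norm bound for random Pauli measurements gives $\|\widetilde O\|_{\rm shadow}^2\leq 4^{\widetilde{k}}\|\widetilde O\|_\infty^2$, with $\|\widetilde O\|_\infty=1$. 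Substituting $\widetilde{k}\leq 2\log_2 n+2k$ yields $4^{\widetilde k}\leq 16^k n^4$, so each single-snapshot estimator has variance $\order{16^k n^4}$.

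To control the total error, allocate error $\epsilon|c_\alpha|/\|\vec c\|_1$ to term $\alpha$, as in the weighted strategy behind Theorem~\ref{thm:Nqs_quantum}. Then take a union bound over all $MT$ pairs $(\alpha,i)$ using median-of-means. Because the same snapshots are reused for every instance, the number of snapshots is $\order{16^k n^4\|\vec c\|_1^2\epsilon^{-2}\log(MT/\delta)}$, with no factor of $T$ in front. Each snapshot measures all qubits, so a single set of snapshots serves all $M$ regions simultaneously. This gives Eq.~\eqref{eq:classical_shadow_shallow_hea}.

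\textbf{Classical time and quantum time.} Preprocessing builds, for each $\alpha$, an empirical reduced state $\hat\rho_{A_\alpha}$ by averaging the snapshot tensor products $\bigotimes_j(3U_j^\dagger|b_j\rangle\langle b_j|U_j-\id)$ on $A_\alpha$. Each snapshot contributes a $2^{\widetilde k}\times 2^{\widetilde k}$ matrix, i.e.\ $4^{\widetilde k}=\order{16^kn^4}$ entries. The total one-time cost is therefore $\order{MN_{\QS}16^kn^4}$.

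Per instance, assemble $V_\alpha^{(i)}$ as a product of $\order{L\widetilde k}$ local gates, each embedded in a $2^{\widetilde k}$-dimensional space, using $n^{2}$-size matrices. One can instead compute $\widetilde O$ by matrix products of cost $\order{(2^{\widetilde k})^\omega}=\order{n^{2\omega}}$ per layer. Grouping each layer into a single operator and absorbing $L$ and the constant $4^{k\omega}$ factors gives $\order{n^{2\omega}}$ per $(\alpha,i)$ pair. The final trace $\Tr[\hat\rho\widetilde O]$ costs $\order{n^4}$. Summed over all pairs this gives $\order{MTn^{2\omega}}$, and Eq.~\eqref{eq:NCT_shallow_HEA} follows. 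If $\rho_{A_\alpha}$ is known classically, the preprocessing term disappears. Random single-qubit basis rotations followed by measurement have constant depth, so $N_{\QT}\in\order{1}$.

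\textbf{Main obstacle.} The delicate step is justifying that the empirical reduced states $\hat\rho_{A_\alpha}$ yield uniform accuracy over all $T$ parameter-dependent observables without a union bound over a continuum. I would avoid a continuum entirely by union-bounding only over the finitely many requested $(\alpha,i)$ pairs; this is where the $\log(MT/\delta)$ factor arises. A second subtlety is the $L$ factor in the per-instance cost. I would handle it by noting that each layer, restricted to the cone, is a product of commuting two-qubit gates, so it can be applied in $\order{n^{2\omega}}$ time, and the logarithmic number of layers is absorbed into the conservative bound.
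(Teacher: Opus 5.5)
Your sample-complexity argument, the one-time shadow post-processing bound, and the quantum-time bound follow the paper's proof essentially verbatim. That is: accuracy $\epsilon/\norm{\vec c}_1$ on each unit-norm backpropagated Pauli, shadow norm $4^{\widetilde k}\le 16^k n^4$, a finite union bound over the $MT$ pairs, and $\mathcal{O}(M N_{\QS} 4^{\widetilde k})$ to accumulate reduced shadows on the $M$ fixed light cones. The gap is in the per-instance classical cost. You charge every layer the full light-cone cost $\mathcal{O}((2^{\widetilde k})^\omega)=\mathcal{O}(4^{k\omega}n^{2\omega})$ and then ``absorb $L$''. But $L$ may be as large as $\log_2 n$, which is not a constant. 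As written, the argument only gives $\mathcal{O}(MTLn^{2\omega})=\mathcal{O}(MTn^{2\omega}\log n)$, not the stated $\mathcal{O}(MTn^{2\omega})$. Your remark that each layer is a product of commuting two-qubit gates does not repair this: it bounds the cost of one layer, not the number of layers.

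The missing idea is that the backpropagated operator's support grows gradually. Propagating outward from $P_\alpha$, after $\ell$ layers the operator is supported on at most $2\ell+2k$ qubits. The $\ell$-th conjugation therefore only involves matrices of dimension $d_\ell\le 2^{2\ell+2k}$ and costs $\mathcal{O}(d_\ell^\omega)$. Summing over layers gives
\[
\sum_{\ell=1}^{L} 2^{(2\ell+2k)\omega} = 4^{k\omega}\sum_{\ell=1}^{L}4^{\omega\ell}\in \mathcal{O}\bigl(4^{k\omega}4^{\omega L}\bigr)\subseteq \mathcal{O}(n^{2\omega}).
\]
This is a geometric series dominated by its last term, and it is exactly how the paper removes the factor of $L$. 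With this replacement your argument yields the claimed $\mathcal{O}(MTn^{2\omega})$, and the final contraction cost $\mathcal{O}(MT4^{\widetilde k})$ is absorbed as you say.
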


The first term in Eq.~\eqref{eq:NCT_shallow_HEA} is a one-time
post-processing cost. In contrast, changing the circuit parameters
changes the reduced unitaries within the backward light cones, and the
corresponding backpropagation must be repeated for every circuit
instance. The $n^4$ dependence in the Quantum Sample complexity follows
from the worst-case shadow-norm bound for observables supported on
$\widetilde{k}\leq2\log_2(n)+2k$ qubits. Thus, the result is a
sufficient upper bound; in practice fewer measurements can be
sufficient, as observed numerically in Ref.~\cite{basheer2023alternating}.

\subsubsection{Matchgate circuit\label{sec:matchgate}}

We next consider the Matchgate evolutions introduced in
Section~\ref{sec:classically_simulable_evolutions}. For their Classical
Simulation, we employ $\liea$-sim over the invariant Majorana subspaces
$\BC_\kappa$ defined above. Given a Hermitian observable
\begin{equation}
    O
    =
    \sum_{\vec{\mu}\in\CC_{2n,\kappa}}
    c_{\vec{\mu}}G_{\vec{\mu}}^\kappa
    \in\BC_\kappa
\end{equation}
with fixed even fermionic locality $\kappa\in\order{1}$, the relevant
operator space has dimension
\begin{equation}
    d_\kappa
    =
    \dim(\BC_\kappa)
    =
    \binom{2n}{\kappa}
    \in
    \order{n^\kappa}.
\end{equation}
Thus, $\liea$-sim can be performed by representing the adjoint action
of the Matchgate generators directly on $\BC_\kappa$.

To obtain the required information about the input state, we employ
Matchgate classical shadows~\cite{wan2022matchgate,zhao2021fermionic,
heyraud2024unified}. Rather than estimating each Majorana component
independently, we use the collective Matchgate shadow bound of
Ref.~\cite{west2026fermionic}. For a Hermitian observable
$O\in\BC_\kappa$,
\begin{equation}
    \norm{O}_{\rm sh}^2
    \leq
    \frac{3\binom{2n}{\kappa}}{2\binom{n}{\kappa/2}}
    \norm{O}_\infty^2.
    \label{eq:matchgate_collective_shadow}
\end{equation}
For fixed $\kappa$, the prefactor
$\binom{2n}{\kappa}/\binom{n}{\kappa/2}$ scales as
$\Theta(n^{\kappa/2})$.

For each circuit instance $U^{(i)}$, $\liea$-sim backpropagates the
observable within the same invariant subspace,
\begin{equation}
    \widetilde{O}_i
    =
    (U^{(i)})^\dagger O U^{(i)}
    \in\BC_\kappa.
\end{equation}
This backpropagation can be implemented directly as a sequence of contractions on the rank-$\kappa$ coefficient tensor of the observable, involving intermediate tensors with $\order{n^\kappa}$ entries, without explicitly constructing the full adjoint representation.
Furthermore, since unitary conjugation preserves the operator norm,
$\norm{\widetilde{O}_i}_\infty=\norm{O}_\infty$, the same
Matchgate shadow bound applies to all $T$ circuit instances. The shadow samples are acquired and processed once into reusable statistics for the degree-$\kappa$ Majorana sector, which are subsequently queried for all circuit instances. Combining these contributions yields the following theorem.

\begin{theorem}[Computational complexity for Majorana $\liea$-sim]
\label{thm:classical_sim_matchgate}
Consider $T$ Matchgate circuit instances
$\{U^{(i)}\}_{i=1}^{T}$ of the form given in
Eq.~\eqref{eq:generic_circuit}, sharing the same set of $L$ generators
but potentially different circuit parameters. Let $O\in\BC_\kappa$ be a
Hermitian observable with fixed even fermionic locality
$\kappa\in\order{1}$.

Using Matchgate classical shadows for the initial data acquisition, the
Quantum Sample complexity required to estimate all $T$ expectation
values up to additive error $\epsilon$, with success probability at
least $1-\delta$, satisfies
\begin{align}
    N_{\QS}
    &\in
    \order{
        \frac{\binom{2n}{\kappa}}{\binom{n}{\kappa/2}}
        \frac{\norm{O}_\infty^2}{\epsilon^2}
        \log\left(\frac{T}{\delta}\right)
    }
    \nonumber\\
    &\subseteq
    \order{
        \frac{n^{\kappa/2}}{\epsilon^2}
        \log\left(\frac{T}{\delta}\right)
        \norm{\vec{c}}_1^2
    }.
    \label{eq:NQS_matchgate}
\end{align}

The Classical Time complexity includes the one-time shadow
post-processing and eigendecomposition of the represented generators,
together with the repeated $\liea$-sim evaluations,
\begin{align}
    N_{\CT}
    &\in
    \order{
        N_{\QS}n^{\kappa/2}
        +
        TLn^{3}
        +
        Tn^{\kappa + 1}
    }
    \nonumber\\
    &\subseteq
    \order{
        \frac{n^\kappa}{\epsilon^2}
        \log\left(\frac{T}{\delta}\right)
        \norm{\vec{c}}_1^2
        +
        TLn^{3}
        +
        Tn^{\kappa + 1}
    }.
    \label{eq:NCT_matchgate}
\end{align}
If the required degree-$\kappa$ input-state representation is known
classically, the first term is absent.

The corresponding Quantum Time complexity is
\begin{equation}
    N_{\QT}\in\order{n}.
\end{equation}
\end{theorem}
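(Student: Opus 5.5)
The proof splits into the three resource metrics, treated in the order quantum sample, quantum time, classical time. Since $\norm{\widetilde{O}_i}_\infty=\norm{O}_\infty$ and each $\widetilde{O}_i\in\BC_\kappa$, the collective Matchgate shadow bound in Eq.~\eqref{eq:matchgate_collective_shadow} applies uniformly to all $T$ backpropagated observables. This lets us use a single shadow dataset for every circuit instance.

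\textbf{Quantum sample complexity.} We apply the standard median-of-means guarantee for classical shadows. A number of snapshots $N\in\order{\norm{O}_{\rm sh}^2\epsilon^{-2}\log(T/\delta)}$ estimates all $T$ fixed observables $\widetilde{O}_i$ to additive error $\epsilon$. By a union bound the success probability is at least $1-\delta$. One subtlety must be handled: the $\widetilde{O}_i$ depend only on the circuit parameters and not on the shadow data, so they are fixed before the samples are drawn and no adaptivity issue arises. Substituting Eq.~\eqref{eq:matchgate_collective_shadow} gives the first line of Eq.~\eqref{eq:NQS_matchgate}. For the second line we use $\norm{O}_\infty\le\norm{\vec{c}}_1$, which holds by the triangle inequality because each $G^\kappa_{\vec{\mu}}$ is a Hermitian Majorana string with unit operator norm. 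We also use $\binom{2n}{\kappa}/\binom{n}{\kappa/2}\in\Theta(n^{\kappa/2})$ for fixed $\kappa$.

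\textbf{Quantum time complexity.} Each Matchgate shadow snapshot applies a random element of $\mathrm{SO}(2n)$, that is, a fermionic Gaussian unitary, followed by a computational-basis measurement. We invoke the standard Givens-rotation decomposition of an arbitrary $Q\in\mathrm{SO}(2n)$ into $\order{n^2}$ nearest-neighbour matchgates arranged in depth $\order{n}$. This gives $N_{\QT}\in\order{n}$.

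\textbf{Classical time complexity.} We split this cost into $N_{\rm pre}+TN_{\rm eval}$ as in Eq.~\eqref{eq:preprocessing_evaluation_cost}.
\begin{itemize}
\item \emph{Preprocessing.} For each snapshot we compute its contribution to all $\order{n^{\kappa}}$ degree-$\kappa$ Majorana estimates. The Matchgate shadow inverse channel acts diagonally on $\BC_\kappa$, with eigenvalue $\binom{n}{\kappa/2}/\binom{2n}{\kappa}$. For a basis-state outcome, the snapshot expectation $\Tr[Q\,\ketbraq{b}Q^\dagger\gamma_{\vec{\mu}}]$ is a $\kappa\times\kappa$ Pfaffian or minor of $Q$ contracted with a product-state covariance matrix. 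Only the $\order{n^{\kappa/2}}$ pairings compatible with the measured outcome contribute, which I expect to be the source of the $N_{\QS}n^{\kappa/2}$ per-sample factor. This is the step I expect to be most delicate: the per-snapshot cost has to be argued via sparsity of the covariance of a computational basis state rather than naively as $n^\kappa$. If that argument fails, I would fall back to the bound $N_{\QS}n^{\kappa}$, which matches the second line of Eq.~\eqref{eq:NCT_matchgate} anyway.
\item \emph{Per-instance evaluation.} Each generator in $\GC_{\rm MG}$ acts on the single-particle level as a $2n\times2n$ antisymmetric matrix. We compose the $L$ layers into $Q^{(i)}\in\mathrm{SO}(2n)$ via matrix exponentials and products, costing $\order{Ln^3}$ per instance. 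We then backpropagate the rank-$\kappa$ coefficient tensor by contracting $Q^{(i)}$ on each of its $\kappa$ legs in turn. Each contraction costs $\order{n^{\kappa}\cdot n}=\order{n^{\kappa+1}}$. Finally, the inner product with the stored input-state statistics costs $\order{n^\kappa}$.
\end{itemize}
Summing over the $T$ instances yields $\order{N_{\QS}n^{\kappa/2}+TLn^3+Tn^{\kappa+1}}$. Inserting the sample bound then gives the stated containment. If the degree-$\kappa$ input-state representation is known classically, there is no shadow data to process and the first term is dropped.
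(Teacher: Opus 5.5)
Your sample-complexity argument, the $\order{n}$ depth, and the per-instance costs all coincide with the paper's proof. Those per-instance costs are $\order{Ln^3}$ to compose $Q^{(i)}$, $\order{n^{\kappa+1}}$ for the $\kappa$ successive contractions, and $\order{n^\kappa}$ for the final inner product.

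The gap is the one-time post-processing term $\order{N_{\QS}n^{\kappa/2}}$, and your choice of measurement ensemble causes it. With a generic $Q\in\mathrm{SO}(2n)$, the degree-$\kappa$ components of a snapshot are Pfaffians of $\kappa\times\kappa$ principal submatrices of the rotated covariance $Q^{T}M_bQ$. Here $M_b$ is the block-diagonal covariance of $\ket{b}$. The sparsity of $M_b$ does not survive a generic rotation: $Q^{T}M_bQ$ is dense. Generically all $\binom{2n}{\kappa}$ components are then nonzero, so each snapshot costs $\Omega(n^\kappa)$ merely to accumulate. The $\order{n^{\kappa/2}}$ ``compatible pairings'' you invoke only bound the number of terms in the expansion of each fixed component. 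They do not bound the number of components a snapshot touches.

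The missing ingredient is the paper's use of the discrete ensemble $\VC_{\rm F}=\FGU(n)\cap\Cl(n)$, i.e.\ $Q$ a signed permutation. This ensemble has the same shadow channel and the moments needed for the collective shadow-norm bound, so your sample count is unchanged. It is also a subset of $\mathrm{SO}(2n)$, so your $\order{n}$-depth compilation still applies. For such $Q$, $Q^{T}M_bQ$ is itself a signed perfect matching of the $2n$ Majorana modes. A snapshot therefore overlaps only the $\binom{n}{\kappa/2}$ strings that are unions of $\kappa/2$ matched pairs, which yields $\order{N_{\QS}n^{\kappa/2}}$.

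Your fallback does not rescue the statement. Since $N_{\QS}\in\order{n^{\kappa/2}\epsilon^{-2}\log(T/\delta)\norm{\vec{c}}_1^2}$, a per-snapshot cost of $n^{\kappa}$ gives a first term of order $n^{3\kappa/2}\epsilon^{-2}\log(T/\delta)\norm{\vec{c}}_1^2$. The second line of the stated Classical Time bound instead has $n^{\kappa}\epsilon^{-2}\log(T/\delta)\norm{\vec{c}}_1^2$. That $n^\kappa$ is exactly $n^{\kappa/2}$ from the sample count times $n^{\kappa/2}$ from the per-snapshot support, so the signed-permutation sparsity is indispensable.

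Two smaller points:
\begin{itemize}
\item $\binom{n}{\kappa/2}/\binom{2n}{\kappa}$ is the eigenvalue of the channel $\MC_F$ on $\BC_\kappa$; its inverse has eigenvalue $\binom{2n}{\kappa}/\binom{n}{\kappa/2}$.
\item Treating the $\widetilde{O}_i$ as fixed before sampling is an assumption on the instances. It is natural for a prescribed collection, as in the statement and the paper. It is not automatic in a variational loop whose parameter updates use shadow-based estimates.
\end{itemize}
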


The first term in Eq.~\eqref{eq:NCT_matchgate} is a one-time
cost associated with processing the Matchgate shadow samples into reusable statistics, which are stored and shared across all circuit instances. The second term accounts for combining the layer transformations independently for the $T$ circuit instances, while the final term is the tensor-contraction cost of backpropagating the observable. For
quartic observables with $\kappa=4$, these three contributions scale
as $\order{N_{\QS}n^2}$, $\order{TLn^3}$, and
$\order{Tn^5}$, respectively.

The details of Majorana $\liea$-sim, Matchgate shadow post-processing,
and the corresponding complexity estimates are provided in
Appendix~\ref{adx:matchgate}.

\subsubsection{$S_n$-equivariant quantum circuit
\label{sec:sn_equivariant}}

We next consider the $S_n$-equivariant evolutions introduced in
Section~\ref{sec:classically_simulable_evolutions}. For their Classical
Simulation, we employ the Schur-basis representation described in
Ref.~\cite{chang2026practical}. As discussed above, an
$S_n$-equivariant operator admits the block decomposition
\begin{equation}
    A
    \cong
    \bigoplus_\lambda
    \eye_{m_\lambda}\otimes A_\lambda,
\end{equation}
where the blocks $A_\lambda$ have dimension
$d_\lambda\leq n+1$.

For each circuit instance, the observable is backpropagated independently
within these blocks,
\begin{equation}
    \widetilde{O}_\lambda
    =
    U_\lambda^\dagger O_\lambda U_\lambda,
\end{equation}
such that
\begin{equation}
    f_U(\rho,O)
    =
    \sum_\lambda
    \Tr[
        \rho_\lambda\widetilde{O}_\lambda
    ],
\end{equation}
where $\rho_\lambda$ denotes the projection of the initial state onto
the corresponding irrep block, summed over the multiplicity labels.
Importantly, the initial state need not itself be $S_n$-equivariant.

For one- and two-local symmetrized Pauli generators, the Schur blocks
can be efficiently constructed and diagonalized. Their
eigendecompositions depend only on the circuit generators and can
therefore be reused for all $T$ circuit instances. For a generic input
state, we obtain the required irrep-block information using deep
Permutation Invariant Classical Shadows (PI-CS)
~\cite{sauvage2024classical,chang2026practical}. The resulting
snapshots can be accumulated once into an empirical block description
of the input state.

\begin{theorem}[Computational Complexity of Classical Simulation of
$S_n$-equivariant circuits]
\label{thm:classical_sim_Sn}
Consider $T$ $n$-qubit $S_n$-equivariant circuit instances
$\{U^{(i)}\}_{i=1}^{T}$, each composed of $L$ one- or two-local
$S_n$-equivariant generators, with the same generators but potentially
different circuit parameters. Let
\begin{equation}
    O=\sum_\alpha c_\alpha P_\alpha
\end{equation}
be an $S_n$-equivariant observable, where the $P_\alpha$ are normalized
symmetrized Pauli operators and
$\vec{c}=(c_\alpha)_\alpha$ is a real-valued coefficient vector.

For a generic initial state, using deep PI-CS, the Quantum Sample
complexity required to estimate all $T$ expectation values up to
additive error $\epsilon$, with success probability at least
$1-\delta$, satisfies
\begin{align}
    N_{\QS}
    &\in
    \order{
        \frac{n^2}{\epsilon^2}
        \log\left(\frac{T}{\delta}\right)
        \norm{O}_\infty^2
    }
    \nonumber\\
    &\subseteq
    \order{
        \frac{n^2}{\epsilon^2}
        \log\left(\frac{T}{\delta}\right)
        \norm{\vec{c}}_1^2
    }.
    \label{eq:NQS_Sn}
\end{align}

The Classical Time complexity includes the one-time processing of the
PI-CS data and the circuit generators, together with the repeated
evaluation of the $T$ circuit instances. A conservative bound is
\begin{align}
    N_{\CT}
    &\in
    \order{
        N_{\QS}n^2
        +
        Ln^3
        +
        TLn^{\omega+1}
    }
    \nonumber\\
    &\subseteq
    \order{
        \frac{n^4}{\epsilon^2}
        \log\left(\frac{T}{\delta}\right)
        \norm{O}_\infty^2
        +
        Ln^3
        +
        TLn^{\omega+1}
    }.
    \label{eq:NCT_Sn}
\end{align}
If the required irrep-block description of the initial state is known
classically, the first term is absent.

The corresponding Quantum Time complexity is
\begin{equation}
    N_{\QT}
    \in
    \order{
        n\,\poly\left(
            \log\epsilon_{\rm QST}^{-1}
        \right)
    },
    \label{eq:NQT_Sn}
\end{equation}
where $\epsilon_{\rm QST}$ denotes the implementation error of the
quantum Schur transform.
\end{theorem}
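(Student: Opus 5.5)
The proof splits into the three resource counts, and the data-acquisition step controls both the sample bound and the dominant classical term.

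\textbf{Quantum Sample complexity.} Work in the Schur basis with $U\cong\bigoplus_\lambda \eye_{m_\lambda}\otimes U_\lambda$. Then $f_{U^{(i)}}(\rho,O)=\Tr[\rho\,\widetilde O^{(i)}]$, where $\widetilde O^{(i)}=(U^{(i)})^\dagger O U^{(i)}$ is again $S_n$-equivariant and satisfies $\norm{\widetilde O^{(i)}}_\infty=\norm{O}_\infty$. I will invoke the deep PI-CS guarantee of Refs.~\cite{sauvage2024classical,chang2026practical}, taking its shadow norm for an equivariant observable to be bounded as $\norm{A}_{\rm sh}^2\in\order{n^2\norm{A}_\infty^2}$. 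The $n^2$ factor should come from the twirl over the Schur basis together with a random unitary on each irrep block of dimension $d_\lambda\leq n+1$, giving a channel-inversion factor of order $d_\lambda^2$ summed appropriately. Median-of-means then gives an estimate of each of the $T$ targets with a common set of $N\in\order{n^2\norm{O}_\infty^2\epsilon^{-2}\log(T/\delta)}$ snapshots, and a union bound over the $T$ instances gives overall success probability $1-\delta$. The same snapshots serve every instance, so there is no factor of $T$ outside the logarithm. Finally, $\norm{O}_\infty\leq\norm{\vec c}_1$ because each normalized symmetrized Pauli has operator norm at most $1$, which yields the second line of Eq.~\eqref{eq:NQS_Sn}.

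\textbf{Quantum Time complexity.} Each snapshot applies a quantum Schur transform followed by a block-random unitary and a measurement. The Schur transform costs $\order{n\,\poly(\log\epsilon_{\rm QST}^{-1})}$ for qubits. I will argue that the random block unitaries can be compiled within this same depth, or else be replaced by a classically sampled unitary within the block basis. This gives Eq.~\eqref{eq:NQT_Sn}.

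\textbf{Classical Time complexity.} There are three pieces.
\begin{enumerate}[(i)]
\item Accumulating each snapshot into an empirical block description $\hat\rho_\lambda$. A snapshot is a rank-one-like matrix on a block of size $\order{n}$, so updating costs $\order{n^2}$ per snapshot, for $\order{N_{\QS}n^2}$ in total. This is done once.
\item Building and diagonalizing the $\order{n}$-dimensional Schur blocks of each one- or two-local generator. Each block is obtained via angular-momentum matrix elements and costs $\order{n^3}$ per generator, summed over $\order{n}$ irreps. I expect tridiagonal or banded structure to keep the total at $\order{Ln^3}$ rather than $\order{Ln^4}$. This must be checked; it is the most likely place the stated bound needs care, and I would first try to exploit that the relevant irreps are spin-$j$ blocks whose total size across $\lambda$ is handled once.
\item For each instance, updating the diagonal phases and forming $U_\lambda^\dagger O_\lambda U_\lambda$ by $L$ matrix products. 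This costs $\order{n^\omega}$ per block per layer. Summed over $\order{n}$ blocks it gives $\order{TLn^{\omega+1}}$, and the final trace is negligible.
\end{enumerate}

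The main obstacle is the first step: establishing the $n^2$ shadow-norm bound uniformly over all equivariant observables, independent of the multiplicities $m_\lambda$. I would first try to reduce it to a per-block Haar-twirl computation, using that $\widetilde O$ acts trivially on the multiplicity spaces so that only $\rho_\lambda$ enters.
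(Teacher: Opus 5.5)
Your three-part plan follows the paper's proof almost line by line. The paper likewise imports the deep PI-CS variance bound $\Var[\widehat A]\leq\order{n^2}\norm{A}_\infty^2$ from Refs.~\cite{sauvage2024classical,chang2026practical}, so the ``main obstacle'' you flag is not something you need to re-derive. It then uses $\norm{\widetilde O_i}_\infty=\norm{O}_\infty$ and the multi-observable shadow bound to get the $\log(T/\delta)$ factor. It charges $\order{n^2}$ per snapshot for post-processing, and bounds the per-instance work by $\order{L\sum_\lambda d_\lambda^\omega}\subseteq\order{Ln^{\omega+1}}$.

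The one step you leave genuinely open is (ii), and as written it does not reach the stated bound. Charging $\order{n^3}$ per block and summing over $\order{n}$ blocks gives $\order{Ln^4}$; indeed $\sum_\lambda d_\lambda^3\in\Theta(n^4)$. The paper closes this step with two concrete facts:
\begin{enumerate}
\item The Schur blocks $(H_\ell)_\lambda$ of the one- and two-local symmetrized Pauli generators are banded Hermitian matrices of constant bandwidth (taken from Ref.~\cite{chang2026practical}). Each eigendecomposition $W_{\ell,\lambda}\Xi_{\ell,\lambda}W_{\ell,\lambda}^\dagger$ therefore costs $\order{d_\lambda^2}$ rather than $\order{d_\lambda^3}$.
\item $\sum_\lambda d_\lambda^2=\sum_{m=0}^{\lfloor n/2\rfloor}(n-2m+1)^2\in\order{n^3}$.
\end{enumerate}
Together these give the one-time $\order{Ln^3}$, with $L$ replaceable by the number of distinct generators.

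On the Quantum Time side, the paper only asserts that the measurements following the Schur transform do not change the leading scaling. Of your two options, only the first (compiling the block randomization within the Schur-transform depth) is compatible with that claim. Sampling the random block unitary classically does not remove the need to apply it on the device before readout.
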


The first two terms in Eq.~\eqref{eq:NCT_Sn} are one-time costs. The
deep PI-CS data are processed once, while the Schur blocks of the
generators can be diagonalized and stored since the generators are
shared by all $T$ circuit instances. The final term corresponds to the
parameter-dependent block matrix operations. In the second line of
Eq.~\eqref{eq:NQS_Sn}, we used
$\norm{O}_\infty\leq\norm{\vec{c}}_1$, since the normalized
symmetrized Pauli operators considered here satisfy
$\norm{P_\alpha}_\infty\leq1$. We refer the reader to
Appendix~\ref{adx:Sn_QNN} for additional details and to
Ref.~\cite{chang2026practical} for the full representation-theoretic
construction.

\subsubsection{$\U(1)$-equivariant circuit
\label{sec:U1_symmetric}}

We next consider the $\U(1)$-equivariant evolutions introduced in
Section~\ref{sec:classically_simulable_evolutions}. We focus on the
setting in which the initial state is pure and supported on a fixed
Hamming-weight sector $\HC_h^{(n)}$, with $h\in\order{1}$. In this
case, the evolution can be simulated directly within the
$d_h^{(n)}=\binom{n}{h}$-dimensional subspace by applying the sequence
of Givens rotations to the corresponding state vector.

Consider an RBS gate acting on qubits $i$ and $j$. Within
$\HC_h^{(n)}$, the gate acts non-trivially only on pairs of
computational basis states for which exactly one of the two qubits is
occupied. There are
\begin{equation}
    \binom{n-2}{h-1}
\end{equation}
such pairs. Hence, applying one Givens rotation to a state in
$\HC_h^{(n)}$ requires
$\order{\binom{n-2}{h-1}}$ elementary operations. The locations of
these pairs depend only on the circuit generators and can be determined
once, while the corresponding rotations must be reevaluated as the
circuit parameters change.

If the initial state is not known classically, we reconstruct it within
the fixed Hamming-weight sector using $\U(1)$-symmetric classical
shadows~\cite{hearth2024efficient}. The resulting description of the
initial state is acquired only once and can be reused for all $T$
circuit instances. This leads to the following theorem.

\begin{theorem}[Classical Simulation Complexity of
$\U(1)$-equivariant circuits]
\label{thm:classical_sim_U1}

Consider $T$ $\U(1)$-equivariant circuit instances
$\{U^{(i)}\}_{i=1}^{T}$ of the form given in
Eq.~\eqref{eq:generic_circuit}, sharing the same set of $L$ RBS
generators but potentially different circuit parameters. We assume that
the initial state is pure and lies entirely within a fixed
Hamming-weight sector $h\in\order{1}$. Let
$O=\sum_\alpha c_\alpha P_\alpha$ be a $\U(1)$-equivariant observable,
where $\norm{P_\alpha}_\infty\leq1$ and
$\vec{c}=(c_\alpha)_\alpha$ is a real-valued coefficient vector.

If the initial state is accessed only through copies of $\rho$, it can
be reconstructed within the fixed Hamming-weight sector using
$\U(1)$-symmetric classical shadows with Quantum Sample complexity
\begin{equation}
    N_{\QS}
    \in
    \order{
        \frac{n^{3h}}{\epsilon^2}
        \log\left(\frac{n}{\delta}\right)
        \norm{\vec{c}}_1^2
    }.
    \label{eq:NQS_U1}
\end{equation}
The same reconstructed initial state can be reused for all $T$ circuit
instances, and therefore the Quantum Sample complexity does not acquire
an additional dependence on $T$. If the initial state is known
classically, no quantum data acquisition is required.

The Classical Time complexity naturally separates into the one-time
post-processing required to reconstruct the initial state and the
repeated evaluation of the $T$ circuit instances. When the initial
state is reconstructed from quantum measurements, a conservative bound
is
\begin{align}
    N_{\CT}
    &\in
    \order{
        N_{\QS} n^{h+1}
        +
        TL\binom{n-2}{h-1}
    }
    \nonumber\\
    &\subseteq
    \order{
        \frac{n^{4h+1}}{\epsilon^2}
        \log\left(\frac{n}{\delta}\right)
        \norm{\vec{c}}_1^2
        +
        TLn^{h-1}
    }.
    \label{eq:NCT_U1}
\end{align}
If the initial state is known classically, the first term is absent.

Assuming that the disjoint two-qubit gates entering the
$\U(1)$-symmetric shadow protocol can be implemented in parallel, the
corresponding Quantum Time complexity is
\begin{equation}
    N_{\QT}\in\order{1}.
\end{equation}
\end{theorem}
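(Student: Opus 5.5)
We establish the three resource bounds of Theorem~\ref{thm:classical_sim_U1} one after another, beginning with the classical evaluation cost, which is the most elementary. A pure state $\ket{\psi}\in\HC_h^{(n)}$ is stored as a vector of $d_h^{(n)}=\binom{n}{h}$ amplitudes. Every $\U(1)$-equivariant circuit instance is a product of $L$ Givens rotations $G_{ij}(w)$ as in Eq.~\eqref{eq:givens}, and these leave $\HC_h^{(n)}$ invariant. Inside this sector, $G_{ij}$ acts as the identity on basis strings whose bits $i,j$ are both $0$ or both $1$. On each of the $\binom{n-2}{h-1}$ pairs $\{\ket{\dots1_i\dots0_j\dots},\ket{\dots0_i\dots1_j\dots}\}$ it acts as a $2\times2$ rotation. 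We list these index pairs once per generator, in a preprocessing step that does not depend on $\vec{w}$. After that, each gate application costs $\order{\binom{n-2}{h-1}}$, so all circuits together cost $\order{TL\binom{n-2}{h-1}}\subseteq\order{TLn^{h-1}}$. The expectation value $\bra{\psi}U^\dagger O U\ket{\psi}$ is then obtained by applying the projection $O_h$ of each $\U(1)$-equivariant Pauli-type term $P_\alpha$. Each such term maps computational basis strings to single basis strings, so this contraction is subleading when $M$ is polynomial. We fold it into the evaluation cost.

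For the data-acquisition stage, we invoke the $\U(1)$-symmetric shadow protocol of Ref.~\cite{hearth2024efficient}. Its random unitaries are built from parallel layers of disjoint two-qubit RBS-type gates of constant depth, which gives $N_{\QT}\in\order{1}$. Our aim is a reconstruction $\ket{\hat\psi}$ such that $|\bra{\hat\psi}\widetilde O_i\ket{\hat\psi}-\bra{\psi}\widetilde O_i\ket{\psi}|\le\epsilon$ holds for all $\widetilde O_i=U^{(i)\dagger}OU^{(i)}$ at the same time. This suffices as soon as $\|\hat\rho-\rho\|_1\le\epsilon/\|O\|_\infty$, because $\|\widetilde O_i\|_\infty=\|O\|_\infty\le\|\vec c\|_1$. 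Trace-distance control is independent of $T$, and that is exactly why $N_{\QS}$ carries no $\log T$.

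We proceed in three steps. First, we estimate the $d^2$ matrix elements of $\rho$ restricted to $\HC_h^{(n)}$, with $d=\binom{n}{h}\in\order{n^h}$. Second, we bound the variance of each element estimator using the shadow-norm bound of Ref.~\cite{hearth2024efficient}, which we take to be $\order{d}$ for sector-restricted operators. Third, we apply median-of-means with a union bound over the $d^2$ entries, which produces a logarithmic factor $\log(d^2/\delta)\in\order{\log(n/\delta)}$ for $h\in\order{1}$. Converting entrywise error into trace-norm error, and projecting onto the leading eigenvector to obtain a pure state, costs a factor of $d^2$ in the sample count. Multiplying the $\order{d}$ shadow variance by this $d^2$ conversion factor gives $N_{\QS}\in\order{n^{3h}\|\vec c\|_1^2\epsilon^{-2}\log(n/\delta)}$.

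I expect the second step to be the main obstacle. Getting the advertised $n^{3h}$ exponent, rather than something looser, depends on using the precise $\U(1)$-shadow norm together with a careful entrywise-to-trace-norm argument, for instance through the Frobenius norm and a rank-one projection.

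The classical post-processing cost then follows directly. Each snapshot updates $\order{d}$ entries, and inverting the shadow channel takes $\order{n}$ work per entry. This gives $\order{N_{\QS}n^{h+1}}$, and substituting the bound on $N_{\QS}$ yields $\order{n^{4h+1}\|\vec c\|_1^2\epsilon^{-2}\log(n/\delta)}$. Adding the evaluation cost from the first step completes Eq.~\eqref{eq:NCT_U1}.
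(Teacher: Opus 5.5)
Your Givens-pair count for the $TL\binom{n-2}{h-1}$ term, the one-time reuse of the All-Pairs shadow data across all $T$ instances, and the $\order{1}$ depth all coincide with the paper. Your sample-complexity argument, however, takes a different route.

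The paper never reconstructs the $d\times d$ block of $\rho$. It uses $\Tr[\rho\,a^\dagger_{\vec\mu}a_{\vec\nu}]=\alpha^*_{\vec\mu}\alpha_{\vec\nu}$, fixes a reference configuration with $|\alpha_{\vec\mu_r}|\geq d^{-1/2}$, and estimates only the $\order{d}$ products $\alpha^*_{\vec\mu_r}\alpha_{\vec\nu}$. Dividing by $|\alpha_{\vec\mu_r}|$ costs one factor of $d$ in samples, and converting per-amplitude error into $\norm{\Delta}_2$ costs another. With the $n^h$ shadow norm this gives $n^{3h}$.

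You instead estimate all $d^2$ entries and pay $d^2$ for the entrywise-to-Frobenius conversion. You then use the unit spectral gap of a pure state (Davis--Kahan), so the leading-eigenvector projection costs only a constant. This also gives $d\cdot d^2=n^{3h}$, and the arithmetic is sound. Your version avoids choosing a reference amplitude and tracking a global phase, and the projection handles Hermiticity and normalization. The paper's version needs a factor $d$ fewer observables, which matters for the classical cost.

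Two load-bearing steps in your proposal are asserted rather than derived.

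\emph{The $\order{d}$ shadow norm.} Do not represent the matrix element by the full-space operator $|\vec\mu\rangle\langle\vec\nu|$. Its expansion puts $(\eye\pm Z)/2$ on all $n$ sites, which makes the $(3/2)^{2n_z}$ factor in Lemma~\ref{lem:u1_shadow_norm} exponential. Use $a^\dagger_{\vec\mu}a_{\vec\nu}$ instead: it coincides with $|\vec\mu\rangle\langle\vec\nu|$ on $\HC_h^{(n)}$ and expands into at most $2^h$ elements of $\XC_{\U(1)}$ with $n_{a,+}\leq h$ and $n_z\leq h$. The lemma then gives $\order{n^h}=\order{d}$.

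\emph{The post-processing count.} Because you estimate $d^2$ quantities, the $\order{n}$ cost per sample and per observable naively gives $\order{N_{\QS}d^2n}=\order{N_{\QS}n^{2h+1}}$. That exceeds Eq.~\eqref{eq:NCT_U1}. Your claim that each snapshot touches only $\order{d}$ entries is true, but it needs an argument. For a snapshot with pairing $\pi$, the estimate of an $\XC_{\U(1)}$ term vanishes unless $\pi$ pairs each of its raising sites with one of its lowering sites. This holds because the inverse channel preserves the raising/lowering sites, and each factor $U_{[ij]}^\dagger|z_iz_j\rangle\langle z_iz_j|U_{[ij]}$ is charge-neutral on its pair. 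Consequently each $\vec\mu$ has at most $2^h$ nonzero partners $\vec\nu$. Alternatively, adopt the reference-row scheme, which needs only $\order{d}$ observables from the outset.

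Finally, the contraction with $O$ costs $\order{Md}$ per instance. That is not subleading to $L\binom{n-2}{h-1}$ merely because $M$ is polynomial; the theorem, like the paper, simply does not count this term.
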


The linear dependence on $T$ in Eq.~\eqref{eq:NCT_U1} arises because
the rotation angles change between circuit instances, and the
parameter-dependent Givens rotations must therefore be reapplied. In
contrast, the fixed-Hamming-weight description of the initial state is
obtained only once and reused throughout all subsequent circuit
evaluations.

We stress that Theorem~\ref{thm:classical_sim_U1} relies on the
assumption that $h\in\order{1}$ and that the initial state is pure and
supported entirely within a single Hamming-weight sector. When these
conditions are not satisfied, alternative CS methods
can be employed, including $\liea$-sim~\cite{barligea2026enabling},
Matchgate simulation, or symmetry-adapted Pauli
Propagation~\cite{teng2025leveraging}.

\subsubsection{Quantum convolutional neural networks\label{sec:qcnn}}

We next consider the QCNN evolutions introduced in
Section~\ref{sec:classically_simulable_evolutions}. For their Classical
Simulation, we employ low-weight Pauli Propagation. The hierarchical
architecture is important here as after each pooling layer, the number of
active qubits is reduced by a factor of two. As a result, the number of
low-weight Pauli transitions that must be computed also decreases
geometrically through the circuit.

We truncate the backpropagated observable after each layer by discarding
Pauli strings with weight larger than $\tau$. Under the locally
scrambling assumptions used in Ref.~\cite{angrisani2024classically},
the corresponding mean-squared truncation error satisfies
\begin{equation}
    \Ebb_U\left[
        \abs{f_U(\rho,O)-f_U^{(\tau)}(\rho,O)}^2
    \right]
    \leq
    \left(\frac{2}{3}\right)^{\tau+1}
    \norm{O}_{\rm Pauli,2}^2,
    \label{eq:qcnn_truncation_error_main}
\end{equation}
where $\norm{O}_{\rm Pauli,2}=2^{-n/2}\norm{O}_2$. This is an
average-case guarantee around random circuit instances. Ref.~\cite{bermejo2024quantum}
provides complementary numerical evidence that the low-weight
description remains accurate throughout training for the benchmark
problems considered there, including simulations with up to 1024
qubits.

Here, the total estimation error with respect to the exact expectation value can be decomposed into the approximation error $\epsilon_{\rm PP}$ arising from the Pauli path truncation, and the statistical error $\epsilon$ arising from the finite number of classical shadow measurements. More precisely, let $\tilde{f}_U^{(\tau)}(\rho, O)$ denote the estimator obtained  by combining a classical shadow estimate of the initial state with low-weight Pauli propagation at truncation threshold $\tau$. The total estimation error then satisfies
\begin{align}
    \abs{\tilde{f}^{(\tau)}_U - f_U}  \le \abs{\tilde{f}^{(\tau)}_U - f^{(\tau)}_U} + \abs{f^{(\tau)}_U- f_U} \;,
\end{align}
where we suppress the explicit dependence on $\rho$ and $O$ for notational simplicity. The first term represents the statistical error due to the finite classical shadow data, whereas the second represents the approximation
error due to the Pauli-path truncation. 
The numerical results of Ref.~\cite{bermejo2024quantum} indicate that the low-weight approximation retains sufficient information to achieve accurate classification on the benchmark problems considered therein. However, beyond the average-case bound for random circuit ensembles, no explicit instance-wise guarantee has been provided yet for the trained circuits.

The remaining input-dependent quantities are expectation values of
low-weight Pauli strings on $\rho$. These can be acquired once using
local randomized Pauli measurements and reused for all subsequent
parameter values. This leads to the following theorem.

\begin{theorem}[Computational complexity of Classical Simulation for randomly initialized QCNNs]
\label{thm:classical_sim_qcnn}
Consider $T$ $n$-qubit QCNN circuit instances $\{U^{(i)}\}_{i=1}^T$ with
the same hierarchical architecture and depth $L=\log_2(n)$, but
potentially different circuit parameters. Let
$O=\sum_\alpha c_\alpha P_\alpha$ be a constant-local observable and let
$f_{U^{(i)}}^{(\tau)}(\rho,O)$ denote the low-weight Pauli Propagation
estimate obtained with truncation threshold $\tau$.

Under the locally scrambling assumptions above, randomized local Pauli
measurements can be used for the one-time data acquisition. The Quantum
Sample complexity required to estimate the truncated QCNN expectation
values to statistical accuracy $\epsilon$, with failure probability at
most $\delta$, satisfies
\begin{equation}
    N_{\QS}
    \in
    \order{
        \frac{\exp(\order{\tau})}{\epsilon^2}
        \log\left(\frac{n}{\delta}\right)
        \norm{\vec{c}}_1^2
    }.
    \label{eq:NQS_qcnn}
\end{equation}
The same quantum data are reused for all $T$ circuit instances.

If the shadow data are processed into all Pauli expectation values with
weight at most $\tau$, the Classical Time complexity can be bounded as
\begin{equation}
    N_{\CT}
    \in
    \order{
        N_{\QS}n^\tau
        +
        Tn^{2\tau}
    }.
    \label{eq:NCT_qcnn}
\end{equation}
The first term is a one-time data-processing cost, while the second term
comes from the parameter-dependent Pauli propagation.

The corresponding Quantum Time complexity is
\begin{equation}
    N_{\QT}\in\order{1}.
    \label{eq:NQT_qcnn}
\end{equation}
\end{theorem}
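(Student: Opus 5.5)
The plan is to split the estimator into a one-time quantum data-acquisition stage and a per-instance classical propagation stage, and to bound each of the three resources separately.

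\textbf{Quantum Sample complexity.} Since the truncation is applied at every layer, the final backpropagated observable $O^{(0),\tau}_i=\sum_\beta \tilde c^{(i)}_\beta P_\beta$ contains only Pauli strings of weight at most $\tau$. By linearity,
\begin{equation}
f^{(\tau)}_{U^{(i)}}(\rho,O)=\sum_\beta \tilde c^{(i)}_\beta \Tr[\rho P_\beta].
\end{equation}
So it suffices to estimate every $\Tr[\rho P_\beta]$ with $|P_\beta|\le\tau$. I would use randomized local Pauli measurements (classical shadows). The shadow norm of a weight-$w$ Pauli string is $3^{w}$, and $3^{\tau}=\exp(\order{\tau})$. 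The number of such strings is at most $\sum_{w\le\tau}\binom{n}{w}3^w\le (3n)^{\tau}$. A median-of-means bound with a union bound over these strings then gives uniform accuracy $\epsilon'$ with $\order{3^\tau\log((3n)^\tau/\delta)/\epsilon'^2}$ samples, and $\log((3n)^\tau/\delta)=\order{\tau\log(n/\delta)}$, so the extra factor of $\tau$ is absorbed into $\exp(\order{\tau})$.

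To convert to error in $f$, I use
\begin{equation}
\bigl|\tilde f^{(\tau)}-f^{(\tau)}\bigr|\le\|\tilde{\vec c}^{(i)}\|_1\,\epsilon'.
\end{equation}
The key step, and the main obstacle, is bounding $\|\tilde{\vec c}^{(i)}\|_1$ by $\|\vec c\|_1$ times at most $\exp(\order{\tau})$. I would first argue that each local gate maps a Pauli string to a combination whose coefficient $\ell_1$ norm is at most $\exp(\order{1})$ on the gate support. Truncation only removes terms, and the pooling structure with weight cap $\tau$ restricts the active support. Together these should keep the growth controlled by a factor depending on $\tau$ only. Failing that, I would fall back on $\ell_2$ via Pauli-norm preservation, $\|\tilde{\vec c}\|_2\le\|\vec c\|_2$, combined with Cauchy--Schwarz over the at most $\exp(\order{\tau})$ strings inside the constant-size backward cone of the two-qubit observable. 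Setting $\epsilon'=\epsilon/(\exp(\order{\tau})\|\vec c\|_1)$ yields Eq.~\eqref{eq:NQS_qcnn}. Because the union bound is over the fixed set of low-weight strings rather than over instances, there is no $T$ dependence, and the data are reused across all instances.

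\textbf{Classical Time complexity.} Processing each of the $N_{\QS}$ snapshots updates the running estimates of all $\order{n^\tau}$ compatible low-weight Paulis, which costs $\order{N_{\QS}n^\tau}$ once. For each instance, the truncated operator contains at most $\order{n^\tau}$ strings. Each layer's gates map each string to $\order{1}$ strings of bounded weight. Merging duplicate strings naively costs at most $\order{n^\tau}$ per produced term, giving a per-layer bound $\order{n^{2\tau}}$ as a conservative estimate. Because active qubits halve at each pooling layer, the per-layer costs form a geometric series dominated by the first layer, so the total is $\order{n^{2\tau}}$ per instance rather than carrying a factor $L$. The final inner product with the stored estimates costs $\order{n^\tau}$. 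Summing over instances gives Eq.~\eqref{eq:NCT_qcnn}.

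\textbf{Quantum Time complexity.} Each shot consists of preparing $\rho$, applying single-qubit random basis rotations in parallel, and measuring. This has depth $\order{1}$, which gives Eq.~\eqref{eq:NQT_qcnn}.
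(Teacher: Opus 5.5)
Your runtime count (transitions per layer, summed as a geometric series over the halving active registers), the one-time $\order{N_{\QS}n^\tau}$ shadow post-processing, and the $\order{1}$ depth all follow the paper.

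The gap is the step you flag yourself in the sample-complexity part. Neither of your routes gives $\|\tilde{\vec c}^{(i)}\|_1\le e^{\order{\tau}}\|\vec c\|_1$.

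In the first route, a retained string meets $\order{\tau}$ gates per layer, and each gate can enlarge the coefficient $\ell_1$ norm by a constant factor. The per-layer factor is therefore $e^{\order{\tau}}$, and it compounds over $L=\log_2 n$ layers to $e^{\order{\tau\log n}}=n^{\order{\tau}}$. Truncation and pooling remove terms, but nothing in your argument stops this compounding.

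The fallback rests on a false premise: the backward light cone of a QCNN output qubit is not constant-size. Each pooling layer halves the active register, so the cone doubles at every backward step and, after $\log_2 n$ steps, covers all $n$ input qubits. This is why you correctly count $\order{n^\tau}$ retained strings in the runtime part. Cauchy--Schwarz then yields only $\|\tilde{\vec c}^{(i)}\|_1\le\order{n^{\tau/2}}\|\vec c\|_2$. That puts an extra factor $n^{\tau}$ into $N_{\QS}$ and contradicts the claimed $\log(n/\delta)$ dependence.

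What is missing is the locally scrambling assumption, which you use only for the truncation error. The paper does not prove a worst-case $\ell_1$ bound either; it obtains the scaling by combining the shadow with the locally scrambling low-weight estimator.

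To make this explicit, control the variance in $\ell_2$ rather than $\ell_1$:
\begin{enumerate}
\item For locally scrambling layers, distinct Pauli paths are uncorrelated, so $\Ebb_U[\tilde c_P\tilde c_Q]=0$ for $P\neq Q$. The snapshot statistics depend only on $\rho$, not on the circuit.
\item Write $\hat X_P$ for the single-snapshot estimate of $\Tr[\rho P]$, which satisfies $\Ebb[\hat X_P^2]=3^{|P|}$.
\item The circuit-averaged second moment of $\sum_P\tilde c_P\hat X_P$ is then $\sum_P\Ebb_U[\tilde c_P^2]\,3^{|P|}\le 3^\tau\|\vec c\|_2^2\le 3^\tau\|\vec c\|_1^2$. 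This uses that conjugation preserves, and truncation cannot increase, the Pauli $\ell_2$ norm.
\end{enumerate}
This gives the $e^{\order{\tau}}\|\vec c\|_1^2/\epsilon^2$ scaling with no polynomial factor in $n$. Like the truncation bound, however, it holds only for most circuits in the ensemble rather than for every instance.
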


Theorem~\ref{thm:classical_sim_qcnn} separates the two errors entering
the QCNN surrogate. Equation~\eqref{eq:NQS_qcnn} controls the
statistical error in estimating the truncated expectation value, while
Eq.~\eqref{eq:qcnn_truncation_error_main} controls the approximation
error introduced by removing high-weight Pauli components. The latter
is the step that makes the simulation approximate rather than exact.
Additional details are given in Appendix~\ref{adx:qcnn}.

\subsubsection{ADAPT-VQE for quantum chemistry\label{sec:adapt_vqe}} 

We finally consider the restricted ADAPT-VQE setting introduced in
Section~\ref{sec:classically_simulable_evolutions}. For its Classical
Simulation, we employ Majorana Propagation
~\cite{miller2025simulation,d2025majorana}, which backpropagates the
Hamiltonian and gradient observables in the Majorana basis and discards
strings whose fermionic locality exceeds a threshold $\tau$. As for the
QCNN setting considered above, this truncation makes the simulation
approximate rather than exact, and its efficiency relies on conditions
under which the contribution of high-locality Majorana strings can be
controlled.

In particular, Ref.~\cite{miller2025simulation} derives an error
guarantee for a randomized fermionic circuit model relevant to
ADAPT-VQE. Consider a fermionic circuit described in Sec.~\ref{sec:adapt_vqe_description}, composed of $L$ fermionic gates generated by the Majorana strings, $G_{\vec{\mu}_\ell}^{\kappa_\ell}$,  
and an observable $O = \sum_{\kappa} \sum_{\vec{\mu} \in \CC_{2n, \kappa}} c_{\vec{\mu}} G^\kappa_{\vec{\mu}}$, which consists only of low, even Majorana degree $\kappa$ for most of the physically relevant problems.
For the randomized
ensemble considered in Ref.~\cite{miller2025simulation}, in which every circuit generator has fermionic locality $\kappa_\ell=4$, and for an
observable $O$ whose coefficients are unbiased and
uncorrelated, the mean-squared truncation error satisfies
\begin{align}
    \mathbb{E}& \left[
        \left(
            f_U(\rho,O)
             -
            f_U^{(\tau)}(\rho,O)
        \right)^2
    \right]
    \nonumber \\
    & \leq \left( 
    \frac{1}{2^{n-1}} + 
    \left(
        \frac{e\tau}{n}
    \right)^{\tau/2}
    \right) \mathbb{E}\left[
        \majoranaNorm{O}^2
    \right],
    \label{eq:adapt_majorana_error}
\end{align}
where $n$ denotes  the number of fermionic modes, which equals the number of qubits. Here, 
\begin{equation}
    \mathbb{E}\left[
        \majoranaNorm{O}^2
    \right]
    =
    \mathbb{E}\left[
        \frac{\Tr[OO^\dagger]}{2^n}
    \right]
\end{equation}
is the squared $\ell_2$-norm of the Majorana coefficients averaged over
the randomness in $O$. Accordingly, choosing
$\tau\in\order{\log(\epsilon_{\rm MP}^{-1}\delta_{\rm MP}^{-1})}$ yields a controlled
approximation with an additive error at most $\epsilon_{\rm MP}$ and success probability at least $1 - \delta_{\rm MP}$, under the assumptions of
Ref.~\cite{miller2025simulation}, with polynomial runtime for fixed $\epsilon_{\rm MP}$ and $\delta_{\rm MP}$. More generally, numerical evidence in
Ref.~\cite{chakraborty2026scalable} suggests that Majorana Propagation
can remain accurate at moderate system sizes (up to 100 qubits) even when these randomness
assumptions are relaxed. Here, a similar decomposition into approximation and statistical errors applies, as in the standard Pauli Propagation discussed for QCNNs in Sec.~\ref{sec:qcnn} (though, of course, for classical initial states such as the Hartree--Fock state no shadow procedure is needed and so there are no statistical errors).

Two kinds of evaluations enter ADAPT-VQE. At each adaptive step, the
gradients associated with the full operator pool $\AC$ must be
computed. Once a new generator is selected, the parameters of the
current ansatz are re-optimized by repeatedly evaluating the energy. We
denote by $L$ the final number of selected generators and by
$T_{\rm VQE}$ the total number of energy evaluations accumulated over
all re-optimization stages. Since each new generator is appended to the
adaptive construction, the previously generated Majorana propagation
paths can be reused rather than reconstructed from the final observable
at every step.

For the quantum-chemistry setting considered here, the initial
Hartree--Fock state is known classically. Its Majorana expectation
values can therefore be computed without an initial quantum
data-acquisition stage. This leads to the following theorem.

\begin{theorem}[Computational complexity of Classical Simulation for
randomized ADAPT-VQE]
\label{thm:classical_sim_adapt_vqe}
Consider the randomized fermionic ADAPT-VQE setting described above,
with Majorana operator pool $\AC$, final circuit depth $L$, and
truncation threshold $\tau$. Assume that the initial state is the
Hartree--Fock state and let $T_{\rm VQE}$ denote the total number of
energy evaluations performed across all VQE re-optimization stages.

No copies of the initial state are required, and hence
\begin{equation}
    N_{\QS}=0.
    \label{eq:NQS_adapt_vqe}
\end{equation}

The Classical Time complexity of the truncated adaptive procedure is
\begin{equation}
    N_{\CT}
    \in
    \order{
        \left(
            L\abs{\AC}
            +
            T_{\rm VQE}
        \right)
        n^\tau
    }.
    \label{eq:NCT_adapt_vqe}
\end{equation}
The first contribution is the cost of constructing and extending the
Majorana surrogates required for gradient screening, while the second
is the cost of reevaluating the truncated surrogate during the VQE
optimizations.

Since the CS uses no quantum data acquisition in this
setting, its Quantum Time complexity is
\begin{equation}
    N_{\QT}=0.
    \label{eq:NQT_adapt_vqe}
\end{equation}
\end{theorem}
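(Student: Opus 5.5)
The proof splits into the three resource metrics. The quantum ones are immediate, and the classical one is a counting argument over truncated Majorana propagation.

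\textbf{Quantum resources.} For $N_{\QS}$ and $N_{\QT}$, the Hartree--Fock state is a computational basis state $\ket{\psi^{\rm HF}}$ known classically. I will show that $\Tr[\rho^{\rm HF}G^{\kappa}_{\vec{\mu}}]$ is classically computable in $\order{\kappa}=\order{\tau}$ time for any Majorana string. Under Jordan--Wigner, $G^\kappa_{\vec{\mu}}$ is, up to phase, a Pauli string. Its expectation on a basis state is nonzero only if it is diagonal, that is, only if $\vec{\mu}$ consists of pairs $(2j-1,2j)$. In that case the value is a product of $\pm1$ occupation signs. Since no input-state data is needed and the circuit is never executed on hardware, $N_{\QS}=0$ and $N_{\QT}=0$.

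\textbf{Cost of one propagation step.} The truncated Heisenberg evolution with threshold $\tau$ keeps at most $\sum_{\kappa\le\tau}\binom{2n}{\kappa}\in\order{n^\tau}$ Majorana strings, since $\tau$ is treated as fixed. Conjugating one string by $e^{iwG^{\kappa_\ell}_{\vec{\mu}_\ell}/2}$ has two possible outcomes:
\begin{itemize}
\item[(i)] If the string commutes with the generator, it is left unchanged.
\item[(ii)] If it anticommutes, it branches into $\cos(w)\,G+\sin(w)\,G'$, where $G'\propto G^{\kappa_\ell}_{\vec{\mu}_\ell}G$ is a single string whose index set is the symmetric difference of the two index sets.
\end{itemize}
Commutation is decided by the parity of the overlap $|\vec{\mu}\cap\vec{\nu}|$ together with $\kappa_\ell$ and the degree of the string, which costs $\order{1}$ since $\kappa_\ell\le\kappa^*$ and degrees are at most $\tau$. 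Strings of degree above $\tau$ are discarded, and equal strings are merged using a hash table with $\order{1}$ amortized cost. A single gate therefore costs $\order{n^\tau}$, and evaluating $\sum_{\vec{\nu}}\tilde c_{\vec{\nu}}\Tr[\rho^{\rm HF}G_{\vec{\nu}}]$ at the end also costs $\order{n^\tau}$.

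\textbf{Adaptive bookkeeping.} This is the delicate step. Naively, each of the $T_{\rm VQE}$ energy evaluations propagates through up to $L$ gates, and each screening step propagates $|\AC|$ gradient observables $i[A,O]$ through the current ansatz. That would give a cost of order $(L^2|\AC|+LT_{\rm VQE})n^\tau$, which is worse than the claimed bound.

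I will first try to recover the claimed bound through the reuse argument mentioned in the text.
\begin{itemize}
\item \emph{Caching.} ADAPT appends gates at the end nearest the observable. I will therefore store the symbolic surrogate, meaning the truncated propagated operator with coefficients kept as trigonometric functions of the parameters, as a fixed list of at most $\order{n^\tau}$ paths.
\item \emph{Gradient screening.} At adaptive step $\ell$, I will charge each pool element $\order{n^\tau}$ work by extending the existing surrogate by one gate and contracting with the Hartree--Fock vector. Summed over $\ell$, this gives $L|\AC|n^\tau$.
\item \emph{Energy evaluations.} Each of the $T_{\rm VQE}$ energy evaluations then reevaluates the stored truncated surrogate at new parameter values, costing $\order{n^\tau}$ per evaluation.
\end{itemize}

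The main obstacle is justifying that reevaluating the surrogate does not pick up a factor of $L$ per path. If this cannot be justified, I will state the bound with $\order{\cdot}$ absorbing the depth into the retained-term count. That is, I will take $n^\tau$ to bound the total number of stored path-terms under the truncation and randomness assumptions of Ref.~\cite{miller2025simulation}, and note this as the accounting convention.

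Summing the screening and energy-evaluation costs then gives Eq.~\eqref{eq:NCT_adapt_vqe}. The approximation accuracy is inherited from Eq.~\eqref{eq:adapt_majorana_error} and is not part of the resource count.
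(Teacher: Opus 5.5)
Your proposal follows the paper's argument step for step. The Hartree--Fock Majorana correlators are fixed by the occupation pattern (your pairing argument is a correct, more explicit version of the paper's one-line justification), so $N_{\rm Q.S.}=N_{\rm Q.T.}=0$. Truncation keeps $\order{n^\tau}$ strings per layer, so one gate costs $\order{n^\tau}$. Eq.~\eqref{eq:NCT_adapt_vqe} then follows by charging $\order{n^\tau}$ per pool element per adaptive step and $\order{n^\tau}$ per energy evaluation. The obstacle you flag is a genuine gap, and the paper does not resolve it either. It simply posits both charges under ``the reuse strategy assumed here,'' right after stating in Eq.~\eqref{eq:adapt_single_propagation} that one propagation through an $L$-gate circuit costs $\order{Ln^\tau}$.

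Your proposed repairs do not close the gap. First, the caching step has the direction backwards. Because the new ADAPT gate sits at the observable end, a Heisenberg surrogate built from $H$ cannot be extended by one gate there. The gradient for pool element $A$ is the Hartree--Fock expectation of $\Phi_{\ell-1}(i[A,H])$, where $\Phi_{\ell-1}$ is the truncated propagation through all $\ell-1$ old gates at their freshly re-optimized angles. So $i[A,H]$ is a new input that must traverse the whole circuit. Second, truncation bounds the number of distinct strings per layer, not the number of paths. A merged surrogate has $\order{\ell n^\tau}$ nodes, and reevaluating it at a new parameter vector touches all of them. ``Absorbing the depth into $n^\tau$'' is therefore a change of notation, not a proof. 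What your ingredients actually give is $\order{(L^2\abs{\AC}+LT_{\rm VQE})n^\tau}$.

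The screening term can be rescued by duality. The projection onto degree $\leq\tau$ is self-adjoint for the Hilbert--Schmidt inner product. Hence truncated Heisenberg propagation is the adjoint of truncated Schr\"odinger propagation of the projected Hartree--Fock state, and it returns the identical estimate. In that picture the new gate is appended at the cheap end. The truncated state at the optimized parameters (a by-product of the last energy evaluation if those are run in this dual form) yields every pool gradient by contraction with $i[A,H]$ in $\order{n^\tau}$ operations, i.e.\ $\order{L\abs{\AC}n^\tau}$ in total. The $T_{\rm VQE}n^\tau$ term is different: a generic full-parameter energy evaluation costs $\order{\ell n^\tau}$. Removing the factor $\ell$ requires an explicit assumption on the optimizer, for instance single-parameter updates amortized over cached forward and backward sweeps. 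You should state this as an assumption rather than present it as derived.
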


\renewcommand{\arraystretch}{1.5}
\begin{table*}[t]
    \centering
    \begin{tabular}{c|c|c|c|c|c|c|c|c}
        Provider & QPU family & Hardware type & $\#$ qubits &
        Connectivity & $t_{\rm 1q}$ & $t_{\rm 2q}$ &
        $t_{\rm readout}$ & $t_{\rm reset}$ \\
        \hline
        IonQ~\cite{robertson2026variational,chen2024benchmarking,
        hughes2025trapped}
        & Forte & Trapped-ion & 36 qubits & All-to-all
        & 130 \unit{\micro\second}
        & 900 \unit{\micro\second}
        & 150 \unit{\micro\second}
        & 50 \unit{\micro\second}
        \\
        IQM~\cite{abdurakhimov2024technology,IQMGarnetHardware,
        marxer2026above}
        & Garnet & Superconducting & 20 qubits & Square lattice
        & 20 ns & 40 ns & 500 ns & 300 \unit{\micro\second}
        \\
        Rigetti~\cite{CepheusSpecSheet}
        & Cepheus & Superconducting & 108 qubits &
        Square lattice chiplets
        & 40 ns & 60 ns & N/R& 250 \unit{\micro\second}
        \\
        Quantinuum~\cite{ransford2025helios}
        & Helios & Trapped-ion & 98 qubits & All-to-all
        & 50 \unit{\micro\second}
        & 370 \unit{\micro\second}
        & N/R & N/R
        \\
        Google~\cite{google2019supremacy,bengtsson2024model,
        GoogleDevices}
        & Sycamore & Superconducting & 53 qubits & Square lattice
        & 25 ns & 30 ns & 500 ns & 200 \unit{\micro\second}
        \\
        IBM$^{*}$~\cite{IBMHeronSpec}
        & Heron & Superconducting & 156 qubits & Heavy Hexagon
        & 30 ns & 90 ns & 3 \unit{\micro\second}
        & 20 \unit{\micro\second}
    \end{tabular}
    \caption{\textbf{Estimated specifications of the quantum hardware.}
    All hardware specifications are taken from publicly available
    references, and information unavailable from these sources is marked as N/R (not reported). The reported values are approximate and may depend on
    device configuration, calibration, and circuit implementation.
    When a device supports only passive reset, we estimate the reset
    time as $10t_{1}$, where $t_{1}$ is the qubit relaxation time. Devices
    that support active reset are marked with an asterisk (*). For IBM
    Heron processors, the reset time, referred to as the repetition
    delay time, can be selected by the user~\cite{IBMupdate} between
    20--2000~\unit{\micro\second}; here, we use the lowest value
    of 20~\unit{\micro\second}. For Quantinuum Helios, a complete
    circuit layer takes approximately 40~\unit{\milli\second} when ion
    transport, cooling, and other operational overheads are included.}
    \label{tab:hardware_specification}
\end{table*}
Theorem~\ref{thm:classical_sim_adapt_vqe} separates the two ingredients
entering the surrogate. Eq.~\eqref{eq:NCT_adapt_vqe} quantifies the
runtime of the truncated Majorana Propagation algorithm, while
Eq.~\eqref{eq:adapt_majorana_error} controls the approximation error
under the randomized assumptions of Ref.~\cite{miller2025simulation}.
The latter is the step that makes the rigorous simulability statement
restricted to the randomized ADAPT-VQE setting considered here.

The zero Quantum Sample and Quantum Time costs rely separately on the
classically known Hartree--Fock input. For a generic unknown initial
state, an additional fermionic tomography or shadow protocol would have
to be included. Further details on the propagation cost and truncation
guarantee are provided in Appendix~\ref{adx:adapt_vqe}.

\section{Wall-clock time and monetary cost
\label{sec:practical_cost}}

The polynomial scalings derived in the previous sections establish that
the models considered here are classically simulable, but they do not
determine which implementation is preferable at finite system size. In
practice, there are at least two distinct questions: which approach
reaches the desired result faster, and which requires fewer monetary
resources to do so. These two comparisons need not lead to the same
answer. A faster implementation can be more expensive, while a slower
one can remain preferable when quantum hardware access is costly.

We therefore compare QS and CS from these two
complementary perspectives. First, we convert the resource counts in
Table~\ref{tab:quantum_resource} into estimated wall-clock times using
representative quantum and classical hardware specifications. We then
estimate the monetary cost associated with quantum hardware access using
publicly available pricing models. In both cases, the goal is not to
provide a hardware benchmark, but rather to identify how the different
resource scalings derived above translate into finite-size simulation
costs.

\subsection{Wall-clock time
\label{sec:wallclock_time}}

We first consider the time-to-solution. For a simulation strategy
$X\in\{{\rm QS},{\rm CS}\}$, we model the total wall-clock time as
\begin{equation}
    t_{\rm sim}^{(X)}
    =
    t_{\rm Q}^{(X)}N_{\QS}^{(X)}
    +
    t_{\rm C}N_{\CT}^{(X)},
    \label{eq:wallclock_time}
\end{equation}
where $t_{\rm Q}^{(X)}$ denotes the execution time associated with one
quantum sample and $t_{\rm C}$ the time required for one elementary
classical operation. The quantum execution time is estimated as
\begin{equation}
    t_{\rm Q}^{(X)}
    =
    t_{\rm init}
    +
    D_{\rm 1q}^{(X)}t_{\rm 1q}
    +
    D_{\rm 2q}^{(X)}t_{\rm 2q}
    +
    t_{\rm readout}
    +
    t_{\rm reset},
    \label{eq:quantum_wallclock}
\end{equation}
where $D_{\rm 1q}^{(X)}$ and $D_{\rm 2q}^{(X)}$ denote the effective
sequential one- and two-qubit gate depths after accounting for
parallelization and routing. These quantities provide a
hardware-resolved implementation of the Quantum Time complexity
$N_{\QT}$ introduced in Section~\ref{sec:framework}. The hardware
parameters used below are summarized in
Table~\ref{tab:hardware_specification}.

For passive reset, the qubits are allowed to relax naturally to their
ground state, while active reset uses additional control operations to
reduce the reset time~\cite{divincenzo2000physical}. Within the model
of Eq.~\eqref{eq:wallclock_time}, QS exhibits a
wall-clock advantage whenever
\begin{equation}
    t_{\rm sim}^{({\rm QS})}
    <
    t_{\rm sim}^{({\rm CS})}.
    \label{eq:wallclock_advantage}
\end{equation}

To evaluate this comparison, we consider three examples that realize
qualitatively different CS costs. In all cases,
expectation values are evaluated at every circuit instance. Two-qubit
gates acting on disjoint qubits are executed in parallel, while
commuting observables are measured simultaneously.

\begin{figure*}
    \centering
    \includegraphics[width=\linewidth, trim = {0, 2cm, 0, 0}]{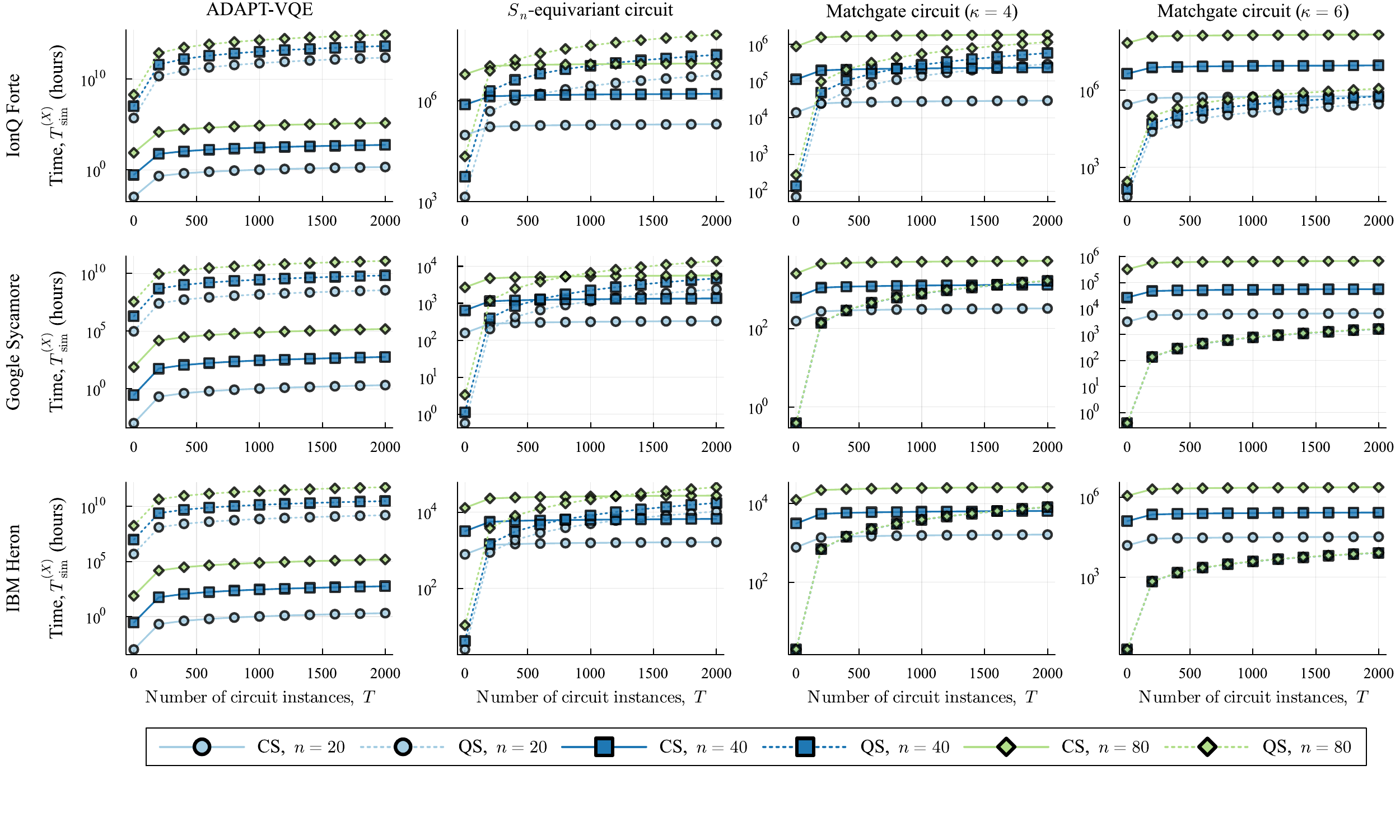}
    \caption{\textbf{Estimated wall-clock time for QS and CS
    on IonQ Forte, Google Sycamore, and IBM
    Heron.} Here, the runtime corresponds to the total runtime $t_{\rm sim}^{(X)}$ given by Eq.~\eqref{eq:wallclock_time}, which sums both the runtime on the quantum and classical hardware. Rows correspond to the three quantum hardware platforms, while all classical operations are assumed to run on the same classical hardware. The horizontal axis denotes the total number of circuit instances $T$. Columns correspond to ADAPT-VQE, $S_n$-equivariant
    circuits, and Matchgate circuits with $\kappa=4$ and $6$, while rows
    correspond to the three hardware models. Solid lines denote
    CS and dotted lines QS. Marker
    shapes indicate the system sizes $n=20,40,80$. The estimates use
    the hardware specifications in
    Table~\ref{tab:hardware_specification} and assume
    $t_{\rm C}=0.25$~ns per elementary classical operation. For
    superconducting devices, we include the routing overhead used in
    the numerical estimate arising from the fixed qubit connectivity.
    Two-qubit gates acting on disjoint qubits are assumed to be
    parallelized. Results exceeding the available device size, namely
    $n>36$ for IonQ Forte and $n>53$ for Google Sycamore, extrapolate
    the reported hardware specifications and should therefore be
    interpreted only as scaling estimates.}
    \label{fig:wallclock_time}
\end{figure*}

\begin{enumerate}[leftmargin=*]
    \item \textbf{ADAPT-VQE
    (Section~\ref{sec:adapt_vqe}).}
    This setting provides an example in which the CS
    requires no quantum data acquisition, since the Hartree--Fock
    initial state is known classically. We consider an operator pool
    containing quadratic and quartic Majorana strings,
    \begin{equation}
        \AC
        =
        \BC_2\oplus\BC_4,
        \qquad
        \abs{\AC}
        =
        \binom{2n}{2}
        +
        \binom{2n}{4}.
    \end{equation}
    The Hamiltonian is taken to be quadratic,
    \begin{equation}
        O
        =
        \sum_{\vec{\mu}\in\CC_{2n,2}}
        c_{\vec{\mu}}G_{\vec{\mu}}^2,
        \qquad
        \norm{\vec{c}}_2=1.
    \end{equation}
    For the wall-clock comparison, the horizontal axis corresponds to
    the repeated energy-evaluation count, which plays the role of
    $T_{\rm VQE}$ in Theorem~\ref{thm:classical_sim_adapt_vqe}.

    \item \textbf{$S_n$-equivariant circuit
    (Section~\ref{sec:sn_equivariant}).}
    This setting combines a one-time PI-CS data acquisition cost with
    repeated evolution of the reduced Schur blocks. We consider
    $S_n$-equivariant circuits generated by one- and two-local
    symmetrized Pauli operators. A two-body symmetrized generator
    contains $\order{n^2}$ two-qubit interactions before
    parallelization. For the observables, we consider the normalized one-local and same-axis two-local symmetrized Pauli operators, normalized such that their coefficient $\ell_1$-norm equals one, together with the global Pauli strings $P^{\otimes n}$ with $P \in \{X,Y,Z \} $. More precisely, we define
\begin{equation}
\chi = \left\{ \frac{1}{n}\sum_{i=1}^{n}P_i, \frac{2}{n(n-1)} \sum_{ i<j }P_iP_j, P^{\otimes n} \right\}.  \end{equation}
We take $O\in\chi$, where every observable in $\chi$ is normalized such that the $\ell_1$-norm of its Pauli coefficient vector satisfies $\lVert\vec{c}\rVert_1=1$.
    \item \textbf{Matchgate circuit
    (Section~\ref{sec:matchgate}).}
    This example isolates the effect of an expensive, but polynomial,
    one-time preprocessing step. We consider a homogeneous degree-$\kappa$
    Majorana observable,
    \begin{equation}
        O
        =
        \sum_{\vec{\mu}\in\CC_{2n,\kappa}}
        c_{\vec{\mu}}G_{\vec{\mu}}^\kappa,
        \qquad
        \norm{\vec{c}}_1=1,
    \end{equation}
    with $\kappa = 4$ and $\kappa = 6$. 
    Each Matchgate layer contains $\order{n}$ local two-qubit
    operations. For $\kappa=4$, the Majorana $\liea$-sim
    processing cost in Theorem~\ref{thm:classical_sim_matchgate} scales
    as $\order{Tn^{5}}$, while for $\kappa = 6$, it scales as $\order{Tn^7}$. In the following simulations, we will show how this fermionic locality strongly influences the runtime cost of the CS.

\end{enumerate}

Figure~\ref{fig:wallclock_time} shows the resulting estimates for IonQ
Forte, Google Sycamore, and IBM Heron, with
$\epsilon=0.001$ and $\delta=0.001$. 
 For the classical
computation, we assume an elementary operation time of
$t_{\rm C}=0.25$~ns. Here we can see that classical simulability does
not imply that CS is faster at finite system size.
For ADAPT-VQE, the CS is faster throughout the
parameter range considered here. In this case, the Hartree--Fock input
state is known classically, so the surrogate incurs neither quantum
data acquisition nor quantum circuit execution. Therefore, the runtime only consists of the runtime on the classical hardware. The $S_n$-equivariant
example similarly benefits from the reduced Schur-block representation,
although the one-time PI-CS data acquisition becomes increasingly
relevant as the system size grows.  

The Matchgate example exhibits qualitatively different regimes depending on the fermionic locality $\kappa$. Before discussing this dependence, we note that, for Google Sycamore and IBM Heron, the QS curves corresponding to different system sizes nearly overlap. Under the normalization $\norm{\vec{c}}_1=1$, $N_{\QS}$ is independent of $n$, as are the readout and reset times incurred for each sample. The only $n$-dependent contribution arises from the quantum circuit depth, $L$, which scales as $\order{n}$ in the present setting. However, because the two-qubit gate times for these devices are much shorter than their readout and reset times, this contribution remains subleading, making the curves difficult to distinguish on the logarithmic scale. The QS costs are also independent of $\kappa$, since the fermionic locality of the observable affects the corresponding CS but not the quantum circuit executed under the assumptions considered here.

Turning to the CS, for
$\kappa=4$, the $\order{Tn^{5}}$ tensor-contraction cost required to backpropagate the observable in 
the Majorana $\liea$-sim implementation rapidly becomes the dominant
classical cost. Nevertheless, as $T$ increases, the repeated sampling and circuit-execution costs of QS eventually dominate, while the one-time quantum data acquisition cost of CS is amortized over the circuit instances. For $\kappa=6$, this backpropagation cost increases to $\order{Tn^7}$, and the Matchgate shadow sample and post-processing costs also exhibit a stronger dependence on $n$. Over the parameter ranges considered here, both the quantum data-acquisition and classical processing contributions to CS can exceed the total runtime of QS.

These examples demonstrate that the finite-size runtime depends strongly on the fermionic locality of the measured observable. In particular, for the largest system sizes considered
in Fig.~\ref{fig:wallclock_time}, QS can be faster than
the corresponding CS despite the latter having
polynomial asymptotic complexity. At intermediate system sizes, the
relative advantage depends on both the hardware platform and the
number of circuit instances. The relevant distinction is consequently not only whether the CS has polynomial runtime, but also the polynomial degree of its dominant cost and whether that cost is incurred once or at every circuit evaluation.

As a final remark, we note that, in all cases, the coefficient-dependent prefactor $\norm{\vec{c}}_1^2$ appearing in the Quantum Sample complexity is set to unity for simplicity. For observables with larger coefficient norms, the absolute sampling requirements would increase accordingly by the corresponding constant factor.  Nevertheless, since the same dependence enters $N_{\QS}$ of both QS and CS, as summarized in Table~\ref{tab:quantum_resource}, this choice provides a controlled comparison of their sampling costs. However, the purely classical preprocessing and surrogate evaluation contributions to $N_{\CT}$ do not generally scale with $\norm{\vec{c}}_1$. Varying this norm can therefore change the balance between quantum sampling and classical computation, thereby shifting the quantitative crossover between the two implementations.

\subsection{Monetary cost of quantum hardware access
\label{sec:monetary_cost}}

Wall-clock time captures only one aspect of the practical resource
comparison. Access to present-day quantum hardware also carries a
monetary cost that depends strongly on the number of circuit
executions. Consequently, an implementation that reaches the solution
more quickly need not be the least expensive one.

The pricing models of the hardware platforms considered here are
summarized in Table~\ref{tab:pricing}. Some providers charge a fixed
task fee together with a price per circuit shot, while others charge
according to the total QPU execution time. In either case, the Quantum
Sample complexity directly controls a major contribution to the
monetary cost.

\renewcommand{\arraystretch}{1.5}
\begin{table}[t]
    \centering
    \begin{tabular}{c|c|c|c}
        Provider & QPU family & Per-task price & Per-shot price \\
        \hline
        IonQ~\cite{amazon_braket_pricing}
        & Forte & \$0.3 & \$0.08000
        \\
        IQM~\cite{amazon_braket_pricing}
        & Garnet & \$0.3 & \$0.00145
        \\
        Rigetti~\cite{amazon_braket_pricing}
        & Cepheus & \$0.3 & \$0.000425
        \\
        Quantinuum
        & Helios & \multicolumn{2}{c}{Non uniform}
        \\
        IBM~\cite{ibm_pricing}
        & Heron & \multicolumn{2}{c}{\$48/minute (Premium Plan)}
        \\
        Google
        & Sycamore & \multicolumn{2}{c}{Not publicly available}
    \end{tabular}
    \caption{\textbf{Pricing details for the quantum hardware platforms
    considered in this study.} IonQ Forte, IQM Garnet, and Rigetti
    Cepheus are accessible through Amazon Braket and use a
    task-plus-shot pricing model. IBM Heron is priced according to QPU
    execution time under the Premium Plan. All pricing information was retrieved in August 2026, at the time of manuscript preparation.}
    \label{tab:pricing}
\end{table}

\begin{figure*}[t]
    \centering
    \includegraphics[width=\linewidth]{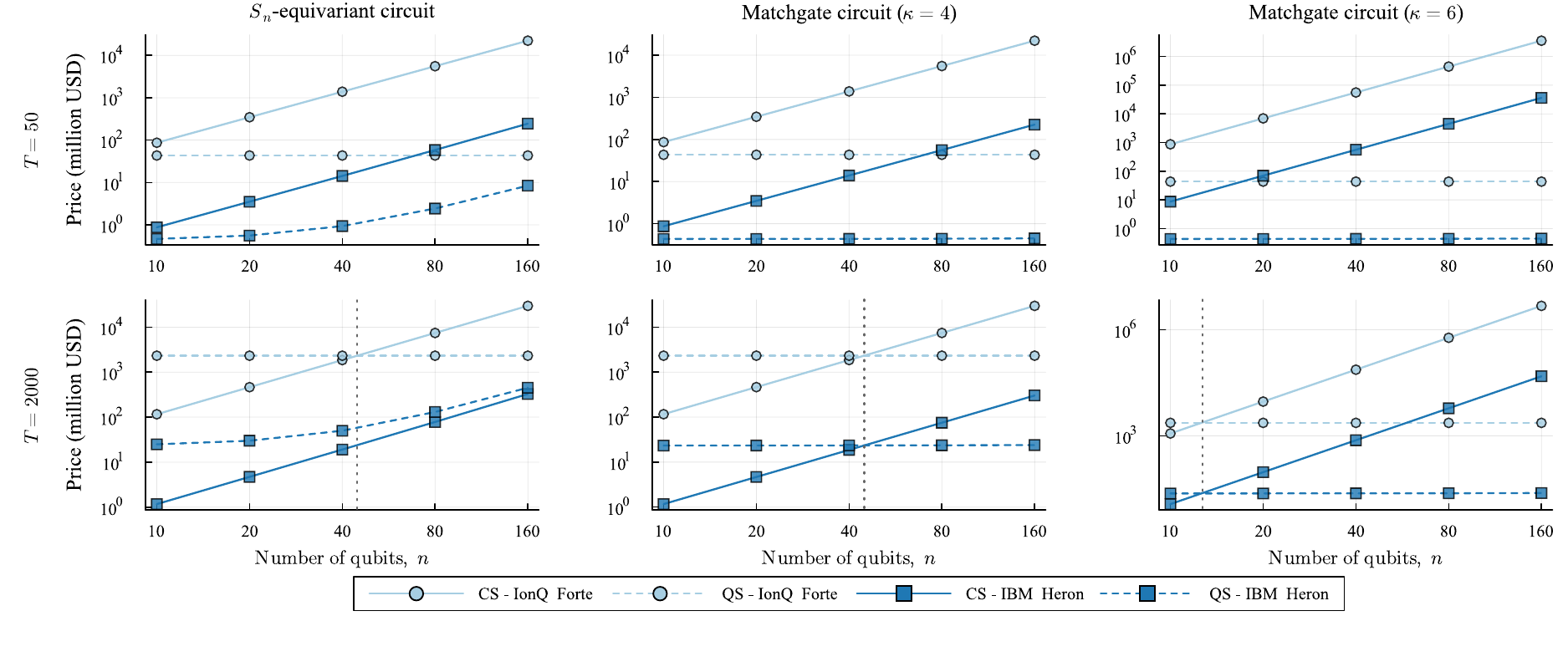}
    \caption{\textbf{Estimated quantum hardware access cost of QS and CS over $T = 50$ and $T=2000$ circuit instances for IonQ Forte and IBM Heron.}
    We consider the $S_n$-equivariant and Matchgate examples with $\kappa = 4$ and $\kappa = 6$ described
    above, using the publicly available pricing data summarized in
    Table~\ref{tab:pricing}. The plotted values account for
    quantum hardware access and do not include the monetary cost of the
    classical computation. Gray dashed lines indicate the crossover points between QS and CS. Results exceeding the available device size, namely
    $n>36$ for IonQ Forte and $n>156$ for IBM Heron, extrapolate
    the reported hardware pricing and should therefore be
    interpreted only as scaling estimates. }
    \label{fig:pricing}
\end{figure*}
For a platform employing a per-shot pricing model, the quantum hardware
cost of a simulation strategy $X$ can be written schematically as
\begin{equation}
    C_{\rm QPU}^{(X)}
    =
    N_{\rm task}^{(X)}C_{\rm task}
    +
    N_{\QS}^{(X)}C_{\rm shot},
    \label{eq:monetary_cost_shot}
\end{equation}
where $N_{\rm task}^{(X)}$ is the number of submitted tasks, $C_{\rm task}$ is the cost per task and $C_{\rm shot}$ is the cost per shot for each of the tasks. Here, the tasks refer to the different circuit designs that need to be evaluated on the quantum hardware.  For
time-based pricing, the corresponding cost is proportional to the total
quantum execution time,
\begin{equation}
    C_{\rm QPU}^{(X)}
    =
    R_{\rm QPU}
    t_{\rm Q}^{(X)}
    N_{\QS}^{(X)},
    \label{eq:monetary_cost_time}
\end{equation}
where $R_{\rm QPU}$ denotes the monetary rate per unit QPU time.

Importantly, Eqs.~\eqref{eq:monetary_cost_shot} and
\eqref{eq:monetary_cost_time} quantify the monetary cost of
\emph{quantum hardware access}. We do not assign an explicit monetary
price to the classical operations entering $N_{\CT}$, since this cost
depends strongly on the CPU or GPU platform, memory requirements,
parallelization, and whether the classical hardware is locally owned or
accessed through a cloud service. For simplicity, we assume that classical compute time is free and executed on hardware one already owns. Thus, the comparison below should not
be interpreted as a complete accounting of total economic cost.

The ADAPT-VQE Classical Simulation considered above requires no access
to quantum hardware, since the Hartree--Fock initial state is known
classically. Its QPU-access cost is therefore zero under our
assumptions. We consequently focus the nontrivial monetary comparison
on the $S_n$-equivariant and Matchgate examples, for which the Classical
Simulation itself requires an initial quantum data-acquisition stage.

Figure~\ref{fig:pricing} shows that the monetary comparison need not
follow the wall-clock comparison. In particular, CS
can reduce the repeated quantum execution time while still requiring a
large one-time quantum data-acquisition budget, particularly exhibiting consistently higher cost than the quantum simulation for a smaller number of circuit instances, $T = 50$, while for larger $T = 2000$, this crossover between quantum and classical simulation occurs at the higher number of qubits. For the
$S_n$-equivariant example, this produces regimes in which the reduced
Schur-block simulation is favorable in wall-clock time, while the
deep PI-CS acquisition can make the corresponding quantum hardware
access more expensive. Thus, the faster implementation need not be the
cheaper one.

The same competition appears for Matchgate circuits, but for a
different reason. For $\kappa = 4$, the wall-clock time is eventually dominated by
the $\order{Tn^{5}}$ tensor-contraction cost required to backpropagate the observable 
whereas the monetary cost of the CS is controlled by
the Matchgate shadow sample complexity rather than by this classical
processing. The two comparisons therefore probe different
bottlenecks. A method can be unfavorable in wall-clock time because of
a large classical polynomial while remaining competitive in
quantum hardware access, or conversely incur a large quantum
data acquisition cost despite a fast subsequent classical simulation.

Increasing the fermionic locality to $\kappa=6$ strengthens both effects: the backpropagation cost increases to $\order{Tn^7}$, while the Matchgate shadow sample complexity acquires a stronger dependence on $n$. Therefore, the quantum hardware access cost of CS exceeds that of QS at a smaller number of qubits than for $\kappa=4$. For $T=50$, CS is consistently more expensive over all system sizes considered.

Taken together, Figs.~\ref{fig:wallclock_time} and
\ref{fig:pricing} show that classical simulability, wall-clock
advantage, and monetary advantage are distinct notions. A polynomial
CS need not be the fastest implementation at the
system sizes of interest, while a faster implementation need not be the
least expensive one. Which approach is preferable therefore depends
not only on the asymptotic scaling, but also on how the resources are
distributed between one-time preprocessing, repeated circuit
evaluation, quantum sampling, and quantum hardware access. For completeness,
Fig.~\ref{fig:pricing_adx} in Appendix~\ref{adx:monetary_cost_adx} provides the full pricing comparison
across all quantum hardware platforms listed in Table~\ref{tab:pricing}.

\subsection{Limitations of the hardware-level comparison}

The estimates above are intended as controlled resource comparisons rather than direct hardware benchmarks and therefore rely on several simplifying assumptions.

First, in evaluating the quantum execution time, we retain the dominant contributions $D_{\rm 2q}^{(X)}t_{\rm 2q}$, $t_{\rm readout}$, and $t_{\rm reset}$ in Eq.~\eqref{eq:quantum_wallclock}, while neglecting $t_{\rm init}$ and $D_{\rm 1q}^{(X)}t_{\rm 1q}$. Although this approximation is reasonable for the leading-order comparison, omitting the single-qubit contribution can affect more precise runtime estimates. As shown in Table~\ref{tab:hardware_specification}, $t_{\rm 1q}$ and $t_{\rm 2q}$ are comparable for some devices; depending on the corresponding compiled circuit depths, $D_{\rm 1q}^{(X)}t_{\rm 1q}$ may therefore be non-negligible relative to $D_{\rm 2q}^{(X)}t_{\rm 2q}$ and should be included in a detailed device-level analysis. These estimates also omit constant factors in the asymptotic scaling
and several implementation-dependent costs, including compiler-specific
gate merging and routing, queueing and control overheads, error
mitigation, and quantum error correction~\cite{ruiz2025quantum}. For
example, the QCCD architecture used in Quantinuum's Helios device
incorporates ion shuttling, batching, routing, and cooling into its
layer execution time~\cite{ransford2025helios}, making a decomposition
based only on elementary gate times insufficient.

More specifically, we emphasize that Eq.~\eqref{eq:wallclock_time} measures the total busy time obtained when a single quantum processor and a single classical processing stream are used serially. It should therefore not be interpreted as a time-to-solution estimate in the presence of parallel computational resources. Both the quantum and classical device contributions admit parallelization. In particular, independent quantum samples can be distributed across multiple QPUs, while the dense linear-algebra operations dominating $N_{\CT}$ can be parallelized across multiple CPU cores or hardware accelerators such as GPUs. The resulting speedup, however, depends on the available computational resources and the associated communication and synchronization overheads.

Moreover, those estimates neglect the effects of hardware noise. Sufficiently deep circuits may not be executable with useful accuracy on the current noisy devices considered here, while error mitigation would generally increase both the number of circuit executions and the required classical post-processing. Similarly, in a fault tolerant setting, encoding logical qubits and implementing logical gates would introduce substantial space and time overheads, with logical gate times potentially exceeding the physical gate times used in our estimates. These effects can shift the quantitative crossovers identified here, although their precise impact depends on the hardware architecture and error correction scheme.

The estimates also neglect latency and communication overheads between quantum and classical hardware. Such costs can become non-negligible for algorithms requiring repeated quantum--classical feedback, such as variational algorithms involving classical gradient computation and
parameter updates~\cite{karalekas2020quantum}.

Similarly, the classical estimate $t_{\rm C}=0.25$~ns does not account
for memory access, communication, cache effects, or the
hardware-dependent parallelization of dense linear-algebra operations.
The monetary comparison likewise does not assign a price to this
classical computation. These effects can shift the quantitative
crossovers in Figs.~\ref{fig:wallclock_time} and
\ref{fig:pricing}. They do not change the mechanism responsible for
them: QS repeatedly incurs quantum sampling and circuit
execution costs, while CS can instead incur large
one-time quantum data-acquisition or classical preprocessing costs that
are subsequently amortized over the circuit instances.

Furthermore, our estimates are based on asymptotic resource scalings and therefore omit constant prefactors. At finite system sizes, these prefactors can substantially affect the quantitative resource estimates and potentially alter the relative performance of the simulation methods. This limitation is particularly relevant to classical shadow sample-complexity bounds, which can involve large and potentially conservative prefactors.

Finally, we stress that the number of circuit evaluations $T$ need not be the same for QS and CS. For example, in variational algorithms, estimating gradients on quantum hardware through the parameter-shift rule requires multiple shifted circuit evaluations for each parameter. A differentiable classical surrogate may instead permit automatic differentiation without separately evaluating each shifted parameter setting, although the associated differentiation cost must still be included in $N_{\CT}$. Therefore, $T$ can be larger for QS than for CS, even when both methods perform the same number of optimization steps. In such cases, assuming the same value of $T$ for both implementations can underestimate the wall-clock time and quantum hardware access cost of QS relative to CS.

\section{Discussion and conclusion\label{sec:conclusion}}

In this work, we have asked whether the existence of an efficient
Classical Simulation implies that one should actually use it instead of
executing the corresponding evolution on quantum hardware. Our results
show that polynomial classical simulability alone is not sufficient to
answer this question. We analyzed Quantum and Classical Simulations
using three resource metrics, Quantum Sample, Quantum Time, and
Classical Time complexity, and derived these costs for several widely
studied classically simulable circuit families. A recurring distinction
throughout the analysis is between resources that must be spent once,
such as constructing a classical representation or acquiring
information about the input state, and those that must be paid again
for every circuit instance. As a result, two simulations that are both
efficient in the complexity-theoretic sense can have very different
costs at finite system size.

This distinction becomes particularly important when the polynomial
degree of the CS is large. Some of the methods
considered here require expensive preprocessing, followed by
comparatively inexpensive evaluations of new circuit parameters. In
contrast, QS avoids this classical preprocessing but
repeatedly incurs the cost of preparing, evolving, and measuring the
quantum state. The number of circuit instances $T$ therefore plays a
central role in determining which approach is preferable, since
one-time classical costs can eventually be amortized over repeated
evaluations. The estimates in Section~\ref{sec:practical_cost} exhibit
both regimes. They also show that an advantage in wall-clock time does
not necessarily imply an advantage in monetary cost, since repeated
access to quantum hardware can remain expensive even when the quantum
execution is faster.

Another important conclusion concerns the role of the input quantum
state. Classical simulability is generally a property of the complete
estimation problem, including the circuit $U$, the state $\rho$, and
the observable $O$, rather than of the evolution alone. When $\rho$ is
known classically, as for the Hartree--Fock state in the ADAPT-VQE
setting considered here, no quantum data acquisition is required. For
an unknown quantum state, however, the classical surrogate must be
supplied with enough information about $\rho$ to evaluate the relevant
backpropagated observables. Classical shadows become particularly
useful in this setting because the required information can often be
acquired once and reused across many circuit instances. Thus, the
efficiency of a CS can depend as much on how the
initial state is characterized as on how efficiently the evolution
itself can be propagated.

There are several important limitations to our conclusions. First, the
Classical Time complexities derived here are upper bounds for the
particular algorithms that we analyze, rather than lower bounds on the
best possible CS. A different representation or a
more specialized algorithm could therefore move, or potentially remove,
the finite-size crossovers identified here. Establishing lower bounds
on the classical resources required under fixed assumptions on
$\rho$, $U$, and $O$ would clarify whether these crossovers are
intrinsic to the estimation problem or specific to the simulation
algorithms currently available. Second, our analysis focuses on
expectation-value estimation. Efficiently estimating
$f_U(\rho,O)$ does not imply that one can efficiently sample from the
complete output distribution of the same circuit, and recent results
have identified settings in which these two tasks have different
classical complexities~\cite{recio2025train,lerch2026iqp}. Our
conclusions should therefore be interpreted for the estimation problems
defined in Section~\ref{sec:framework}.

Finally, the hardware-level analysis in
Section~\ref{sec:practical_cost} should be regarded as a finite-resource
comparison rather than a definitive hardware benchmark. Compilation,
memory and communication costs, classical parallelization, error
mitigation, and pricing models can all shift the quantitative
crossovers. Moreover, the comparison can change substantially in a
fault-tolerant setting. Error-corrected quantum computation introduces
large additional space and time overheads, increasing the cost of each
logical circuit execution and potentially favoring CS
in regimes where present-day hardware estimates suggest otherwise. A
systematic comparison including fault-tolerant resource requirements is
therefore an important direction for future work.

As a side remark, we emphasize that our resource comparison does not systematically account for the space complexity of QS and CS. Memory requirements may be particularly important for CS and can be divided into two broad categories: persistent storage that must be retained and reused across circuit instances, such as classical shadow data or a precomputed Pauli propagation surrogate, and the peak working memory required to evaluate an individual circuit instance.

For example, for a Matchgate circuit with $\kappa=4$ and $n=80$, the sample complexity estimate obtained using the parameters chosen in the previous section gives approximately $9.3\times10^{10}$ shadow samples. As a conservative illustration, retaining all outcome bit strings would require approximately $0.93$ TB under optimal bit packing, or $7.4$ TB if each single-qubit outcome is stored as an unsigned 8-bit integer, even before accounting for the measurement settings and other metadata.
Substantial working memory may also be required during observable propagation to construct an antisymmetric Majorana coefficient tensor, which contains $\order{n^\kappa}$ independent components. For $n=80$, storing one dense vector of these coefficients in 64-bit floating-point format requires approximately $0.21$ GB for $\kappa=4$ and $170$ GB for $\kappa=6$\footnote{The peak memory depends strongly on the implementation and can be reduced by exploiting sparsity or using streamed contraction strategies. These estimates should therefore be understood as representative storage costs rather than fundamental lower bounds.}. Related memory bottlenecks also arise in Pauli propagation. Pauli propagation surrogates trade a large initial investment in memory for faster repeated evaluations, and Ref.~\cite{rudolph2025pauli} notes that practical Pauli propagation simulations can exhaust $1$ TB of memory within only a few hours. This further illustrates how memory, rather than runtime, can become the limiting computational resource.

In this sense, while time is money, memory is a {wall}. Additional runtime can often be accommodated at a financial cost, whereas insufficient memory imposes a hard physical constraint on the feasibility of the simulation.

Overall, determining that a quantum evolution is classically simulable
should be viewed as the beginning, rather than the end, of the resource
comparison. The relevant question is whether the resulting polynomial
algorithm is sufficiently inexpensive under the state-access, accuracy,
and circuit-evaluation requirements of the problem. Understanding when
these costs are unavoidable, and when they can be reduced by improved
simulation or measurement protocols, will be necessary to determine
whether a classically simulable quantum computation should actually be
run classically.

\section*{Artificial Intelligence Disclosure}
The authors acknowledge the use of ChatGPT 5.6 Sol for writing and reviewing the manuscript, as well as for improving existing code. All scientific content, analyses and conclusions were independently verified by the authors.

\section*{Acknowledgments}
 We thank Sumner Hearth, Joachim Favre, and Pablo Bermejo for the valuable discussions. S.Y.C. and M.C. were supported by Laboratory Directed Research and Development (LDRD) program of Los Alamos National Laboratory (LANL) under project number 20260043DR and by the U.S. Department of Energy (DOE), Office of Science, Office of Advanced Scientific Computing Research under Contract No. DE-AC05-00OR22725 through the Accelerated Research in Quantum Computing Program MACH-Q project. 
 Z.H. acknowledges support from the Sandoz Family Foundation-Monique de Meuron program for Academic Promotion. 
 S.T. acknowledges Exchange Faculty Travel Grant: TG168033 from Chulalongkorn University, as well as funding from National Research Council of Thailand (NRCT) [grant number N42A680126]. ST further acknowledges Thailand Science research and Innovation Fund Chulalongkorn University (IND\_FF\_69\_258\_2300\_062).
 M.C. acknowledges support from LANL's ASC Beyond Moore’s Law project. This work was also supported by the U.S. DOE through a quantum computing program sponsored by the LANL Information Science \& Technology Institute.

\bibliography{quantum, accessed_webpage}

\clearpage
\newpage

\onecolumngrid
\appendix
\makeatletter
\let\savedaddcontentsline\addcontentsline
\renewcommand{\addcontentsline}[3]{%
  \def\firstargument{#1}%
  \def\tocargument{toc}%
  \ifx\firstargument\tocargument
  \else
    \savedaddcontentsline{#1}{#2}{#3}%
  \fi
}
\makeatother

\section*{Appendices}

In the Appendices, we provide additional details on the Classical Simulation methods and the derivations of the resource bounds stated in the main text.
For convenience, we also summarize the notation used in the main theorems.

\begin{table}[h]
\centering

\begin{tabular}{ll|ll}
\hline
\textbf{Symbol} & \textbf{Description} & \textbf{Symbol} & \textbf{Description} \\
\hline

$N_{\QS}$ & Quantum Sample complexity 
& $N_{\CT}$ & Classical Time complexity \\

$N_{\QT}$ & Quantum Time complexity & 
$N_{\rm aux}$ & Application-dependent classical operations in QS \\ 

$M$ & Number of orthogonal operators in observable & 
$L$ & Circuit depth \\

$T$ & Number of circuit instances &
$f_U(\rho, O)$ & Expectation value \\

$\epsilon, \delta$ & Error and failure probability 
& $\epsilon_{\rm QST}$ & Quantum Schur transform implementation error \\

$P_{\vec{\al}}$ & Pauli string & 
$\gamma_{\vec{\mu}}$ & Majorana string \\

$k$ & Pauli locality & 
$\kappa$ &  Fermionic locality \\

$h$ & Hamming weight 
& $\omega$ & Matrix multiplication exponent \\

$\tau$ & Propagation truncation threshold
& $T_{\rm VQE}$ & Total ADAPT-VQE energy evaluations \\
\hline
\end{tabular}
\caption{\textbf{Summary of notation used in the main resource bounds.} Additional notation is introduced throughout the paper as needed.}
\label{tab:notation}

\end{table}
\section{\label{adx:qsim_quantum_resource}Quantum Sample complexity of Quantum Simulation: proof of Theorem~\ref{thm:Nqs_quantum}}
\subsection{Single observable}

We begin with a single Pauli observable $P$ and a quantum state $\rho$.
Let
\begin{equation}
    P
    =
    \sum_i a_i\ketbraq{a_i},
\end{equation}
where $a_i\in\{-1,1\}$. From $N$ independent measurements of $P$, we
define
\begin{equation}
    \langle\widehat P\rangle
    =
    \frac{1}{N}
    \sum_{m=1}^{N}\lambda_m,
\end{equation}
where $\lambda_m$ is the $m$-th measurement outcome. This estimator is
unbiased,
\begin{equation}
    \Ebb[\langle\widehat P\rangle]
    =
    \langle P\rangle,
\end{equation}
and, using independence of the samples,
\begin{equation}
    \Var[\langle\widehat P\rangle]
    =
    \frac{\Var[P]}{N}
    =
    \frac{
        \Tr[\rho P^2]
        -
        \Tr[\rho P]^2
    }{N}.
\end{equation}
Chebyshev's inequality therefore gives
\begin{equation}
    \Pr\left[
        \abs{
            \langle\widehat P\rangle
            -
            \langle P\rangle
        }
        >
        \epsilon
    \right]
    \leq
    \frac{\Var[P]}{N\epsilon^2}.
\end{equation}
Hence, it suffices to take
\begin{equation}
    N
    \geq
    \frac{\Var[P]}{\epsilon^2\delta}
\end{equation}
to estimate $\langle P\rangle$ to additive error $\epsilon$ with
success probability at least $1-\delta$. For a Pauli observable,
$\Var[P]\leq1$.

\subsection{\label{sec:sum_observables} Sum of observables}

We now consider an observable
\begin{equation}
    O = \sum_{\alpha=1}^{M} c_\alpha P_\alpha ,
\end{equation}
given as a linear combination of $M$ Pauli observables
$P_\alpha \in \mathcal{P}$ with nonzero coefficients
$c_\alpha \in \mathbb{R}$. We denote by
$\langle \hat{O} \rangle$ the estimator of $\langle O\rangle$,
defined as
\begin{equation}
    \langle \hat{O} \rangle
    =
    \sum_\alpha c_\alpha \langle \hat{P}_\alpha \rangle
    =
    \sum_\alpha
    \frac{c_\alpha}{N_\alpha}
    \sum_{m=1}^{N_\alpha}
    \lambda_m^\alpha ,
    \label{eq:ohat_sample}
\end{equation}
where $\lambda_m^\alpha \in \{-1,1\}$ is the $m$-th
measurement outcome of $P_\alpha$, and $N_\alpha$ is the
number of independent measurements allocated to that
observable. The total number of measurements is
$N=\sum_\alpha N_\alpha$.

Since the measurements are independent, the estimator is
unbiased,
\begin{equation}
    \mathbb{E}\left[\langle\hat{O}\rangle\right]
    =
    \langle O\rangle ,
\end{equation}
and its variance satisfies
\begin{align}
    \Var\left[\langle\hat{O}\rangle\right]
    &=
    \sum_\alpha
    \frac{c_\alpha^2}{N_\alpha}
    \Var[P_\alpha]
    \nonumber\\
    &\leq
    \sum_\alpha
    \frac{c_\alpha^2}{N_\alpha},
    \label{eq:var_O}
\end{align}
where we used $\Var[P_\alpha]\leq 1$ for Pauli observables.

For a fixed total measurement budget $N$, we employ the
weighted measurement strategy
\begin{equation}
    N_\alpha
    =
    N\frac{|c_\alpha|}{\|\vec{c}\|_1},
    \qquad
    \|\vec{c}\|_1 = \sum_\alpha |c_\alpha| ,
    \label{eq:weighted_shots}
\end{equation}
as in Eq.~\eqref{eq:weighted_measurement_main}.
Ignoring the integer rounding of $N_\alpha$, which does not
affect the asymptotic scaling, substitution into
Eq.~\eqref{eq:var_O} gives
\begin{equation}
    \Var\left[\langle\hat{O}\rangle\right]
    \leq
    \frac{\|\vec{c}\|_1^2}{N}.
    \label{eq:weighted_variance}
\end{equation}
In fact, the allocation in Eq.~\eqref{eq:weighted_shots}
minimizes the state-independent upper bound in
Eq.~\eqref{eq:var_O}. Indeed, by the Cauchy--Schwarz
inequality,
\begin{equation}
    \|\vec{c}\|_1^2
    =
    \left(
        \sum_\alpha
        \frac{|c_\alpha|}{\sqrt{N_\alpha}}
        \sqrt{N_\alpha}
    \right)^2
    \leq
    N
    \sum_\alpha\frac{c_\alpha^2}{N_\alpha},
\end{equation}
with equality when $N_\alpha\propto |c_\alpha|$.

We next derive a high-probability bound using Hoeffding's
inequality. Define the independent random variables
\begin{equation}
    X_{\alpha,m}
    =
    \frac{c_\alpha}{N_\alpha}\lambda_m^\alpha ,
\end{equation}
such that
$\langle\hat{O}\rangle=\sum_{\alpha,m}X_{\alpha,m}$.
Under the weighted allocation in
Eq.~\eqref{eq:weighted_shots}, each random variable is
bounded as
\begin{equation}
    X_{\alpha,m}
    \in
    \left[
        -\frac{\|\vec{c}\|_1}{N},
        \frac{\|\vec{c}\|_1}{N}
    \right].
\end{equation}
Since there are $N$ measurement outcomes in total,
Hoeffding's inequality yields
\begin{align}
    \Pr\left[
        \left|
            \langle\hat{O}\rangle-\langle O\rangle
        \right|>\epsilon
    \right]
    &\leq
    2\exp\left[
        -\frac{2\epsilon^2}
        {N\left(2\|\vec{c}\|_1/N\right)^2}
    \right]
    \nonumber\\
    &=
    2\exp\left[
        -\frac{N\epsilon^2}
        {2\|\vec{c}\|_1^2}
    \right].
    \label{eq:hoeffding_weighted}
\end{align}
Therefore, it suffices to take
\begin{equation}
    N
    \geq
    \frac{2\|\vec{c}\|_1^2}{\epsilon^2}
    \log\left(\frac{2}{\delta}\right)
    \label{eq:N_hoeffding_weighted}
\end{equation}
to estimate $\langle O\rangle$ up to additive error
$\epsilon$ with success probability at least $1-\delta$.

\subsection{Collection of evolutions}

We finally extend the result to the collection of $T$
evolutions $\{U^{(i)}\}_{i=1}^{T}$ considered in the main text.
For each evolution, let
\begin{equation}
    f_{U^{(i)}}(\rho,O)
    =
    \Tr[U^{(i)}\rho (U^{(i)})^\dagger O]
\end{equation}
and denote its estimator by $\widetilde{f}_{U^{(i)}}(\rho,O)$.
Applying Eq.~\eqref{eq:N_hoeffding_weighted} with failure
probability $\delta/T$ gives
\begin{equation}
    N^{(i)}
    \geq
    \frac{2\|\vec{c}\|_1^2}{\epsilon^2}
    \log\left(\frac{2T}{\delta}\right)
\end{equation}
measurements for each $U^{(i)}$. A union bound over the $T$
evolutions then guarantees
\begin{equation}
    \Pr\left[
        \bigcap_{i=1}^{T}
        \left\{
        \left|
        f_{U^{(i)}}(\rho,O)
        -
        \widetilde{f}_{U^{(i)}}(\rho,O)
        \right|
        \leq \epsilon
        \right\}
    \right]
    \geq 1-\delta .
\end{equation}
Hence, the total Quantum Sample complexity satisfies
\begin{equation}
    N_{\rm Q.S.}
    \geq
    \frac{2T\|\vec{c}\|_1^2}{\epsilon^2}
    \log\left(\frac{2T}{\delta}\right),
    \label{eq:QS_sample_complexity}
\end{equation}
or, suppressing constant factors,
\begin{equation}
    N_{\rm Q.S.}
    \in
    \mathcal{O}\left(
        \frac{T\|\vec{c}\|_1^2}{\epsilon^2}
        \log\left(\frac{T}{\delta}\right)
    \right).
\end{equation}
This proves Theorem~\ref{thm:Nqs_quantum}.

\subsection{Measurement of commuting observables}

The bound above assumes that the Pauli terms are measured independently.
When subsets of terms commute and admit a common measurement basis, the
same analysis can be applied at the level of commuting groups. Let
\begin{equation}
    O
    =
    \sum_{g=1}^{G}
    \sum_{\beta=1}^{M_g}
    c_{g,\beta}P_{g,\beta},
    \label{eq:O_commute}
\end{equation}
where the operators within each group $g$ are measured simultaneously.
For one measurement of group $g$, define the random variable
\begin{equation}
    Y_g
    =
    \sum_{\beta=1}^{M_g}
    c_{g,\beta}\lambda_{g,\beta},
\end{equation}
where $\lambda_{g,\beta}\in\{-1,1\}$ are the simultaneous measurement
outcomes. Since
\begin{equation}
    \abs{Y_g}
    \leq
    s_g,
    \qquad
    s_g
    :=
    \sum_{\beta=1}^{M_g}\abs{c_{g,\beta}}
    =
    \norm{\vec{c}_g}_1,
\end{equation}
the estimator obtained from $N_g$ measurements of each group satisfies
\begin{equation}
    \Var[\langle\hat O\rangle]
    \leq
    \sum_{g=1}^{G}\frac{s_g^2}{N_g}.
\end{equation}
For a fixed total budget $N=\sum_gN_g$, we again use the weighted
allocation
\begin{equation}
    N_g
    =
    N\frac{s_g}{\sum_{g'}s_{g'}}.
\end{equation}
This gives
\begin{equation}
    \Var[\langle\hat O\rangle]
    \leq
    \frac{\left(\sum_gs_g\right)^2}{N}.
\end{equation}

The same allocation also gives a direct Hoeffding bound. Each
single-shot contribution $Y_g/N_g$ lies in an interval of width
$2(\sum_{g'}s_{g'})/N$, and there are $N$ independent samples in total.
Therefore,
\begin{equation}
    \Pr\left[
        \abs{\langle\hat O\rangle-\langle O\rangle}
        >
        \epsilon
    \right]
    \leq
    2\exp\left[
        -\frac{N\epsilon^2}
        {2\left(\sum_gs_g\right)^2}
    \right].
\end{equation}
It consequently suffices to take
\begin{equation}
    N
    \geq
    \frac{2}{\epsilon^2}
    \left(\sum_{g=1}^{G}\norm{\vec{c}_g}_1\right)^2
    \log\left(\frac{2}{\delta}\right).
    \label{eq:N_commuting_groups}
\end{equation}
Since
$\sum_g\norm{\vec{c}_g}_1=\norm{\vec{c}}_1$, this recovers the same
worst-case coefficient dependence as Eq.~\eqref{eq:N_hoeffding_weighted}.
Commuting measurements can nevertheless reduce the realized number of
distinct circuit settings and can yield tighter bounds when the actual
spectral ranges of the grouped observables are used.

\section{Classical shadows
\label{sec:classical_shadow}}

We briefly review the classical shadow framework used throughout the
appendices~\cite{huang2020predicting,sack2022avoiding,
zhao2021fermionic,bertoni2024shallow}. Let $\VC$ denote an ensemble of
measurement unitaries with distribution $\mu(V)$, and let
$\mathbf P=\{\Pi_x\}_x$ be a projective measurement satisfying
$\sum_x\Pi_x=\eye$. The associated measurement channel is
\begin{equation}
    \MC_{\VC}(\rho)
    =
    \sum_x
    \int_{V\sim\VC}
    \mu(V)
    \Tr[\Pi_xV\rho V^\dagger]
    V^\dagger\Pi_xV.
    \label{eq:generic_shadow_channel}
\end{equation}
We follow the convention of Ref.~\cite{sauvage2024classical}, in which
the measurement projectors are included explicitly in the channel
definition. The corresponding visible operator space is
\begin{equation}
    \LC_{\rm vis}
    =
    \spn\left\{
        V^\dagger\Pi_xV
        \;\middle|\;
        V\in\VC,\ \Pi_x\in\mathbf P
    \right\}.
    \label{eq:visible_space}
\end{equation}
For the standard computational-basis measurement,
\begin{equation}
    \mathbf P_{\rm cb}
    =
    \left\{
        \ketbraq{z}
    \right\}_{z\in\{0,1\}^n}.
    \label{eq:Pi_computational}
\end{equation}

For observables in $\LC_{\rm vis}$, a single measurement outcome
$(V,x)$ defines the snapshot
\begin{equation}
    \widehat\rho_{V,x}
    =
    \MC_{\VC}^{-1}
    \left(
        V^\dagger\Pi_xV
    \right),
\end{equation}
where the inverse is understood on the visible subspace whenever the
measurement channel is not invertible on the full operator space. For
every $O\in\LC_{\rm vis}$,
\begin{equation}
    \Tr[\rho O]
    =
    \Ebb_{V,x}
    \left[
        \Tr[
            \widehat\rho_{V,x}O
        ]
    \right].
\end{equation}
Repeating the procedure $N$ times produces the classical shadow. In
practice, expectation values can be estimated using the empirical
average or a median-of-means estimator.

The variance is controlled by the shadow norm. For an observable $O$,
we use
\begin{equation}
    \norm{O}_{\rm shadow}^2
    :=
    \max_{\sigma:\,{\rm state}}
    \Ebb_{V,x\sim p_\sigma}
    \left[
        \Tr[
            \widehat\rho_{V,x}O
        ]^2
    \right],
    \label{eq:shadow_norm}
\end{equation}
where
$p_\sigma(V,x)=\mu(V)\Tr[\Pi_xV\sigma V^\dagger]$. This gives the
standard simultaneous-estimation bound.

\begin{theorem}[Sample complexity of classical shadows
~\cite{huang2020predicting}]
\label{thm:classical_shadow}
Let $\{O_i\}_{i=1}^{C}\subseteq\LC_{\rm vis}$ be a collection of
observables. To estimate all $\Tr[\rho O_i]$ to additive error
$\epsilon$, with failure probability at most $\delta$, it suffices to
take
\begin{equation}
    N_{\QS}
    \in
    \order{
        \frac{\log(C/\delta)}{\epsilon^2}
        \max_{i=1,\ldots,C}
        \norm{O_i}_{\rm shadow}^2
    }.
    \label{eq:shadow_complexity}
\end{equation}
\end{theorem}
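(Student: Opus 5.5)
The plan is to follow the standard Huang--Kueng--Preskill argument: the snapshots give unbiased estimators with controlled second moments, and a median-of-means estimator turns that into exponentially small failure probability per observable. A union bound over the $C$ observables then yields the $\log(C/\delta)$ dependence in Eq.~\eqref{eq:shadow_complexity}.

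\textbf{Step 1 (unbiasedness and variance).} For a fixed $O_i\in\LC_{\rm vis}$, set $X_i=\Tr[\widehat\rho_{V,x}O_i]$ with $(V,x)\sim p_\rho$. By construction of $\MC_{\VC}^{-1}$ on the visible subspace, $\Ebb[X_i]=\Tr[\rho O_i]$. Here I have to check that the pseudo-inverse is self-adjoint with respect to the Hilbert--Schmidt inner product on $\LC_{\rm vis}$. This lets me write $\Tr[\MC^{-1}(V^\dagger\Pi_xV)O_i]=\Tr[V^\dagger\Pi_xV\,\MC^{-1}(O_i)]$, and averaging gives $\Tr[\MC(\rho)\MC^{-1}(O_i)]=\Tr[\rho O_i]$. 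For the second moment, $\Var[X_i]\le\Ebb[X_i^2]\le\norm{O_i}_{\rm shadow}^2$, directly from Eq.~\eqref{eq:shadow_norm} with $\sigma=\rho$.

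\textbf{Step 2 (median of means).} Split the $N$ snapshots into $K$ batches of size $B=\lceil 34\,\max_i\norm{O_i}_{\rm shadow}^2/\epsilon^2\rceil$ and compute the empirical mean of each batch. By Chebyshev, each batch mean deviates from $\Tr[\rho O_i]$ by more than $\epsilon$ with probability at most $1/34<1/4$. The median of the $K$ batch means fails only if at least $K/2$ batches fail. A Chernoff/Hoeffding bound on the number of failing batches makes this probability at most $2e^{-K/2}$ (any $e^{-cK}$ suffices).

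\textbf{Step 3 (union bound).} Choosing $K=2\log(2C/\delta)$ and taking a union bound over $i=1,\dots,C$ gives simultaneous success with probability at least $1-\delta$. The same snapshots are reused for every observable. The total is $N=KB\in\order{\log(C/\delta)\max_i\norm{O_i}_{\rm shadow}^2/\epsilon^2}$, as claimed.

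The only delicate point is Step 1 when $\MC_{\VC}$ is not invertible on the full operator space. There I must argue that $\MC_{\VC}$ maps $\LC_{\rm vis}$ into itself and is invertible there (it is positive semidefinite in the HS inner product, with range equal to $\LC_{\rm vis}$). I also need that the component of $\rho$ orthogonal to $\LC_{\rm vis}$ does not contribute to $\Tr[\rho O_i]$, since $O_i\in\LC_{\rm vis}$. The remaining steps are routine concentration arguments.
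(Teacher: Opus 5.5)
Your proposal is correct and is the standard Huang--Kueng--Preskill median-of-means argument that the paper relies on by citation; the paper gives no independent proof. The steps are: unbiasedness via the self-adjoint inverse of $\MC_{\VC}$ on $\LC_{\rm vis}$, a second moment bounded directly by the paper's definition of $\norm{O_i}_{\rm shadow}^2$, Chebyshev per batch, exponential concentration of the median, and a union bound over the $C$ observables. One minor point: plain Hoeffding with per-batch failure probability $1/34$ yields $e^{-2K(8/17)^2}$ rather than $2e^{-K/2}$ (the relative-entropy Chernoff bound does give the latter), but as you note, any $e^{-cK}$ suffices for the stated $\order{\cdot}$ scaling.
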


Two standard choices are global Clifford measurements and randomized
single-qubit Pauli measurements. For global Clifford measurements, the
inverse channel is
\begin{equation}
    \MC_{\VC}^{-1}[X]
    =
    (2^n+1)X-\Tr[X]\eye.
\end{equation}
For local Pauli measurements, the channel factorizes over the qubits and
only single-qubit basis rotations are required before computational-basis
readout.

For an arbitrary observable supported on at most $k$ qubits, local
Pauli measurements give the sufficient bound
\begin{equation}
    \norm{O}_{\rm shadow}^2
    \leq
    4^k\norm{O}_\infty^2.
\end{equation}
Hence,

\begin{corollary}[Local Pauli classical shadows
~\cite{sack2022avoiding}]
\label{thm:classical_shadow_pauli}
Let $\{O_i\}_{i=1}^{C}$ be a collection of observables supported on at
most $k$ qubits. Randomized single-qubit Pauli measurements estimate all
$C$ expectation values to additive error $\epsilon$, with failure
probability at most $\delta$, using
\begin{equation}
    N_{\QS}
    \in
    \order{
        \frac{4^k}{\epsilon^2}
        \log\left(\frac{C}{\delta}\right)
        \max_i\norm{O_i}_\infty^2
    }.
    \label{eq:shadow_complexity_pauli}
\end{equation}
\end{corollary}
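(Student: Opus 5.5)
The plan is to reduce Corollary~\ref{thm:classical_shadow_pauli} to Theorem~\ref{thm:classical_shadow}. It then suffices to show the local shadow-norm bound $\norm{O}_{\rm shadow}^2\leq 4^k\norm{O}_\infty^2$ for every observable supported on at most $k$ qubits. Substituting this bound into Eq.~\eqref{eq:shadow_complexity} and taking the maximum over $i$ gives Eq.~\eqref{eq:shadow_complexity_pauli} directly.

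\textbf{Reduction to the support.} Take the ensemble $\VC$ to be products of single-qubit rotations into the $X$, $Y$ or $Z$ basis, each chosen uniformly, followed by computational-basis readout. The measurement channel then factorizes as $\MC_{\VC}=\bigotimes_{j=1}^n\MC_1$, with $\MC_1(X)=(X+\Tr[X]\eye)/3$ and $\MC_1^{-1}(X)=3X-\Tr[X]\eye$. The snapshot $\widehat\rho_{V,x}$ is therefore a tensor product of single-qubit factors $3V_j^\dagger\ketbraq{x_j}V_j-\eye$, so the channel is invertible and every observable is visible. Since each factor has unit trace, for $O$ supported on a set $S$ with $|S|\leq k$ only the factors on $S$ contribute to $\Tr[\widehat\rho_{V,x}O]$. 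Thus the shadow norm in Eq.~\eqref{eq:shadow_norm} is computed in the $k$-qubit reduced problem, with the maximization running over reduced states $\sigma_S$.

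\textbf{Bounding the second moment.} First expand $O=\sum_{Q}a_QQ$ in Pauli strings on $S$. A single Pauli $Q$ of weight $w$ has snapshot value $\pm 3^{w}$ when every basis choice on $\mathrm{supp}(Q)$ matches $Q$, which happens with probability $3^{-w}$, and $0$ otherwise. This gives $\Ebb[\Tr[\widehat\rho Q]^2]=3^{w}$. For a general $O$ I would avoid tracking cross terms and instead use a crude estimate. Each single-qubit factor $3V^\dagger\ketbraq{x}V-\eye$ has operator norm $2$, so $\norm{\widehat\rho_S}_1\leq 2^k\cdot$ (trace norm bounded by dimension factors). More carefully, $\abs{\Tr[\widehat\rho_S O]}\leq\norm{\widehat\rho_S}_1\norm{O}_\infty$, and $\norm{\widehat\rho_S}_1=\prod_j\norm{3\ketbraq{\phi_j}-\eye}_1=\prod_j(2+1)=3^k$. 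That yields $9^k$, which is worse than the target.

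\textbf{Main obstacle: sharpening $9^k$ to $4^k$.} This is where the argument needs care. The plan is the standard route of Huang--Kueng--Preskill: write $\Ebb[\Tr[\widehat\rho O]^2]=\Tr[\sigma\,\Ebb_{V,x}(\ldots)]$, average over the single-qubit bases qubit by qubit, and obtain the bound $\max_\sigma\Ebb\leq\norm{\sum_Q 3^{|Q|/2}\ldots}$. This leads to $\sum_Q 3^{|Q|}a_Q^2$-type terms plus cross terms, which can be controlled by $2^k\,\Tr[O_0^2]/2^{|S|}\cdot 2^k\leq 4^k\norm{O}_\infty^2$. In that last step one uses $\sum_Q a_Q^2=2^{-k}\Tr[O^2]\leq\norm{O}_\infty^2$ together with the fact that the operator-level average is bounded by $4^k$ times the traceless part. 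Since the paper attributes this bound to Refs.~\cite{huang2020predicting,sack2022avoiding}, if the direct computation becomes unwieldy I would quote their Proposition (shadow norm $\leq 4^k\norm{O}_\infty^2$ for $k$-local $O$) as the key input. The remaining step is to note that centering $O$ by its trace does not increase the estimation error.
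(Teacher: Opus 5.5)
Your plan is the paper's argument. The paper does not derive the local-Pauli bound $\norm{O}_{\rm shadow}^2\leq 4^k\norm{O}_\infty^2$; it quotes it from Huang--Kueng--Preskill and substitutes it into Theorem~\ref{thm:classical_shadow}, so your fallback of citing that proposition matches it (your support reduction and the crude $9^k$ trace-norm bound are correct but unnecessary).

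Your ``sharpening'' paragraph is not a proof as written, since the cross terms are asserted rather than controlled. If you want the bound self-contained, fix a basis choice $s\in\{X,Y,Z\}^{|S|}$ and outcome signs $b_i$; then $\Tr[\widehat\rho\, O]=\sum_{A\subseteq S}3^{|A|}a_{s_A}\prod_{i\in A}b_i$, where $s_A$ is the unique Pauli string equal to $s$ on $A$ and to the identity elsewhere. Cauchy--Schwarz with $\sum_{A\subseteq S}3^{|A|}=4^{|S|}$, followed by averaging over $s$ (each $Q$ equals $s_{{\rm supp}(Q)}$ with probability $3^{-|Q|}$), gives $\Ebb[\Tr[\widehat\rho\, O]^2]\leq 4^k\sum_Q a_Q^2\leq 4^k\norm{O}_\infty^2$ for every state, with no centering step required.
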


For local Pauli measurements, this is a worst-case sufficient bound.
More specialized measurement primitives can substantially improve the
scaling when additional structure is available. Below, we use
Matchgate shadows, permutation-invariant shadows, and
$\U(1)$-symmetric shadows for the corresponding circuit families.

We stress that Eqs.~\eqref{eq:shadow_complexity}
and~\eqref{eq:shadow_complexity_pauli} count only quantum samples. The
classical processing of the snapshots depends on the measurement
primitive and on how the data are reused, and is included separately in
the Classical Time complexity of each simulation method.

\section{Pauli Propagation \label{adx:pauli_propagation}}

\begin{figure}[h]
    \centering
    \includegraphics[width=0.47\linewidth]{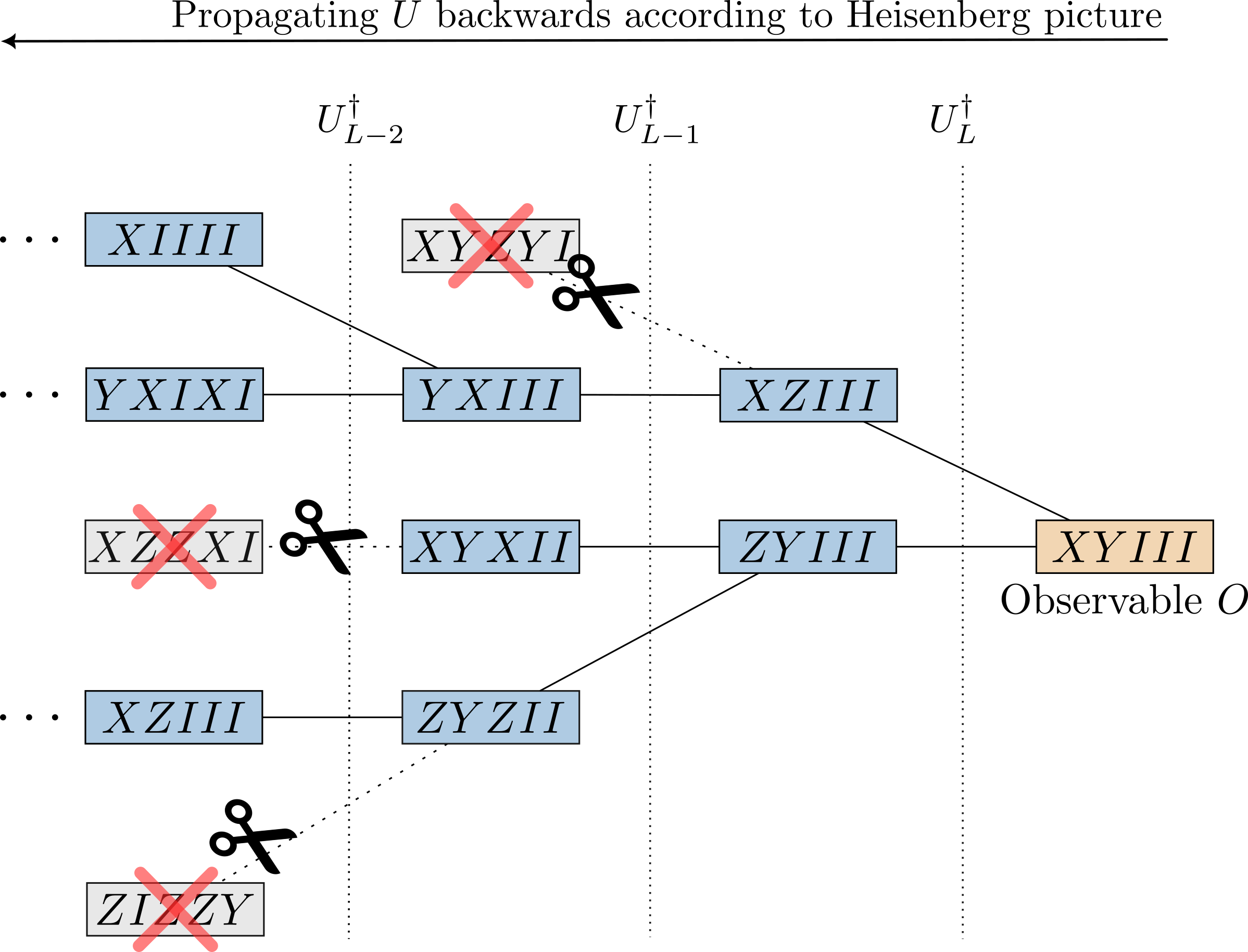}
    \caption{\textbf{Schematic illustration of Pauli Propagation with
    weight truncation.} The observable is evolved in the Heisenberg
    picture and propagation paths are discarded whenever the generated
    Pauli string exceeds the weight threshold $\tau$ (here,
    $\tau=4$).}
    \label{fig:pauli_propagation}
\end{figure}

In this section, we provide additional details on the low-weight Pauli
Propagation framework used in the main text
~\cite{angrisani2024classically,aharonov2022polynomial,
angrisani2025simulating,shao2023simulating}. To avoid normalization
factors, let
\begin{equation}
    \overline{\PC}
    =
    \left\{
        \frac{P}{\sqrt{2^n}}
        \;\middle|\;
        P\in\{\eye,X,Y,Z\}^{\otimes n}
    \right\}
\end{equation}
denote the Hilbert--Schmidt-orthonormal Pauli basis. For arbitrary
operators $H$ and $H'$,
\begin{equation}
    \Tr[HH']
    =
    \sum_{s\in\overline{\PC}}
    \Tr[Hs]\Tr[sH'].
\end{equation}
Inserting this identity between the circuit layers gives the Pauli-path
representation
\begin{equation}
    f_U(\rho,O)
    =
    \sum_{s_0,\ldots,s_L\in\overline{\PC}}
    \Tr[Os_L]
    \prod_{\ell=1}^{L}
    \Tr[U_\ell^\dagger s_\ell U_\ell s_{\ell-1}]
    \Tr[s_0\rho].
    \label{eq:pauli_path_appendix}
\end{equation}
A path is specified by
$\gamma=(s_0,\ldots,s_L)$, while the factors
$\Tr[U_\ell^\dagger s_\ell U_\ell s_{\ell-1}]$ are the corresponding
transition amplitudes.

Exact evaluation of Eq.~\eqref{eq:pauli_path_appendix} can involve
exponentially many paths. Low-weight Pauli Propagation restricts the
sum to paths for which the Pauli weight never exceeds a threshold
$\tau$. We denote the resulting estimate by
$f_U^{(\tau)}(\rho,O)$. Under the locally scrambling assumptions of
Ref.~\cite{angrisani2024classically}, this truncation admits the
average-case guarantee
\begin{equation}
    \abs{
        f_U(\rho,O)-f_U^{(\tau)}(\rho,O)
    }
    \leq
    \epsilon\norm{O}_{\rm Pauli,2}
\end{equation}
with probability at least $1-\delta$ for
\begin{equation}
    \tau
    \in
    \order{
        \log(\epsilon^{-1}\delta^{-1})
    }.
    \label{eq:error_pauli_propagation}
\end{equation}
Here, the circuit layers are sampled from a locally scrambling
distribution, as specified in Ref.~\cite{angrisani2024classically}.

The worst-case runtime follows by counting the retained transition
amplitudes. Let $\widetilde O_L=O$ and recursively define
\begin{equation}
    \widetilde O_{\ell-1}
    =
    \sum_{\substack{s\in\overline{\PC}\\ |s|\leq\tau}}
    \Tr[
        U_\ell^\dagger
        \widetilde O_\ell
        U_\ell s
    ]s,
    \qquad
    \ell=L,\ldots,1.
\end{equation}
If every $\widetilde O_\ell$ is supported on at most
$\widetilde n\leq n$ qubits, Ref.~\cite{angrisani2024classically}
gives
\begin{equation}
    N_{\CT}
    \in
    \order{
        L\min\left\{
            \widetilde n^{2\tau},
            \widetilde n^\tau\poly(n)
        \right\}
    }.
    \label{eq:NCT_pauli_propagation}
\end{equation}
The first term counts all possible transitions between retained
low-weight Pauli strings, while the second applies when the number of
nonzero Pauli strings in the propagated observable is itself
polynomially bounded. In the generic case
$\widetilde n\leq n$, this gives a polynomial runtime whenever
$L\in\poly(n)$ and $\tau\in\order{1}$. The architecture-specific bound
can be substantially tighter, as illustrated for QCNNs in
Appendix~\ref{adx:qcnn}.

For circuits composed of Pauli rotations and Clifford gates, the
branching structure can be written explicitly. Clifford gates map one
Pauli string to another, while a Pauli rotation satisfies
\begin{equation}
 e^{i\frac{w}{2}P_\alpha}
 P_\beta
 e^{-i\frac{w}{2}P_\alpha}
 =
 \begin{cases}
    P_\beta,
    & [P_\alpha,P_\beta]=0,\\[1mm]
    \cos(w)P_\beta
    +
    \dfrac{i}{2}\sin(w)[P_\alpha,P_\beta],
    & [P_\alpha,P_\beta]\neq0.
 \end{cases}
 \label{eq:pauli_relation}
\end{equation}
When the two Pauli strings anticommute,
$\frac{i}{2}[P_\alpha,P_\beta]$ is, up to a sign, another Hermitian
Pauli string. Thus, each noncommuting rotation generates at most two
branches. Efficient binary representations of Pauli operators can
further reduce the practical cost of this propagation
~\cite{rudolph2025pauli}.

\section{\label{adx:gsim}Generic $\liea$-sim framework}

In this section, we provide additional details on the $\liea$-sim
framework introduced in Section~\ref{sec:gsim}. The presentation follows
Ref.~\cite{goh2023lie}, but we formulate the simulation directly on the
invariant operator subspace supporting the observable.

\subsection{Lie theory}

\begin{definition}[Dynamical Lie Algebra]
Consider a quantum circuit of the form given in
Eq.~\eqref{eq:generic_circuit}, with Hermitian generators
$\GC=\{H_1,\ldots,H_L\}$. The Dynamical Lie Algebra (DLA) is
\begin{equation}
    \liea
    =
    \langle i\GC\rangle_{\rm Lie}
    =
    \spn_{\R}\langle
        iH_1,\ldots,iH_L
    \rangle_{\rm Lie}
    \subseteq
    \mathfrak{su}(2^n).
\end{equation}
We denote by $\{iG_\gamma\}_{\gamma=1}^{\dim(\liea)}$ a basis of
$\liea$.
\label{def:dla}
\end{definition}

The associated dynamical Lie group is
$\lieg=\exp(\liea)$. We consider its adjoint action on the operator space
$\BC$ and decompose the latter into invariant representation blocks,
\begin{equation}
    \BC
    \simeq
    \bigoplus_\lambda \BC_\lambda,
    \qquad
    U^\dagger A U\in\BC_\lambda
\end{equation}
for every $A\in\BC_\lambda$ and $U\in\lieg$. When these blocks are irreducible they correspond to irreps, although
the simulation only requires invariance. Let
$d_\lambda=\dim(\BC_\lambda)$ and let
$\{B_\alpha^{(\lambda)}\}_{\alpha=1}^{d_\lambda}$ be a
Hilbert--Schmidt-orthonormal Hermitian basis of $\BC_\lambda$.

For $iG_\gamma\in\liea$, its action on $\BC_\lambda$ is defined by
\begin{equation}
    [iG_\gamma,iB_\alpha^{(\lambda)}]
    =
    \sum_{\beta=1}^{d_\lambda}
    f_{\alpha\beta}^{(\lambda,\gamma)}
    iB_\beta^{(\lambda)}.
\end{equation}
Equivalently,
\begin{equation}
    \left(
        \Phi_\lambda^{\rm ad}(iG_\gamma)
    \right)_{\alpha\beta}
    =
    f_{\alpha\beta}^{(\lambda,\gamma)}.
\end{equation}
Hence, each element of the DLA is represented on $\BC_\lambda$ by a
$d_\lambda\times d_\lambda$ matrix. Exponentiating this representation
gives the action of the dynamical Lie group. In particular,
\begin{equation}
    U^\dagger B_\alpha^{(\lambda)}U
    =
    \sum_{\beta=1}^{d_\lambda}
    \left(
        \Phi_\lambda^{\rm Ad}(U)
    \right)_{\alpha\beta}
    B_\beta^{(\lambda)}.
    \label{eq:adjoint_action}
\end{equation}
For a gate $U_\ell=e^{-iw_\ell H_\ell}$, we define
$i\overline{H}_\ell^{(\lambda)}
=\Phi_\lambda^{\rm ad}(iH_\ell)$, such that
\begin{equation}
    \Phi_\lambda^{\rm Ad}(U_\ell)
    =
    e^{-iw_\ell\overline{H}_\ell^{(\lambda)}}.
\end{equation}
The relevant matrix dimension is therefore $d_\lambda$, which need not
coincide with $\dim(\liea)$.

\subsection{\label{sec:gsim_principle}$\liea$-sim principles}

Consider first an observable supported on a single invariant subspace,
\begin{equation}
    O
    =
    \sum_{\alpha=1}^{d_\lambda}
    c_\alpha B_\alpha^{(\lambda)}.
\end{equation}
Its Heisenberg evolution remains in the same subspace,
\begin{equation}
    \widetilde{O}
    =
    U^\dagger O U
    =
    \sum_{\alpha=1}^{d_\lambda}
    \widetilde{c}_\alpha B_\alpha^{(\lambda)},
\end{equation}
with
\begin{equation}
    \widetilde{\vec{c}}
    =
    \left(
        \Phi_\lambda^{\rm Ad}(U)
    \right)^T
    \vec{c}.
\end{equation}
The expectation value can therefore be written as
\begin{equation}
    f_U(\rho,O)
    =
    \Tr[\rho U^\dagger O U]
    =
    \sum_{\alpha=1}^{d_\lambda}
    \widetilde{c}_\alpha
    \Tr[\rho B_\alpha^{(\lambda)}].
    \label{eq:loss_gsim}
\end{equation}
We define the corresponding input-state vector as
\begin{equation}
    \left(
        \vec{e}^{\,({\rm in})}_\lambda
    \right)_\alpha
    =
    \Tr[\rho B_\alpha^{(\lambda)}],
    \label{eq:ein}
\end{equation}
so that
\begin{equation}
    f_U(\rho,O)
    =
    \widetilde{\vec{c}}^{\,T}
    \vec{e}^{\,({\rm in})}_\lambda.
\end{equation}

The cost of this procedure is controlled by $d_\lambda$. Let $K$ denote
the number of distinct circuit generators. In a generic dense
implementation, diagonalizing the represented generators requires
\begin{equation}
    N_{\rm pre}
    \in
    \order{Kd_\lambda^3}.
\end{equation}
These decompositions depend only on the generators and can be reused as
the circuit parameters change. Applying the represented gates to the
coefficient vector then requires
\begin{equation}
    N_{\rm eval}
    \in
    \order{Ld_\lambda^2}
\end{equation}
for one circuit instance. Hence, over $T$ instances,
\begin{equation}
    N_{\CT}
    \in
    \order{
        Kd_\lambda^3
        +
        TLd_\lambda^2
    }.
\end{equation}
For observables supported on a constant number of invariant subspaces,
the same construction is applied independently within each subspace and
the final expectation values are summed.

\section{Classical Simulation of Clifford circuits
\label{adx:clifford}}

In this section, we prove
Theorem~\ref{thm:classical_simulation_clifford}. An $n$-qubit Pauli
operator can be represented, up to a phase, by a binary vector
\begin{equation}
    \vec{p}
    =
    (\vec{z},\vec{x})
    \in
    \mathbb{F}_2^{2n},
\end{equation}
according to
\begin{equation}
    I\rightarrow 00,\qquad
    X\rightarrow 01,\qquad
    Z\rightarrow 10,\qquad
    Y\rightarrow 11.
\end{equation}
The adjoint action of a Clifford unitary $U_\ell$ is characterized by
a symplectic matrix $S_\ell\in{\rm Sp}(2n,\mathbb{F}_2)$ together with
the corresponding phase data,
\begin{equation}
    U_\ell^\dagger P(\vec{p})U_\ell
    =
    s_\ell(\vec{p})P(S_\ell\vec{p}),
    \qquad
    s_\ell(\vec{p})\in\{-1,+1\}.
    \label{eq:clifford_symplectic_action}
\end{equation}
The phase is restricted to $\pm1$ since conjugation preserves
Hermiticity.

Consider an observable
\begin{equation}
    O=\sum_{\alpha=1}^{M}c_\alpha P_\alpha.
\end{equation}
Backpropagating each Pauli operator through the $L$ Clifford layers gives
\begin{equation}
    U^\dagger P_\alpha U
    =
    s_\alpha\widetilde{P}_\alpha,
    \qquad
    s_\alpha\in\{-1,+1\},
\end{equation}
where $\widetilde{P}_\alpha$ is again a single Pauli string. Hence,
\begin{equation}
    U^\dagger O U
    =
    \sum_{\alpha=1}^{M}
    \widetilde{c}_\alpha\widetilde{P}_\alpha,
    \qquad
    \widetilde{c}_\alpha=s_\alpha c_\alpha.
    \label{eq:clifford_backprop}
\end{equation}
In particular, Clifford evolution does not generate additional terms
and
\begin{equation}
    \|\widetilde{\vec{c}}\|_1
    =
    \|\vec{c}\|_1.
    \label{eq:clifford_l1}
\end{equation}

We first consider the Quantum Sample complexity. Using
Eq.~\eqref{eq:clifford_backprop},
\begin{equation}
    f_U(\rho,O)
    =
    \sum_{\alpha=1}^{M}
    \widetilde{c}_\alpha
    \Tr[\rho\widetilde{P}_\alpha].
    \label{eq:clifford_expectation}
\end{equation}
Thus, the CS only requires estimating Pauli
expectation values directly on the initial state. Since Clifford
backpropagation preserves the $\ell_1$-norm of the coefficient vector,
the weighted measurement strategy of Eq.~\eqref{eq:weighted_shots} and
Theorem~\ref{thm:Nqs_quantum} give
\begin{equation}
    N_{\QS}
    \in
    \order{
        \frac{T\|\vec{c}\|_1^2}{\epsilon^2}
        \log\left(\frac{T}{\delta}\right)
    }.
    \label{eq:NQS_clifford_appendix}
\end{equation}

We next consider the Classical Time complexity. For a generic dense
symplectic representation, multiplying a $2n\times2n$ binary matrix by
a $2n$-dimensional binary vector requires $\order{n^2}$ elementary
operations. Propagating the $M$ Pauli strings through all $L$ layers
therefore requires $\order{LMn^2}$ operations for one circuit instance.
Since the $T$ instances can correspond to different Clifford
transformations, the total backpropagation cost is
\begin{equation}
    N_{\rm prop}
    \in
    \order{TLMn^2}.
\end{equation}
The measurement outcomes can be accumulated using a streaming estimator
with constant classical cost per sample, giving an additional
$\order{N_{\QS}}$ operations. Hence,
\begin{equation}
    N_{\CT}
    \in
    \order{
        N_{\QS}
        +
        TLMn^2
    }.
    \label{eq:NCT_clifford_appendix}
\end{equation}

Finally, each $\widetilde{P}_\alpha$ can be measured by applying
single-qubit basis rotations in parallel followed by computational-basis
measurement. No implementation of the original Clifford evolution is
required on the quantum device, and therefore
\begin{equation}
    N_{\QT}\in\order{1}.
\end{equation}
Together with $N_{\QT}\in\order{1}$, Eqs.~\eqref{eq:NQS_clifford_appendix}
and~\eqref{eq:NCT_clifford_appendix} prove
Theorem~\ref{thm:classical_simulation_clifford}.

\section{Shallow Hardware Efficient Ansatz\label{sec:shallow_hea}}

Shallow HEAs are classically simulable in the setting considered here
because the backpropagated observables remain confined to
polynomially-sized light cones. We consider a circuit with depth
$L\leq\log_2(n)$ and a $k$-local observable
\begin{equation}
    O=\sum_{\alpha=1}^{M}c_\alpha P_\alpha,
\end{equation}
where $k\in\order{1}$. The expectation value can be written as
\begin{equation}
    f_U(\rho,O)
    =
    \sum_\alpha
    c_\alpha
    \Tr[\rho U^\dagger P_\alpha U]
    =
    \sum_\alpha
    c_\alpha
    \Tr[\rho\widetilde{P}_\alpha],
    \label{eq:loss_shallow_hea}
\end{equation}
with $\widetilde{P}_\alpha=U^\dagger P_\alpha U$.

Each $\widetilde{P}_\alpha$ is supported on the backward light cone of
$P_\alpha$. For the one-dimensional alternating-layer architecture,
\begin{equation}
    \widetilde{k}
    \leq
    2L+2k
    \leq
    2\log_2(n)+2k
    \label{eq:shallow_lightcone_size}
\end{equation}
qubits are sufficient. We denote by $\rho_\alpha$ the reduced state of
$\rho$ on this light cone. If
$\{P_j^\alpha\}_j$ is an orthogonal Pauli basis of the corresponding
operator space $\BC_\alpha$, then
\begin{equation}
    \rho_\alpha
    =
    \frac{1}{2^{\widetilde{k}}}
    \sum_{P_j^\alpha\in\BC_\alpha}
    \Tr[\rho_\alpha P_j^\alpha]
    P_j^\alpha.
    \label{eq:rho_lm}
\end{equation}
The required reduced-state information is obtained using local Pauli
classical shadows~\cite{huang2020predicting}.

\subsection{Quantum Sample Complexity}

We consider $T$ circuit instances $\{U^{(i)}\}_{i=1}^{T}$ with the same
architecture and generators but potentially different parameters.
There are at most $MT$ backpropagated Pauli terms whose expectation
values must be controlled simultaneously.

\begin{theorem}[Quantum Sample Complexity for Classical Simulation of shallow HEAs]
\label{thm:shallow_hea_sample_appendix}
Consider $T$ circuit instances of a shallow HEA on $n$ qubits with
depth $L\leq\log_2(n)$. Given
$O=\sum_\alpha c_\alpha P_\alpha$, consisting of $M$ $k$-geometrically local Pauli
operators with $k\in\order{1}$, the Quantum Sample complexity satisfies
\begin{equation}
    N_{\QS}
    \in
    \order{
        \frac{16^k n^4}{\epsilon^2}
        \log\left(\frac{MT}{\delta}\right)
        \norm{\vec{c}}_1^2
    }.
    \label{eq:classical_shadow_shallow_hea_appendix}
\end{equation}
\end{theorem}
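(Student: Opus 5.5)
The plan is to reduce the estimation of all $T$ expectation values to a single simultaneous-estimation problem. That problem can be handled by Corollary~\ref{thm:classical_shadow_pauli} applied to the $MT$ backpropagated operators $\widetilde{P}_\alpha^{(i)}=(U^{(i)})^\dagger P_\alpha U^{(i)}$. By Eq.~\eqref{eq:loss_shallow_hea}, each $f_{U^{(i)}}(\rho,O)=\sum_\alpha c_\alpha\Tr[\rho\widetilde{P}^{(i)}_\alpha]$. It therefore suffices to estimate every $\Tr[\rho\widetilde{P}^{(i)}_\alpha]$ to accuracy $\epsilon/\norm{\vec{c}}_1$: the triangle inequality then gives total error at most $\sum_\alpha\abs{c_\alpha}\,\epsilon/\norm{\vec{c}}_1=\epsilon$ for each instance. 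Requiring this simultaneously for all $C=MT$ operators with failure probability $\delta$ yields the $\log(MT/\delta)$ factor and the $\norm{\vec{c}}_1^2/\epsilon^2$ prefactor directly from Eq.~\eqref{eq:shadow_complexity_pauli}. The crucial point is that the local Pauli shadow data are collected once, independently of the parameters, so no factor of $T$ multiplies the sample count.

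The next step bounds the remaining quantities in the corollary: the support size and the operator norm of each $\widetilde{P}^{(i)}_\alpha$. Unitary conjugation preserves the operator norm, so $\norm{\widetilde{P}^{(i)}_\alpha}_\infty=\norm{P_\alpha}_\infty=1$. For the support, I will argue via the backward light cone of the one-dimensional alternating-layer architecture. Each layer of nearest-neighbour two-qubit gates enlarges a contiguous support by at most one qubit on each side, so after $L$ layers the support is a contiguous region of at most $k+2L\leq \widetilde{k}\leq 2L+2k$ qubits, as in Eq.~\eqref{eq:shallow_lightcone_size}. This region depends only on the architecture, not on the parameters, which is what allows reuse across instances.

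Substituting $\widetilde{k}\leq 2\log_2(n)+2k$ into the local shadow-norm bound $4^{\widetilde{k}}$ gives
\begin{equation*}
4^{\widetilde{k}}\leq 4^{2k}\,4^{2\log_2 n}=16^k n^4 ,
\end{equation*}
and combining the pieces yields Eq.~\eqref{eq:classical_shadow_shallow_hea_appendix}.

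The main obstacle I expect is making the light-cone count honest for the brick-wall geometry, in particular the constant in front of $L$. I would first verify that a single layer can extend each boundary by at most one site, so that the support grows by at most two qubits per layer. The looser bound $2L+2k$ absorbs boundary and alignment effects, since $k+2L\leq 2L+2k$ always holds. Only the exponent then matters, and that exponent is what produces $n^4$ rather than a smaller power.
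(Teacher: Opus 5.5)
Your proposal is correct and follows the paper's proof essentially step for step. Both use per-term accuracy $\epsilon/\norm{\vec{c}}_1$ via the triangle inequality, unitary invariance of $\norm{\widetilde{P}_\alpha}_\infty$, the light-cone bound $4^{\widetilde{k}}\leq 16^k n^4$, and Corollary~\ref{thm:classical_shadow_pauli} applied once to the $MT$ backpropagated operators (your sharper count $\widetilde{k}\leq k+2L$ would even give $4^k n^4$, which is contained in the stated bound).
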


\begin{proof}
For one circuit instance, write
\begin{equation}
    \widetilde{O}
    =
    \sum_\alpha c_\alpha\widetilde{P}_\alpha.
\end{equation}
If every $\widetilde{P}_\alpha$ is estimated up to additive error
$\widetilde{\epsilon}$, then
\begin{align}
    \abs{
        \Tr[\rho\widetilde{O}]
        -
        \Tr[\widehat{\rho}\widetilde{O}]
    }
    &\leq
    \sum_\alpha
    \abs{c_\alpha}
    \widetilde{\epsilon}
    \nonumber\\
    &=
    \norm{\vec{c}}_1\widetilde{\epsilon}.
\end{align}
Thus, it suffices to choose
\begin{equation}
    \widetilde{\epsilon}
    =
    \frac{\epsilon}{\norm{\vec{c}}_1}.
\end{equation}
Moreover,
\begin{equation}
    4^{\widetilde{k}}
    \leq
    4^{2\log_2(n)+2k}
    =
    16^k n^4,
\end{equation}
and
$\norm{\widetilde{P}_\alpha}_\infty=1$ by unitary invariance of the
operator norm. Applying Corollary~\ref{thm:classical_shadow_pauli} to
the collection of at most $MT$ backpropagated observables gives
\begin{align}
    N_{\QS}
    &\in
    \order{
        \frac{4^{\widetilde{k}}}{\widetilde{\epsilon}^2}
        \log\left(\frac{MT}{\delta}\right)
    }
    \nonumber\\
    &\subseteq
    \order{
        \frac{16^k n^4}{\epsilon^2}
        \log\left(\frac{MT}{\delta}\right)
        \norm{\vec{c}}_1^2
    }.
\end{align}
\end{proof}

The $n^4$ scaling in
Eq.~\eqref{eq:classical_shadow_shallow_hea_appendix} is a worst-case
upper bound arising from the largest allowed light cone. In practice,
the number of measurements can be substantially smaller, as observed
numerically in Ref.~\cite{basheer2023alternating}.

\subsection{Classical Time Complexity}

We first consider the parameter-dependent reduced-unitary calculations.
After backpropagating through $\ell$ layers, the reduced Hilbert-space
dimension is bounded by
\begin{equation}
    d_\ell
    \leq
    2^{2\ell+2k}.
\end{equation}
Using dense matrix multiplication, the cost at layer $\ell$ is bounded
by $\order{d_\ell^\omega}$. Hence,
\begin{align}
    \sum_{\ell=1}^{L}
    \order{d_\ell^\omega}
    &\subseteq
    \order{
        2^{2k\omega}
        \sum_{\ell=1}^{L}
        2^{2\omega\ell}
    }
    \nonumber\\
    &=
    \order{
        2^{2k\omega}2^{2\omega L}
    }
    \subseteq
    \order{n^{2\omega}},
    \label{eq:shallow_hea_matrix_cost}
\end{align}
where the last relation uses $k\in\order{1}$ and
$L\leq\log_2(n)$. Thus, the reduced-unitary calculation requires
$\order{MTn^{2\omega}}$ operations over all $M$ terms and $T$ circuit
instances.

We next account for the classical processing of the shadow data. The
backward light-cone supports depend only on the circuit architecture and
the measured Pauli terms, and are therefore the same for all $T$
parameter choices. The $N_{\QS}$ snapshots can consequently be
accumulated once into empirical reduced shadows on these $M$ light
cones. A dense operator on a $\widetilde{k}$-qubit light cone contains
$4^{\widetilde{k}}$ entries, giving the conservative bound
\begin{equation}
    N_{\rm shadow\mbox{-}post}
    \in
    \order{
        M N_{\QS}4^{\widetilde{k}}
    }
    \subseteq
    \order{
        M N_{\QS}16^k n^4
    }.
    \label{eq:shallow_shadow_postprocessing}
\end{equation}
This cost is incurred only once. After these reduced descriptions have
been constructed, contracting them with the backpropagated observables
costs at most $\order{MT4^{\widetilde{k}}}$, which is absorbed into
the reduced-unitary bound above.

Combining the two contributions gives the following result.

\begin{theorem}[Classical Time Complexity for Classical Simulation of shallow HEAs]
\label{thm:classical_time_shallow}
Under the assumptions of
Theorem~\ref{thm:shallow_hea_sample_appendix}, the total Classical Time
complexity satisfies
\begin{equation}
    N_{\CT}
    \in
    \order{
        M N_{\QS}16^k n^4
        +
        MTn^{2\omega}
    }.
    \label{eq:NCT_shallow_hea_appendix}
\end{equation}
If the reduced input-state descriptions are known classically, the
first term is absent.
\end{theorem}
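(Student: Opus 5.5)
The bound will follow by adding two contributions that have already been estimated in the text preceding the statement: the parameter-dependent reduced-unitary backpropagation and the one-time processing of the shadow data. The total is $N_{\CT}=N_{\rm pre}+TN_{\rm eval}$, as in Eq.~\eqref{eq:preprocessing_evaluation_cost}, with $N_{\rm pre}$ the shadow post-processing and $N_{\rm eval}$ the per-instance light-cone computation. I would prove each piece separately and then combine them.

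\textbf{Step 1: per-instance cost.} For a fixed term $P_\alpha$ and a fixed instance $U^{(i)}$, I would backpropagate layer by layer. After $\ell$ layers, the one-dimensional alternating architecture confines the operator to at most $2\ell+2k$ qubits, so the reduced dimension is $d_\ell\leq 2^{2\ell+2k}$. Each layer update is a constant number of dense $d_\ell\times d_\ell$ matrix products, costing $\order{d_\ell^\omega}$. Summing the geometric series as in Eq.~\eqref{eq:shallow_hea_matrix_cost} gives $\order{2^{2k\omega}2^{2\omega L}}\subseteq\order{n^{2\omega}}$, using $L\leq\log_2(n)$ and $k\in\order{1}$. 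Applying this to all $M$ terms and $T$ instances gives $\order{MTn^{2\omega}}$.

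The final contraction $\Tr[\widehat\rho_\alpha\widetilde P_\alpha]$ costs $\order{4^{\widetilde k}}\subseteq\order{16^kn^4}$ per term and instance. Since $\omega\geq2$, this is at most $n^{2\omega}$ up to constants and is absorbed into the previous bound.

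\textbf{Step 2: one-time shadow processing.} The light-cone supports depend only on the architecture and on the $P_\alpha$, not on the parameters. Each of the $N_{\QS}$ snapshots can therefore be added once into $M$ dense reduced estimators $\widehat\rho_\alpha$. Each update touches $\order{4^{\widetilde k}}$ entries, because a local-Pauli snapshot restricted to $\widetilde k$ qubits is a tensor product of $2\times2$ matrices and forming it densely costs that much. This gives $\order{MN_{\QS}16^kn^4}$, matching Eq.~\eqref{eq:shallow_shadow_postprocessing}. Once stored, these estimators are reused for every $i$.

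\textbf{Step 3: combining and the known-state case.} Adding the two pieces gives the claimed bound. Substituting Theorem~\ref{thm:shallow_hea_sample_appendix} into the first term yields the second line of Eq.~\eqref{eq:NCT_shallow_HEA}. If the reduced states are known classically, Step 2 is skipped, which removes the first term. The main point needing care is in Step 1: I must confirm that the light-cone width really grows by at most two qubits per layer, so that $d_\ell$ stays polynomial. I must also confirm that building the reduced layer unitaries (tensor products of two-qubit gates) costs no more than the matrix products. Both follow from the brick-wall structure, and the construction cost is at most $\order{d_\ell^2}$.
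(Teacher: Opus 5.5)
Your proposal is correct and follows essentially the same argument as the paper. You sum the layerwise $\order{d_\ell^\omega}$ dense products inside the backward light cone as a geometric series to obtain $\order{MTn^{2\omega}}$, add the one-time accumulation of the $N_{\QS}$ local-Pauli snapshots into $M$ dense reduced shadows at cost $\order{MN_{\QS}4^{\widetilde k}}\subseteq\order{MN_{\QS}16^kn^4}$, and absorb the $\order{MT4^{\widetilde k}}$ contraction into the $\order{MTn^{2\omega}}$ term; your explicit use of $\omega\geq 2$ to justify that absorption is a detail the paper leaves implicit.
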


\begin{proof}
Equation~\eqref{eq:shallow_hea_matrix_cost} gives
$\order{MTn^{2\omega}}$ operations for the parameter-dependent circuit
evaluations, while Eq.~\eqref{eq:shallow_shadow_postprocessing} gives
the one-time cost of processing the shadow data. Adding the two terms
proves Eq.~\eqref{eq:NCT_shallow_hea_appendix}.
\end{proof}

This is a worst-case estimate based on explicit dense matrix
construction and multiplication and therefore applies to arbitrary
two-qubit circuit generators. The light-cone geometry can be precomputed
and reused, whereas the parameter-dependent reduced unitaries must in
general be reevaluated as the circuit parameters change.

For circuits composed of Pauli rotations and Clifford gates, one can
instead consider direct Pauli propagation. Without truncation, however,
the number of Pauli paths can grow exponentially with the number of
gates contained in the backward light cone. Since this number can scale
as $\order{\log^2(n)}$ for a shallow HEA, direct untruncated Pauli
propagation can scale as
\begin{equation}
    \order{2^{\log^2(n)}},
\end{equation}
which is superpolynomial. For the generic shallow HEA setting considered
here, reduced matrix multiplication provides the polynomial worst-case
guarantee used in our resource estimates.

Finally, the local Pauli shadow measurements require only parallel
single-qubit basis rotations before readout. Hence,
\begin{equation}
    N_{\QT}\in\order{1}.
\end{equation}
Together with Eqs.~\eqref{eq:classical_shadow_shallow_hea_appendix}
and~\eqref{eq:NCT_shallow_hea_appendix}, this proves
Theorem~\ref{thm:shallow_HEA}.

\section{\label{adx:matchgate}Matchgate circuit}
In this appendix, we provide additional details on the Classical
Simulation of Matchgate circuits and prove the resource bounds stated
in Section~\ref{sec:matchgate}.
\subsection{\label{adx:Majoran_gsim}Majorana formalism}
We begin by reviewing the Majorana formalism that underpins our simulation framework. Consider a parameterized Matchgate circuit generated by the set $\GC=\{Z_i\}_{i=1}^n \cup \{X_iX_{i+1}\}_{i=1}^{n-1} $. The associated DLA is
\begin{equation}
    \liea = \spn_{\R}i\left(
        \{Z_i\}_{i=1}^n
        \cup
        \left\{
            \widehat{X_iX_j},
            \widehat{X_iY_j},
            \widehat{Y_iX_j},
            \widehat{Y_iY_j}
        \right\}_{1\leq i<j\leq n}
    \right)\;,
    \label{eq:matchgate_dla}
\end{equation}
where $\widehat{A_iB_j} = A_i Z_{i+1}\cdots Z_{j-1}B_j$\;. Using the Majorana operators defined in Eq.~\eqref{eq:majorana}, we
consider the group of fermionic Gaussian unitaries $\FGU(n)$. A
quadratic fermionic Hamiltonian generates an orthogonal transformation
$Q\in\mathrm{SO}(2n)$ on the Majorana operators. With the convention
used in Section~\ref{sec:classically_simulable_evolutions}, we write
the Heisenberg action as
\begin{equation}
    U_Q\ad\gamma_{\mu}U_Q
    =
    \sum_{\nu=1}^{2n}
    Q_{\mu\nu}\gamma_{\nu}.
    \label{eq:UcU}
\end{equation}
Thus, a fermionic Gaussian unitary is characterized, up to a global
phase, by a $2n\times2n$ real orthogonal matrix $Q$, which we define
through Eq.~\eqref{eq:UcU}.

Having established the linear action on single Majoranas, we can now generalize to products of $\kappa$ Majorana operators $\gamma_{\vec{\mu}}$. Using Eq.~\eqref{eq:UcU}, we obtain
\begin{equation}
    U_Q^\dagger\gamma_{\vec{\mu}}U_Q
    =
    \prod_{j=1}^{\kappa}
    \left(
        U_Q^\dagger\gamma_{\mu_j}U_Q
    \right)
    =
    \sum_{\alpha_1,\ldots,\alpha_\kappa=1}^{2n}
    Q_{\mu_1\alpha_1}\cdots Q_{\mu_\kappa\alpha_\kappa}
    \gamma_{\alpha_1}\cdots\gamma_{\alpha_\kappa}\;.
    \label{eq:UckU}
\end{equation}
The anticommutation relations of the Majorana operators imply $\gamma_{\al_1} \cdots \gamma_{\al_\kappa} = \pi(\sg)\gamma_{\sg(\al_1)}\cdots \gamma_{\sg(\al_\kappa)}$ where $\sg$ is a permutation of the indices $\al_1,\dots\al_\kappa$ and $\pi(\sg)$ its sign. Reordering indices into strictly increasing order allows us to rewrite Eq.~\eqref{eq:UckU} in determinant form, summing over $\binom{2n}{\kappa}$ terms~\cite{zhao2021fermionic}: 
\begin{equation}
    U_Q\ad\gamma_{\vec{\mu}}U_Q = \sum_{\al_1 = 1}^{2n} \sum_{\al_2 = \al_1 + 1}^{2n} \cdots \sum_{\al_\kappa = \al_{\kappa-1} + 1}^{2n} \det(Q_{\vec{\mu}\vec{\al}}) \gamma_{\al_1}\cdots \gamma_{\al_{\kappa}} = \sum_{\vec{\al} \in \CC_{2n,\kappa }} \det(Q_{\vec{\mu}\vec{\al}}) \gamma_{\vec{\al}}\;,
    \label{eq:UckU_det}
\end{equation}
where $\vec{\alpha}=(\alpha_1,\ldots,\alpha_\kappa)$ and $\CC_{2n,\kappa}$ is the admissible index set defined in Eq.~\eqref{eq:CC_def}. Here, 
$Q_{\vec{\mu}\vec{\al}}$ denotes the $\kappa\times\kappa$ submatrix of $Q$ with rows indexed by $\vec{\mu}$ and columns by $\vec{\alpha}$. Therefore, the adjoint action of a fermionic Gaussian unitary on $\kappa$-local Majorana strings can be completely determined by the orthogonal matrix $Q$.

This naturally motivates ordering the Majorana strings by fermionic
locality. For each $\kappa$, the Hermitian strings
$\{G^\kappa_{\vec{\nu}}\}_{\vec{\nu}\in\CC_{2n,\kappa}}$
form a basis of the subspace $\BC_\kappa$,
\begin{equation}
    \BC_\kappa = \text{span}_{\Cbb}\{G^\kappa_{\vec{\nu}}\}_{\vec{\nu} \in \CC_{2n, \kappa}}   = \text{span}_{\Cbb}\{i^{\kappa(\kappa-1)/2} \gamma_{\nu_1}\cdots \gamma_{\nu_\kappa}\}_{1\le \nu_1 < \cdots < \nu_\kappa \le 2n}\;.
\end{equation}
where $\dim(\BC_\kappa) = \binom{2n}{\kappa}$ and the prefactor $i^{\kappa(\kappa-1)/2}$ ensures that the basis operators are Hermitian. 
In this representation, the DLA is
$\liea=\spn_{\R}\{\gamma_{\nu_1}\gamma_{\nu_2}\}_{\nu_1<\nu_2}$,
while its Hermitian counterpart is
$i\liea=\spn_{\R}\{G_{\vec{\nu}}^2\}_{\vec{\nu}\in\CC_{2n,2}}
\subset\BC_2$. In particular, the quadratic Hamiltonian $H_Q$ lies in
this Hermitian subspace.   
From the definition of the spaces $\BC_\kappa$, we obtain the following decomposition of the full operator algebra: 
\begin{lemma}[Majorana-sector decomposition~\cite{diaz2023showcasing}]
The operator space decomposes as
\begin{equation}
    \BC
    =
    \bigoplus_{\kappa=0}^{2n}
    \BC_\kappa,
\end{equation}
where
$\dim(\BC_\kappa)=\binom{2n}{\kappa}$.
\end{lemma}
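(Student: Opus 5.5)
The plan is to show that the normalized Majorana strings $\{2^{-n/2}G^\kappa_{\vec{\mu}}\}$, ranging over all $\kappa=0,\dots,2n$ and all $\vec{\mu}\in\CC_{2n,\kappa}$, form a Hilbert--Schmidt orthonormal basis of $\BC$. The decomposition then follows by grouping basis elements according to $\kappa$. The count is immediate: by construction $\dim(\BC_\kappa)=|\CC_{2n,\kappa}|=\binom{2n}{\kappa}$. Once linear independence is established, we obtain
\begin{equation}
\sum_{\kappa=0}^{2n}\binom{2n}{\kappa}=2^{2n}=4^n=\dim(\BC).
\end{equation}
Spanning is therefore automatic, and the sum is direct.

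The key step is orthogonality. For $\vec{\mu}\neq\vec{\nu}$, I will consider the product $\gamma_{\vec{\mu}}^\dagger\gamma_{\vec{\nu}}$. Using $\gamma_\mu^2=\eye$ and the anticommutation relations $\{\gamma_\mu,\gamma_\nu\}=2\delta_{\mu\nu}\eye$, this product reduces, up to a phase, to $\gamma_{\vec{\mu}\,\triangle\,\vec{\nu}}$. Here the index set is the symmetric difference of $\vec{\mu}$ and $\vec{\nu}$, which is nonempty. It then suffices to show that any nonempty Majorana string $\gamma_{\vec{\alpha}}$ is traceless.

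I would prove tracelessness directly from the Jordan--Wigner form in Eq.~\eqref{eq:majorana}. Each $\gamma_{2j-1}$ equals $Z_1\cdots Z_{j-1}X_j$, and each $\gamma_{2j}$ equals $Z_1\cdots Z_{j-1}Y_j$. Hence any product of distinct Majoranas is, up to a phase, a Pauli string. This string is nonidentity: let $j$ be the largest qubit index touched. The factor on qubit $j$ is then $X$, $Y$, or $XY\propto Z$, with nothing to cancel it from higher modes. Consequently $\Tr[\gamma_{\vec{\alpha}}]=0$.

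For normalization, $G_{\vec{\mu}}^\kappa$ is Hermitian and unitary, so $\Tr[(G^\kappa_{\vec{\mu}})^2]=2^n$. This gives an orthonormal family of $4^n$ elements, which completes the argument.

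The main obstacle is the phase bookkeeping in reducing $\gamma_{\vec{\mu}}^\dagger\gamma_{\vec{\nu}}$ to $\gamma_{\vec{\mu}\triangle\vec{\nu}}$, together with the check that this operator is a nonidentity Pauli string. Both only require tracking signs, which do not affect tracelessness. An equivalent route would be to note that the Jordan--Wigner map is a bijection between Majorana strings and Pauli strings up to phase; I would use the trace argument since it is self-contained.
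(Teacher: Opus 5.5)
Your proof is correct. The paper does not actually prove this lemma. It states that the decomposition follows ``from the definition of the spaces $\BC_\kappa$'' (in the main text, that the $\kappa$-local strings form a basis and that the dimension count holds ``by construction''), and defers to the cited reference.

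What the paper takes for granted is the linear independence of the full family of Majorana strings. Your Hilbert--Schmidt orthogonality argument supplies it: you reduce $\gamma_{\vec{\mu}}^\dagger\gamma_{\vec{\nu}}$ to $\pm\gamma_{\vec{\mu}\triangle\vec{\nu}}$, then show any nonempty string is a nonidentity Pauli string by examining the highest qubit it touches. The count $\sum_\kappa\binom{2n}{\kappa}=4^n$ then gives spanning and directness, exactly as you say.

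Your argument also proves slightly more than the lemma states. The sectors are mutually orthogonal, and $\{2^{-n/2}G^\kappa_{\vec{\mu}}\}$ is an orthonormal Hermitian basis of each $\BC_\kappa$. This is exactly what the paper later assumes when it uses these strings as the $\liea$-sim bases $B^{(\lambda)}_\alpha$ and writes the Matchgate shadow channel as diagonal in the Majorana basis.
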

Equation~\eqref{eq:UckU_det} also shows that every $\BC_\kappa$ is
invariant under Matchgate evolution.
\begin{lemma}[Invariant Majorana subspaces]
For every $O\in\BC_\kappa$ and every Matchgate unitary
$U_Q\in\lieg$,
\begin{equation}
    U_Q^\dagger O U_Q
    \in
    \BC_\kappa.
\end{equation}
\end{lemma}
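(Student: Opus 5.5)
By linearity, it suffices to show that $U_Q^\dagger G^\kappa_{\vec{\mu}} U_Q\in\BC_\kappa$ for each basis element $G^\kappa_{\vec{\mu}}$ with $\vec{\mu}\in\CC_{2n,\kappa}$. Indeed, any $O\in\BC_\kappa$ can be written as $O=\sum_{\vec{\mu}} c_{\vec{\mu}} G^\kappa_{\vec{\mu}}$. Conjugation $A\mapsto U_Q^\dagger A U_Q$ is linear, and $\BC_\kappa$ is a linear subspace, so the claim for $O$ follows from the claim for each basis string.

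For a single string, the plan has three steps.

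\begin{enumerate}
\item Start from Eq.~\eqref{eq:UcU}. It says that every Matchgate unitary $U_Q\in\lieg$ acts on the single Majoranas by a real orthogonal matrix $Q\in\mathrm{SO}(2n)$. To justify this, note that $\lieg=\exp(\liea)$ and that $i\liea$ is spanned by quadratic Majorana monomials. The commutator of a quadratic monomial $\gamma_{\nu_1}\gamma_{\nu_2}$ with a single $\gamma_\mu$ is a linear combination of single Majoranas. Hence $\mathrm{ad}$ of any element of $\liea$ preserves $\BC_1=\spn\{\gamma_\mu\}$, and exponentiating gives the linear action in Eq.~\eqref{eq:UcU}.

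\item Insert $U_QU_Q^\dagger=\eye$ between the factors of $\gamma_{\vec{\mu}}$. Each factor is then conjugated separately, which is the product expansion in Eq.~\eqref{eq:UckU}.

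\item Simplify each product $\gamma_{\alpha_1}\cdots\gamma_{\alpha_\kappa}$ using the anticommutation relations. If two indices coincide, the relations let us bring the equal factors together (up to a sign), and then $\gamma_\alpha^2=\eye$ removes the pair. If all indices are distinct, the product is reordered into increasing order at the cost of a permutation sign.
\end{enumerate}

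The main obstacle is the terms with repeated indices. These do not obviously vanish term by term: a product with a repeated index collapses to a Majorana string of locality $\kappa-2$, which lies outside $\BC_\kappa$. I will argue that these contributions cancel in the sum. Fix a multi-index with $\alpha_a=\alpha_b$. Swapping the roles of positions $a$ and $b$ in the coefficient $Q_{\mu_1\alpha_1}\cdots Q_{\mu_\kappa\alpha_\kappa}$ gives another term in the sum. Combining the two terms, the operator parts are related by the anticommutation relations, and the coefficients are related by $\mu_a\leftrightarrow\mu_b$. The combined contribution should then reduce to $Q_{\mu_a\alpha}Q_{\mu_b\alpha}$ summed over $\alpha$, which equals $(QQ^T)_{\mu_a\mu_b}=\delta_{\mu_a\mu_b}=0$ because $\mu_a\ne\mu_b$ and $Q$ is orthogonal. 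I would organize this cleanly by induction on $\kappa$. Multiply $U_Q^\dagger\gamma_{\mu_1}\cdots\gamma_{\mu_{\kappa-1}}U_Q$, which by hypothesis already lies in $\BC_{\kappa-1}$ with determinant coefficients, by $\sum_\nu Q_{\mu_\kappa\nu}\gamma_\nu$. Then check that the degree-$(\kappa-2)$ part vanishes by cofactor expansion together with the orthogonality of the rows of $Q$.

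Once the repeated-index contributions are eliminated, the antisymmetrization assembles the coefficients into the minors $\det(Q_{\vec{\mu}\vec{\alpha}})$. This yields Eq.~\eqref{eq:UckU_det}, whose right-hand side lies in $\BC_\kappa$. Since the phase $i^{\kappa(\kappa-1)/2}$ is a fixed scalar, $U_Q^\dagger G^\kappa_{\vec{\mu}}U_Q\in\BC_\kappa$, and linearity completes the proof.
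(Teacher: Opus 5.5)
Your proposal is correct and follows the paper's route. You use the orthogonal action on single Majoranas in Eq.~\eqref{eq:UcU}, then the product expansion in Eq.~\eqref{eq:UckU}, then the reordering into the minors $\det(Q_{\vec{\mu}\vec{\alpha}})$ of Eq.~\eqref{eq:UckU_det}, from which invariance of $\BC_\kappa$ is read off. Your induction (right-multiplying by $\sum_\nu Q_{\mu_\kappa\nu}\gamma_\nu$ and killing the degree-$(\kappa-2)$ part by cofactor expansion and $QQ^T=\eye$) actually supplies the cancellation of repeated-index terms that the paper passes over, since the paper's reordering identity only holds for distinct indices. Keep that induction as the real argument: the pairwise-swap heuristic before it does not literally pair distinct terms, because swapping two equal indices returns the same term.
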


This structure allows the $\liea$-sim method to be naturally adapted to Matchgate circuits by choosing the Majorana operator basis $\{G^\kappa_{\vec{\nu}}\}$ as the simulation basis, with the observable $O = \sum_{\vec{\al}} c_{\vec{\al}}G^\kappa_{\vec{\al}}\in \BC_\kappa$.

\subsection{\label{sec:fermionic_shadow}Quantum Sample Complexity
(Fermionic Shadow)}

In Majorana $\liea$-sim, the observable is backpropagated classically
within the invariant Majorana subspace $\BC_\kappa$, and its expectation
value with respect to the initial state is estimated using fermionic
classical shadows. Since Majorana strings are generally nonlocal when
expressed in the Pauli basis, local Pauli classical shadows lead to
unfavorable sample complexities. We instead employ Fermionic shadows,
also known as Matchgate shadows, which perform randomized measurements
using fermionic Gaussian unitaries~\cite{zhao2021fermionic,
wan2022matchgate}.

Although the Majorana formalism established above encompasses arbitrary
fermionic locality $\kappa$, physical fermionic observables have even
Majorana degree due to fermionic parity superselection. We therefore
restrict the discussion below to $\kappa=2r$, with
$r\in\{0,\ldots,n\}$.

It is natural to choose fermionic Gaussian unitaries as the measurement
primitives since they preserve fermionic locality. However, because
$\FGU(n)$ is a continuous group, direct sampling from this ensemble can
be inconvenient in practice. An equivalent implementation can be
obtained using the discrete subgroup of fermionic Gaussian Clifford
unitaries whose associated orthogonal matrices are signed permutations.
We therefore take
\begin{equation}
    \VC_{\rm F}
    =
    \FGU(n)\cap\Cl(n)
    =
    \left\{
        V_Q
        \;\middle|\;
        Q\in B(2n)\cap\mathrm{SO}(2n)
    \right\},
    \label{eq:U_fermionic}
\end{equation}
where $B(2n)$ denotes the signed permutation group. The continuous and
discrete ensembles generate the same shadow channel and agree in the
moments required for the second-moment bounds considered below
~\cite{wan2022matchgate}. Moreover, every
$V_Q\in\VC_{\rm F}$ admits an efficient decomposition into Givens
rotations and reflections~\cite{jiang2018quantum,wan2022matchgate},
yielding a Quantum Time complexity
\begin{equation}
    N_{\QT}\in\order{n}\;.
\end{equation}

The corresponding fermionic measurement channel $\MC_F$ is diagonal in
the Majorana basis and invertible on the even-parity operator subspace,
\begin{equation}
    \MC_F
    =
    \sum_{r=0}^{n}
    \lm_{n,r}\mathcal{P}_{2r},
    \qquad
    \lm_{n,r}
    =
    \binom{n}{r}
    \binom{2n}{2r}^{-1},
    \label{eq:fermionic_shadow_channel}
\end{equation}
where $\mathcal{P}_{2r}$ denotes the projector onto the homogeneous
Majorana sector $\BC_{2r}$. Equivalently, for every
$\vec{\mu}\in\CC_{2n,2r}$,
\begin{equation}
    \MC_F\left(G_{\vec{\mu}}^{2r}\right)
    =
    \lm_{n,r}G_{\vec{\mu}}^{2r}.
\end{equation}

For a single Majorana monomial, the corresponding shadow norm satisfies
\begin{equation}
    \norm{G_{\vec{\mu}}^{2r}}_{\rm FGU}^2
    =
    \lm_{n,r}^{-1}
    =
    \binom{2n}{2r}\binom{n}{r}^{-1}
    \in
    \Theta(n^r)
\end{equation}
for fixed $r$. More importantly for the setting considered here, this
scaling extends collectively to an arbitrary Hermitian observable in
the same homogeneous Majorana sector. In particular,
Ref.~\cite{west2026fermionic} shows that, for any
$O\in\BC_{2r}$,
\begin{equation}
    \norm{O}_{\rm FGU}^2
    \leq
    \frac{3}{2\lm_{n,r}}
    \norm{O}_\infty^2.
    \label{eq:collective_fermionic_shadow}
\end{equation}
Thus, the full Majorana sum can be estimated without separately
reconstructing each of its Majorana components.

In Matchgate $\liea$-sim, consider $T$ circuit instances
$\{U^{(i)}\}_{i=1}^T$ that share the same generators but can have different
circuit parameters. For each instance, the observable is backpropagated
as
\begin{equation}
    \widetilde{O}_i
    =
    (U^{(i)})\ad O U^{(i)}.
\end{equation}
Since $\BC_\kappa$ is invariant under Matchgate evolution,
$\widetilde{O}_i\in\BC_\kappa$. Moreover, unitary conjugation preserves
the spectral norm,
\begin{equation}
    \norm{\widetilde{O}_i}_\infty
    =
    \norm{O}_\infty.
\end{equation}
Therefore, Eq.~\eqref{eq:collective_fermionic_shadow} applies uniformly
to all $T$ backpropagated observables. Importantly, the same Matchgate
shadow data of the initial state can be reused to estimate all of these
expectation values.

This gives the following result.

\begin{theorem}[Quantum Sample Complexity of Matchgate $\liea$-sim]
\label{thm:classical_shadow_fermionic}
Consider $T$ Matchgate circuit instances and a Hermitian observable
\begin{equation}
    O
    =
    \sum_{\vec{\mu}\in\CC_{2n,\kappa}}
    c_{\vec{\mu}}G_{\vec{\mu}}^\kappa
    \in\BC_\kappa
\end{equation}
with fixed even fermionic locality $\kappa\in\order{1}$. Using
Matchgate classical shadows, the Quantum Sample complexity required to
estimate all $T$ expectation values up to additive error $\epsilon$
simultaneously, with success probability at least $1-\delta$, satisfies
\begin{align}
    N_{\QS}
    &\in
    \order{
        \frac{
        \binom{2n}{\kappa}
        }{
        \binom{n}{\kappa/2}
        }
        \frac{\norm{O}_\infty^2}{\epsilon^2}
        \log\left(\frac{T}{\delta}\right)
    }
    \nonumber\\
    &\subseteq
    \order{
        \frac{n^{\kappa/2}}{\epsilon^2}
        \log\left(\frac{T}{\delta}\right)
        \norm{O}_\infty^2
    }\;.
    \label{eq:classical_shadow_fermionic}
\end{align}
\end{theorem}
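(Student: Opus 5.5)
The plan is to reduce the statement to the generic simultaneous-estimation bound of Theorem~\ref{thm:classical_shadow}, instantiated with the Matchgate shadow ensemble $\VC_{\rm F}$ of Eq.~\eqref{eq:U_fermionic}. The collection of observables will be $\{\widetilde{O}_i\}_{i=1}^{T}$ with $\widetilde{O}_i=(U^{(i)})^\dagger O U^{(i)}$, so that $C=T$.

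First I will check that each $\widetilde{O}_i$ lies in the visible space $\LC_{\rm vis}$ of the Matchgate ensemble.
\begin{itemize}
\item By the invariance lemma for Majorana subspaces, $\widetilde{O}_i\in\BC_\kappa$ with $\kappa=2r$ even.
\item By Eq.~\eqref{eq:fermionic_shadow_channel}, $\MC_F$ acts on $\BC_{2r}$ as multiplication by $\lm_{n,r}\neq0$.
\item Hence the inverse channel is well defined on this sector, and the shadow estimator $\Tr[\widehat\rho_{V,x}\widetilde{O}_i]$ is unbiased.
\end{itemize}

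Next I will bound the shadow norm uniformly in $i$. By the collective bound Eq.~\eqref{eq:collective_fermionic_shadow} and unitary invariance of the spectral norm,
\begin{equation}
\norm{\widetilde{O}_i}_{\rm FGU}^2\leq\frac{3}{2}\binom{2n}{\kappa}\binom{n}{\kappa/2}^{-1}\norm{O}_\infty^2 .
\end{equation}
Taking the maximum over $i=1,\ldots,T$ in Eq.~\eqref{eq:shadow_complexity} then gives
\begin{equation}
N_{\QS}\in\order{\binom{2n}{\kappa}\binom{n}{\kappa/2}^{-1}\frac{\norm{O}_\infty^2}{\epsilon^2}\log\left(\frac{T}{\delta}\right)},
\end{equation}
which is the first line of Eq.~\eqref{eq:classical_shadow_fermionic}. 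The guarantee is simultaneous over all $T$ instances, because a single shadow data set is reused and Theorem~\ref{thm:classical_shadow} already includes the union bound through the $\log(C/\delta)$ factor.

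For the second line I need the asymptotics of the prefactor at fixed $\kappa=2r$. Since $\binom{2n}{2r}=\Theta(n^{2r})$ and $\binom{n}{r}=\Theta(n^{r})$, the ratio is $\Theta(n^{r})=\Theta(n^{\kappa/2})$. To reach the form stated in the main-text Theorem~\ref{thm:classical_sim_matchgate}, I will also use
\begin{equation}
\norm{O}_\infty\leq\sum_{\vec{\mu}}\abs{c_{\vec{\mu}}}\,\norm{G^\kappa_{\vec{\mu}}}_\infty=\norm{\vec{c}}_1,
\end{equation}
which holds because each Hermitian Majorana string is unitary with unit operator norm.

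The main obstacle is that the argument rests on the collective shadow-norm bound of Ref.~\cite{west2026fermionic}, which is imported rather than proved here. A naive alternative would expand $\widetilde{O}_i$ in Majorana monomials and sum per-monomial norms. That route gives a dependence on $\norm{\widetilde{\vec{c}}}_1$, which is not preserved under the $\mathrm{SO}(2n)$ mixing, and would lose the uniformity in $i$.

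A second subtlety is the median-of-means versus empirical-mean issue inside Theorem~\ref{thm:classical_shadow}. I would address it simply by invoking that theorem as stated, and by noting that the discrete ensemble $\VC_{\rm F}$ matches the second moments of $\FGU(n)$, so the same norm bound applies.
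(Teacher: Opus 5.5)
Your proposal is correct and follows essentially the same route as the paper's proof: the imported collective bound together with $\norm{\widetilde{O}_i}_\infty=\norm{O}_\infty$ gives a uniform bound on $\max_i\norm{\widetilde{O}_i}_{\rm FGU}^2$, this is inserted into the generic shadow bound with $C=T$, and then $\binom{2n}{\kappa}\binom{n}{\kappa/2}^{-1}\in\Theta(n^{\kappa/2})$ is used. One minor precision: the shadow variance is cubic in the measurement unitary, so transferring the norm bound to the signed-permutation ensemble $\VC_{\rm F}$ requires agreement with $\FGU(n)$ up to third moments (the matchgate 3-design property established in the Matchgate-shadow literature), not just second moments.
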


\begin{proof}
For $\kappa=2r$, Eq.~\eqref{eq:collective_fermionic_shadow} gives
\begin{equation}
    \max_i
    \norm{\widetilde{O}_i}_{\rm FGU}^2
    \leq
    \frac{3}{2}
    \binom{2n}{\kappa}
    \binom{n}{\kappa/2}^{-1}
    \norm{O}_\infty^2,
\end{equation}
where we used
$\norm{\widetilde{O}_i}_\infty=\norm{O}_\infty$. Applying the general
classical-shadow bound in Eq.~\eqref{eq:shadow_complexity} to the
collection of $T$ backpropagated observables yields
\begin{equation}
    N_{\QS}
    \in
    \order{
        \frac{
        \binom{2n}{\kappa}
        }{
        \binom{n}{\kappa/2}
        }
        \frac{\norm{O}_\infty^2}{\epsilon^2}
        \log\left(\frac{T}{\delta}\right)
    }\;.
\end{equation}
For fixed $\kappa$,
\begin{equation}
    \binom{2n}{\kappa}
    \binom{n}{\kappa/2}^{-1}
    \in
    \Theta(n^{\kappa/2}),
\end{equation}
which gives Eq.~\eqref{eq:classical_shadow_fermionic}.
\end{proof}

For comparison with bounds expressed directly in terms of the
coefficient vector, we can use
\begin{equation}
    \norm{O}_\infty
    \leq
    \sum_{\vec{\mu}}
    \abs{c_{\vec{\mu}}}
    \norm{G_{\vec{\mu}}^\kappa}_\infty
    =
    \norm{\vec{c}}_1,
\end{equation}
where
$\norm{G_{\vec{\mu}}^\kappa}_\infty=1$. Hence,
Eq.~\eqref{eq:classical_shadow_fermionic} also implies the more
conservative coefficient-based bound
\begin{equation}
    N_{\QS}
    \in
    \order{
        \frac{n^{\kappa/2}}{\epsilon^2}
        \log\left(\frac{T}{\delta}\right)
        \norm{\vec{c}}_1^2
    }\;.
\end{equation}
For the physically common cases $\kappa=2$ and $\kappa=4$, the
polynomial dependence arising from the Matchgate shadow norm is
therefore linear and quadratic in $n$, respectively.

Importantly, the collective bound in
Eq.~\eqref{eq:collective_fermionic_shadow} does not require estimating
the Majorana components with separate quantum data sets. For the
CS, we process the same Matchgate shadow samples once
into an empirical description of the degree-$\kappa$ Majorana sector.

For the signed-permutation implementation, a degree-$\kappa$ Majorana
string contributes to a measurement outcome only when its support is a
union of $\kappa/2$ pairs in the corresponding Majorana matching.
Consequently, a single snapshot contributes to at most
\begin{equation}
    \binom{n}{\kappa/2}
    \in
    \order{n^{\kappa/2}}
\end{equation}
degree-$\kappa$ components. Accumulating the $N_{\QS}$ snapshots
therefore requires
\begin{equation}
    N_{\rm shadow\mbox{-}post}
    \in
    \order{
        N_{\QS}n^{\kappa/2}
    }.
    \label{eq:matchgate_shadow_postprocessing}
\end{equation}
This cost is incurred only once. After the empirical degree-$\kappa$
description has been constructed, evaluating its inner product with a
backpropagated observable requires at most $\order{n^\kappa}$
operations per circuit instance, which is subleading compared with the
dense $\liea$-sim matrix-vector operations considered below.

\subsection{\label{adx:time_complexity_gsim_majorana}Classical Time Complexity for $\liea$-sim with Majorana operator formalism} 
A central cost of $\liea$-sim is constructing the matrix representation of the adjoint action. In a generic implementation, this representation is obtained by evaluating the adjoint action on each basis element separately. For Matchgate circuits, however, the Majorana formalism provides an equivalent and more direct implementation. Rather than constructing the adjoint representation element by element, we first combine the linear transformations associated with the circuit layers and then apply the resulting transformation to the coefficient tensor of the observable as a whole. Therefore, this can be viewed as a specialized, matrix-free implementation of $\liea$-sim on the invariant Majorana sector $\BC_{\kappa}$

Consider a quantum circuit $U  = U_{Q_L} \cdots U_{Q_1}$  where each layer is generated by a quadratic free-fermionic Hamiltonian, and is associated with an orthogonal matrix $Q_\ell \in \mathrm{SO}(2n)$. The complete circuit is itself a fermionic Gaussian unitary characterized by 
\begin{equation}
    Q = Q_L \cdots Q_1. 
\end{equation}
Thus, for a fixed circuit instance, the combined matrix $Q$ needs to be constructed only once. Using standard dense matrix multiplication, composing the $L$ matrices requires $\order{Ln^3}$ operations.

The adjoint action of the combined unitary on a $\kappa$-local Majorana string is 
\begin{equation}
    U\ad \gamma_{\vec{\mu}} U = \sum_{\vec{\al}} Q_{\mu_1 \al_1} \cdots Q_{\mu_\kappa \al_\kappa} \gamma_{\vec{\al}}\;.
\end{equation}
where the sum runs over $\al_i\in{1,\ldots,2n}$, with the resulting Majorana products brought into canonical order. Therefore, for an observable $O = \sum_{\vec{\mu}} c_{\vec{\mu}} \gamma_{\vec{\mu}}$,  the backpropagated observable can be written as 
\begin{equation}
    U\ad O U = \sum_{\vec{\mu}} c_{\vec{\mu}} U\ad \gamma_{\vec{\mu}} U =  \sum_{\vec{\mu}} \sum_{\vec{\al}}  c_{\vec{\mu}} Q_{\mu_1 \al_1} \cdots Q_{\mu_\kappa \al_\kappa} \gamma_{\vec{\al}} = \sum_{\vec{\al}} \tilde{c}_{\vec{\al}} \gamma_{\vec{\al}}\;,   
\end{equation}
where $ \tilde{c}_{\vec{\al}} = \sum_{\vec{\mu}} c_{\vec{\mu}} Q_{\mu_1 \al_1} \cdots Q_{\mu_\kappa \al_\kappa} $ denotes the backpropagated coefficient tensor.  
It  is therefore sufficient to compute $\tilde{c}_{\vec{\al}}$ for all $\vec{\al} \in \CC_{2n, \kappa}$.  

To perform this transformation efficiently, we initialize
\begin{equation}
\mathsf{T}^{(0)}_{\mu_1\cdots\mu_\kappa}
=
c_{\mu_1\cdots\mu_\kappa}
\end{equation}
and contract one tensor index at a time. At the $i$-th step, we define
\begin{equation}
\mathsf{T}^{(i)}_{\al_1\cdots\al_i \mu_{i+1}\cdots\mu_\kappa}
=
\sum_{\mu_i=1}^{2n}
\mathsf{T}^{(i-1)}_{\al_1\cdots\al_{i-1}
\mu_i\mu_{i+1}\cdots\mu_\kappa}
Q_{\mu_i\al_i}.
\end{equation}
After all $\kappa$ contractions have been performed,
\begin{equation}
\mathsf{T}^{(\kappa)}_{\al_1\cdots\al_\kappa}
=
\widetilde{c}_{\al_1\cdots\al_\kappa}.
\end{equation}

Each intermediate tensor contains $\order{n^\kappa}$ entries, and computing each entry requires a summation over $2n$ values. A single contraction therefore requires $\order{n^{\kappa+1}}$ operations. Since the contraction is performed once for each of the $\kappa$ tensor indices, the cost of backpropagating the observable for one circuit instance is 
$\order{\kappa n^{\kappa+1}}$. 
For fixed fermionic locality $\kappa\in\order{1}$ and $T$ circuit instances, this gives
\begin{equation}
\order{T\kappa n^{\kappa+1}}
\subseteq
\order{Tn^{\kappa+1}}.
\end{equation}
If the same matrix $Q$ is reused across all $T$ evaluations, its construction contributes the one-time cost $\order{Ln^3}$. If the circuit parameters change between instances, and hence produce a different combined matrix $Q$, this composition cost must instead be incurred for each circuit instance.

Combining the contributions above, the Classical Time complexity
contains
\begin{enumerate}
    \item the one-time Matchgate shadow post-processing cost,
    $\order{N_{\QS}n^{\kappa/2}}$;
    \item the construction of the combined orthogonal matrix
    $Q^{(i)}$ for each circuit instance,
    $\order{TLn^3}$; and
    \item the backpropagation of the observable $O$ for all
    circuit instances,
    $\order{Tn^{\kappa+1}}$.
\end{enumerate}
Hence,
\begin{equation}
    N_{\CT}
    \in
    \order{
        N_{\QS}n^{\kappa/2}
        +
        TLn^3
        +
        Tn^{\kappa+1}
    }.
    \label{eq:NCT_matchgate_appendix}
\end{equation}
Together with Eq.~\eqref{eq:classical_shadow_fermionic} and
$N_{\QT}\in\order{n}$, Eq.~\eqref{eq:NCT_matchgate_appendix}
proves Theorem~\ref{thm:classical_sim_matchgate}.

\section{\label{adx:Sn_QNN}$S_n$-equivariant circuits}

In this appendix, we provide the proof of
Theorem~\ref{thm:classical_sim_Sn}. For qubits, the Schur-basis
decomposition of an $S_n$-equivariant operator can be written as
\begin{equation}
    A
    \cong
    \bigoplus_\lambda
    \eye_{m_\lambda}\otimes A_\lambda,
\end{equation}
where the irrep labels can be parameterized as
$\lambda=(n-m,m)$ with $m=0,\ldots,\lfloor n/2\rfloor$. The
corresponding dimension-register blocks have size
\begin{equation}
    d_\lambda
    =
    n-2m+1
    \leq
    n+1.
    \label{eq:Sn_block_dimension}
\end{equation}
Hence, there are $\order{n}$ distinct blocks and each has dimension at
most $\order{n}$.

For an $S_n$-equivariant circuit and observable,
\begin{equation}
    f_U(\rho,O)
    =
    \sum_\lambda
    \Tr\left[
        \rho_\lambda
        U_\lambda^\dagger O_\lambda U_\lambda
    \right],
    \label{eq:Sn_expectation_appendix}
\end{equation}
where $\rho_\lambda$ denotes the projection of $\rho$ onto the
corresponding dimension-register block after summing over the
multiplicity labels. The initial state itself need not be
$S_n$-equivariant.

\subsection{Quantum Sample Complexity}

For a generic initial state, we use deep Permutation Invariant
Classical Shadows (PI-CS)~\cite{sauvage2024classical,
chang2026practical}. For an $S_n$-equivariant observable $A$, the
corresponding estimator satisfies
\begin{equation}
    \Var[\widehat A]
    \leq
    \order{n^2}\norm{A}_\infty^2.
    \label{eq:Sn_shadow_variance}
\end{equation}
For each circuit instance, define
\begin{equation}
    \widetilde O_i
    =
    (U^{(i)})^\dagger O U^{(i)}.
\end{equation}
Since $U^{(i)}$ and $O$ are both $S_n$-equivariant,
$\widetilde O_i$ is also $S_n$-equivariant, and unitary conjugation
preserves its spectral norm,
\begin{equation}
    \norm{\widetilde O_i}_\infty
    =
    \norm{O}_\infty.
    \label{eq:Sn_norm_preservation}
\end{equation}
Applying Eq.~\eqref{eq:shadow_complexity} to the collection
$\{\widetilde O_i\}_{i=1}^{T}$ therefore gives
\begin{equation}
    N_{\QS}
    \in
    \order{
        \frac{n^2}{\epsilon^2}
        \log\left(\frac{T}{\delta}\right)
        \norm{O}_\infty^2
    }.
    \label{eq:Sn_NQS_appendix}
\end{equation}
The PI-CS data are acquired only once and reused for all $T$ circuit
instances, so $T$ enters only through the simultaneous-estimation
factor.

For the normalized symmetrized Pauli operators used in the main text,
$\norm{P_\alpha}_\infty\leq1$. Hence, for
$O=\sum_\alpha c_\alpha P_\alpha$,
\begin{equation}
    \norm{O}_\infty
    \leq
    \sum_\alpha
    \abs{c_\alpha}
    \norm{P_\alpha}_\infty
    \leq
    \norm{\vec{c}}_1,
\end{equation}
and Eq.~\eqref{eq:Sn_NQS_appendix} implies
\begin{equation}
    N_{\QS}
    \in
    \order{
        \frac{n^2}{\epsilon^2}
        \log\left(\frac{T}{\delta}\right)
        \norm{\vec{c}}_1^2
    }.
    \label{eq:Sn_NQS_coeff_appendix}
\end{equation}

\subsection{Classical Time Complexity}

We next consider the circuit-dependent classical work. For the one- and
two-local symmetrized Pauli generators used here, the matrices
$(H_\ell)_\lambda$ are banded Hermitian matrices with constant
bandwidth~\cite{chang2026practical}. Their eigendecompositions,
\begin{equation}
    (H_\ell)_\lambda
    =
    W_{\ell,\lambda}
    \Xi_{\ell,\lambda}
    W_{\ell,\lambda}^\dagger,
    \label{eq:Sn_eigendecomposition}
\end{equation}
where $W_{\ell, \lambda}$ is the unitary matrix for change of basis and $\Xi_{\ell, \lambda}$ is the diagonal matrix of eigenvalues, can be evaluated in $\order{d_\lambda^2}$ operations per block.
Moreover,
\begin{equation}
    \sum_\lambda d_\lambda^2
    =
    \sum_{m=0}^{\lfloor n/2\rfloor}
    (n-2m+1)^2
    \in
    \order{n^3}.
    \label{eq:Sn_sum_d2}
\end{equation}
Thus, diagonalizing the Schur blocks of the $L$ generators requires the
one-time cost
\begin{equation}
    N_{\rm circ,pre}
    \in
    \order{Ln^3}.
    \label{eq:Sn_preprocessing}
\end{equation}
If fewer than $L$ distinct generators appear, $L$ can be replaced by
their number.

For the parameter $w_{i,\ell}$, the corresponding gate block is
\begin{equation}
    (U_{i,\ell})_\lambda
    =
    W_{\ell,\lambda}
    e^{-iw_{i,\ell}\Xi_{\ell,\lambda}}
    W_{\ell,\lambda}^\dagger.
\end{equation}
The diagonal exponential costs $\order{d_\lambda}$, while the dense
matrix operations required to update the propagated block are bounded
by $\order{d_\lambda^\omega}$. Therefore, one circuit instance requires
\begin{equation}
    N_{\rm eval}
    \in
    \order{
        L\sum_\lambda d_\lambda^\omega
    }
    \subseteq
    \order{Ln^{\omega+1}},
    \label{eq:Sn_evaluation_cost}
\end{equation}
and the $T$ circuit instances contribute
$\order{TLn^{\omega+1}}$.

We also include the classical processing of the deep PI-CS data. Each
snapshot resolves one irrep label and produces an estimator acting on
the corresponding dimension register. Since $d_\lambda\leq n+1$,
accumulating a dense estimator for the observed block requires at most
$\order{n^2}$ operations per snapshot. A conservative one-time bound is
therefore
\begin{equation}
    N_{\rm shadow\mbox{-}post}
    \in
    \order{
        N_{\QS}n^2
    }.
    \label{eq:Sn_shadow_postprocessing}
\end{equation}
The resulting empirical block description is reused for all circuit
instances. The final contraction with the backpropagated observable
costs at most
$\order{\sum_\lambda d_\lambda^2}=\order{n^3}$ per instance and is
absorbed into the circuit-evaluation term above. Combining the
contributions gives
\begin{equation}
    N_{\CT}
    \in
    \order{
        N_{\QS}n^2
        +
        Ln^3
        +
        TLn^{\omega+1}
    }.
    \label{eq:Sn_NCT_appendix}
\end{equation}
If the required irrep-block description of the input state is known
classically, the first term is absent.

\subsection{Quantum Time Complexity}

Deep PI-CS requires a quantum Schur transform. Efficient approximate
implementations have Quantum Time complexity
\begin{equation}
    N_{\QT}
    \in
    \order{
        n\,
        \poly\left(
            \log\epsilon_{\rm QST}^{-1}
        \right)
    },
    \label{eq:Sn_NQT_appendix}
\end{equation}
where $\epsilon_{\rm QST}$ denotes the approximation error of the Schur
transform. The subsequent measurements of the irrep and dimension
registers do not change this leading scaling.

Equations~\eqref{eq:Sn_NQS_appendix},
\eqref{eq:Sn_NCT_appendix}, and
\eqref{eq:Sn_NQT_appendix} prove
Theorem~\ref{thm:classical_sim_Sn}.

\section{$\U(1)$-equivariant circuit\label{adx:U1_symmetric}}

As discussed in Section~\ref{sec:U1_symmetric}, a $\U(1)$-equivariant
circuit preserves the Hamming weight of computational basis states.
Accordingly,
\begin{equation}
    \HC
    =
    \left(\Cbb^2\right)^{\otimes n}
    \cong
    \bigoplus_{h=0}^n\HC_h^{(n)},
\end{equation}
where
\begin{equation}
    \HC_h^{(n)}
    =
    \spn(B_h^{(n)}),
    \qquad
    B_h^{(n)}
    =
    \left\{
        \ket{\vec z}
        \;\middle|\;
        \vec z\in\Omega_h^{(n)}
    \right\},
\end{equation}
with
\begin{equation}
    \Omega_h^{(n)}
    =
    \left\{
        \vec z\in\{0,1\}^n
        \;\middle|\;
        {\rm HW}(\vec z)=h
    \right\},
    \qquad
    d_h^{(n)}
    =
    \binom{n}{h}.
\end{equation}
If the initial state is pure and lies entirely within one such sector,
the complete evolution remains confined to the
$d_h^{(n)}$-dimensional subspace.

\subsection{Quantum Sample Complexity}

If the initial state is known classically, no quantum data acquisition
is required. Otherwise, we reconstruct it within the relevant
Hamming-weight sector using the All-Pairs $\U(1)$-symmetric classical
shadow protocol~\cite{hearth2024efficient}. Its measurement channel is
\begin{equation}
    \MC
    =
    \frac{1}{\abs{\Gamma_n}}
    \sum_{\pi\in\Gamma_n}\MC_\pi,
    \qquad
    \MC_\pi
    =
    \prod_{[ij]\in\pi}\MC_{[ij]},
    \label{eq:MC_all_pairs}
\end{equation}
where $\Gamma_n$ is the set of pairings of the qubits and $[ij]\in\pi$
denotes one pair. Each two-qubit channel is
\begin{equation}
    \MC_{[ij]}[\rho]
    =
    \Ebb_{U_{[ij]},z_i,z_j}
    \left[
        U_{[ij]}^\dagger
        \ketbraq{z_i z_j}
        U_{[ij]}
    \right],
    \label{eq:MC_ij}
\end{equation}
where $U_{[ij]}$ is sampled from a 2-design over the two-qubit
$\U(1)$-equivariant unitary group~\cite{hearth2025unitary}.

We use the following shadow-norm bound.

\begin{lemma}[Shadow norm of $\U(1)$-equivariant classical
shadows~\cite{hearth2024efficient}]
\label{lem:u1_shadow_norm}
Consider an $n$-qubit state with fixed particle number and a basis
operator $A\in\XC_{\U(1)}$ containing $n_z$ Pauli-$Z$ operators and
$n_{a,+}=n_{a,-}$ raising and lowering operators. Its squared shadow
norm satisfies
\begin{equation}
    \norm{A}_{\rm shadow}^2
    \leq
    \left(\frac{3}{2}\right)^{n_{a,+}+2n_z}
    \frac{n^{n_{a,+}}}{n_{a,+}!}
    \norm{A}_\infty^2.
    \label{eq:shadow_norm_u1}
\end{equation}
\end{lemma}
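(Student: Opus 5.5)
The plan is to compute the All-Pairs measurement channel of Eq.~\eqref{eq:MC_all_pairs} explicitly on the basis $\XC_{\U(1)}$. The goal is to show that it acts diagonally there, with an explicit eigenvalue $\lambda_A$ for each basis operator $A$. The estimator is then $\Tr[\widehat\rho A]=\lambda_A^{-1}\Tr[V^\dagger\Pi_xVA]$. The second moment in Eq.~\eqref{eq:shadow_norm} is controlled by $\lambda_A^{-1}$ times a factor coming from the per-pair second moments of the two-qubit $\U(1)$ 2-design. The structure below is what I would aim for; the constants $(3/2)$ are the ones I expect the two-qubit calculation to produce.

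\textbf{Step 1 (two-qubit channel).} Since $U_{[ij]}$ is a 2-design over the two-qubit $\U(1)$-equivariant group, $\MC_{[ij]}$ commutes with that group. It therefore acts as a scalar on each of its isotypic operator subspaces: the span of $\{I,Z_i,Z_j,Z_iZ_j\}$ split into invariant pieces, and the hopping sector $\spn\{a_i^\dagger a_j,\,a_ia_j^\dagger\}$. Operators with unbalanced raising and lowering on the pair, such as a lone $a_i$, are annihilated. A direct Schur-lemma computation gives the eigenvalues. I expect an inverse eigenvalue of order $3/2$ for each hopping factor and a comparable factor per $Z$ that, once squared in the variance, yields $(3/2)^{2}$ per $Z$.

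\textbf{Step 2 (average over pairings).} For a fixed pairing $\pi$, $\MC_\pi$ is a tensor product of the two-qubit channels. Hence $\MC_\pi(A)\neq0$ only if every $a^\dagger$ in $A$ is paired with an $a$. Averaging over $\pi\in\Gamma_n$, $A$ remains an eigenvector, with eigenvalue equal to the probability that a uniform perfect matching pairs its $2m$ hopping sites, $m=n_{a,+}$, into raising--lowering pairs, weighted by the Step~1 eigenvalues. The number of admissible sub-matchings is $m!$, and each fixed one occurs with probability $\prod_{s=0}^{m-1}(n-1-2s)^{-1}\approx n^{-m}$. This gives $\lambda_A^{-1}\lesssim (3/2)^{m}\,n^{m}/m!$, times the $Z$ contributions. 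This identifies the origin of the factor $n^{n_{a,+}}/n_{a,+}!$.

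\textbf{Step 3 (variance).} I would write $\Ebb[\Tr(\widehat\rho A)^2]=\lambda_A^{-2}\,\Ebb_\pi\,\Ebb\big[\Tr(V^\dagger\Pi_xVA)^2\big]$. Conditioned on a good pairing, the measured quantity factorizes over pairs. Each pair contributes a bounded two-qubit second moment, computed from the 2-design as a twirl of $A_{[ij]}^{\otimes2}$. For pairings that do not match the hopping sites, the conditional contribution vanishes. One power of $\lambda_A^{-1}$ is therefore cancelled by the probability of a good pairing, leaving $\lambda_A^{-1}$ times the product of local factors. The local factors are bounded using $\norm{A}_\infty$, and this is where fixed particle number enters: the state's support in one sector keeps the diagonal $Z$ terms from producing cross-sector contributions.

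\textbf{Main obstacle.} I expect Step 3 to be the hardest part, specifically the per-pair second-moment twirl and keeping the $Z$-factors to exactly $(3/2)^{2n_z}$ rather than something larger. $Z$'s sitting in the same pair as a hopping site, or in a pair with another $Z$, need separate case treatment. I would first handle each pair type (hop--hop, $Z$--$I$, $Z$--$Z$, $Z$--hop) with the explicit $\U(1)$ Weingarten formulas, then bound the worst case uniformly over states $\sigma$ in the fixed Hamming-weight sector.
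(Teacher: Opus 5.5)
The paper does not prove this lemma; it is quoted from Ref.~\cite{hearth2024efficient}. Your argument therefore has to stand on its own, and as written it has a genuine gap.

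Your plan assumes the All-Pairs channel is diagonal on $\XC_{\U(1)}$, with one eigenvalue $\lambda_A$ per string. This fails as soon as $n_z\geq1$. The pair channel $\MC_{[ij]}$ annihilates charged operators, acts as the identity on $\spn\{I,\,Z_i+Z_j,\,Z_iZ_j\}$, and acts as $1/3$ on the adjoint sector $\spn\{Z_i-Z_j,\,a_i^\dagger a_j,\,a_ia_j^\dagger\}$. Hence $\MC_{[ij]}(Z_iI_j)=\tfrac23Z_i+\tfrac13Z_j$. Averaging over pairings gives $\MC(Z_i)=\tfrac23Z_i+\frac{1}{3(n-1)}\sum_{j\neq i}Z_j$, i.e.\ $\MC^{-1}(Z_i)=S/n+\frac{3(n-1)}{2n-3}(Z_i-S/n)$ with $S=\sum_jZ_j$.

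So the estimator is not $\lambda_A^{-1}\Tr[V^\dagger\Pi_xVA]$. Using $\tfrac32\Tr[V^\dagger\Pi_xVZ_i]$ would even be biased, and Steps~2--3 break for every $A$ containing a $Z$. The problematic case is a $Z$ paired with an identity site. The $Z$--hop pairs you list never contribute, because such a pair carries net charge $\pm1$ and is annihilated.

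Here is what is missing. Condition on the hop sites being matched $a^\dagger$-to-$a$; your counting there is correct. The remaining $n'=n-2m$ sites are then paired uniformly, so $\MC(A)=P_{\rm good}\,3^{-m}A_{\rm hop}\otimes\MC_{n'}(A_Z)$. You must invert $\MC_{n'}$ on the span of weight-$n_z$ $Z$-strings. That map is $S_{n'}$-equivariant but not diagonal: it acts as a scalar on each irrep $(n'-q,q)$, $q=0,\ldots,n_z$, which is consistent with the $\order{n_z^3}$ inversion cost quoted in the paper's $\U(1)$ appendix. You then have to bound the second moment of the resulting non-product operator.

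This is where fixed particle number genuinely enters. On a fixed-$h$ state every snapshot returns $\Tr[\widehat\rho S]=n-2h$ exactly, so the components of $\MC^{-1}(A)$ that are functions of $S$ are noiseless. The fluctuating part carries coefficients like $3(n-1)/(2n-3)\to3/2$, which is the origin of the $(3/2)^2$ per $Z$.

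For pure hopping strings ($n_z=0$) your route does go through, but the constants sit in different places. The inverse pair eigenvalue on hops is $3$, not $3/2$, so $\lambda_A^{-1}=3^m\prod_{s=0}^{m-1}(n-1-2s)/m!$. The factor $3/2$ per hop is $3^2\times\tfrac16$, where $\tfrac16$ is the Haar average of $\abs{\langle b|a_i^\dagger a_j|b\rangle}^2$ over the measured block state of the pair. This gives exactly $\norm{A}_{\rm shadow}^2=P_{\rm good}\lambda_A^{-2}6^{-m}=(3/2)^m\prod_{s}(n-1-2s)/m!\leq(3/2)^mn^m/m!$.

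Note also that this conditional second moment carries the Born weight. It is therefore a three-copy twirl (of $\sigma\otimes A\otimes A^\dagger$), not a twirl of $A^{\otimes2}$. A 2-design fixes the channel but not this quantity, so you should take Haar (or a 3-design) on the $\U(1)\oplus\U(2)\oplus\U(1)$ gate group.
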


We now reconstruct a pure state of known Hamming weight,
\begin{equation}
    \ket{\psi}
    =
    \sum_{\vec z\in\Omega_h^{(n)}}
    \alpha_{\vec z}\ket{\vec z}
    =
    \sum_{\vec z\in\Omega_h^{(n)}}
    \alpha_{\vec z}
    a_{\vec z}^\dagger\ket{\emptyset},
    \label{eq:u1_initial_state_raising}
\end{equation}
where $\ket{\emptyset}=\ket{0}^{\otimes n}$ and
\begin{equation}
    a_{\vec z}^\dagger
    =
    (a_1^\dagger)^{z_1}\cdots
    (a_n^\dagger)^{z_n}.
\end{equation}
For
$\vec\mu,\vec\nu\in\Omega_h^{(n)}$,
\begin{equation}
    \Tr[
        a_{\vec\mu}^\dagger
        a_{\vec\nu}
        \rho
    ]
    =
    \alpha_{\vec\mu}^*
    \alpha_{\vec\nu}.
    \label{eq:u1_amplitude_product}
\end{equation}
Choose a reference configuration $\vec\mu_r$ for which
$r_{\vec\mu_r}:=\abs{\alpha_{\vec\mu_r}}$ is maximal. Normalization
implies
\begin{equation}
    r_{\vec\mu_r}
    \geq
    \frac{1}{\sqrt{d_h^{(n)}}}.
    \label{eq:u1_reference_bound}
\end{equation}
The diagonal expectation value determines
$r_{\vec\mu_r}$, while
$\alpha_{\vec\mu_r}^*\alpha_{\vec\nu}$ determines every remaining
amplitude relative to the same global phase.

The operators
$a_{\vec\mu}^\dagger a_{\vec\nu}$ can be expanded in
$\XC_{\U(1)}$. For fixed $h$, every term contains at most $h$
raising/lowering pairs and at most $h$ Pauli-$Z$ operators. Thus,
Lemma~\ref{lem:u1_shadow_norm} gives
\begin{equation}
    \norm{
        a_{\vec\mu}^\dagger a_{\vec\nu}
    }_{\rm shadow}^2
    \in
    \order{n^h},
    \label{eq:u1_amplitude_shadow_norm}
\end{equation}
where factors depending only on $h$ are absorbed into the
$\order{\cdot}$ notation.

\begin{lemma}[Quantum Sample Complexity of $\U(1)$-equivariant state
tomography]
\label{lem:u1_state_tomography}
Consider a pure initial state $\ket{\psi}$ with fixed Hamming weight
$h\in\order{1}$. The Quantum Sample complexity required to reconstruct
the state, up to a global phase, with error at most
$\widetilde\epsilon$ per amplitude and success probability at least
$1-\delta$ satisfies
\begin{equation}
    N_{\QS}
    \in
    \order{
        \frac{n^{2h}}{\widetilde\epsilon^2}
        \log\left(\frac{n}{\delta}\right)
    }.
    \label{eq:NQS_U1_state_tomography}
\end{equation}
\end{lemma}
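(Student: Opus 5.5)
The plan is to estimate, from a single $\U(1)$-symmetric classical shadow data set, the collection of amplitude products $\Tr[a_{\vec\mu_r}^\dagger a_{\vec\nu}\rho]=\alpha_{\vec\mu_r}^*\alpha_{\vec\nu}$ in Eq.~\eqref{eq:u1_amplitude_product}, together with the diagonal populations $\abs{\alpha_{\vec\mu}}^2=\Tr[a^\dagger_{\vec\mu}a_{\vec\mu}\rho]$, and then read off every amplitude relative to the reference configuration. Because the reference $\vec\mu_r$ is not known in advance, I will estimate the whole family $\{a_{\vec\mu}^\dagger a_{\vec\nu}\}_{\vec\mu,\vec\nu\in\Omega_h^{(n)}}$ simultaneously. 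This family has $C=(d_h^{(n)})^2=\binom{n}{h}^2\in\order{n^{2h}}$ members, so $\log(C/\delta)\in\order{\log(n/\delta)}$ for fixed $h$.

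\textbf{Step 1 (simultaneous estimation).} Each $a_{\vec\mu}^\dagger a_{\vec\nu}$ has operator norm $1$. Its expansion in $\XC_{\U(1)}$ involves at most $2^{\order{h}}$ terms, each with at most $h$ raising/lowering pairs and at most $h$ $Z$'s. By Lemma~\ref{lem:u1_shadow_norm} and the triangle inequality for the shadow estimator, this gives the bound~\eqref{eq:u1_amplitude_shadow_norm}, $\order{n^h}$, with $h$-dependent constants absorbed. Non-Hermitian operators are handled by estimating the Hermitian and anti-Hermitian parts separately, which at most doubles $C$. Theorem~\ref{thm:classical_shadow}, applied with accuracy $\epsilon'$, then gives all products to additive error $\epsilon'$ with probability $1-\delta$ using $\order{n^h\epsilon'^{-2}\log(n/\delta)}$ samples.

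\textbf{Step 2 (choosing the reference).} Take $\hat{\vec\mu}_r$ to be the configuration with the largest estimated population. Its true population is then at least $1/d_h^{(n)}-2\epsilon'$, so by Eq.~\eqref{eq:u1_reference_bound} we have $r_{\hat{\vec\mu}_r}\gtrsim 1/\sqrt{d_h^{(n)}}$ once $\epsilon'\ll 1/d_h^{(n)}$. Fix the global phase so that $\alpha_{\hat{\vec\mu}_r}$ is real and positive, and set $\hat r=\sqrt{\widehat{\abs{\alpha_{\hat{\vec\mu}_r}}^2}}$ and $\hat\alpha_{\vec\nu}=\widehat{\alpha^*_{\hat{\vec\mu}_r}\alpha_{\vec\nu}}/\hat r$.

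\textbf{Step 3 (error propagation).} This is the main obstacle. The error in $\hat\alpha_{\vec\nu}$ is bounded by roughly $\epsilon'/r+\abs{\alpha_{\vec\nu}}\,\abs{\hat r-r}/r$. The square-root error satisfies $\abs{\hat r-r}\leq \epsilon'/r$, so the total is $\order{\epsilon'/r^2}\subseteq\order{\epsilon' d_h^{(n)}}$. I will try to sharpen this by bounding $\abs{\hat r -r}\le\epsilon'/(2r)$ and using $\abs{\alpha_{\vec\nu}}\le r$, which reduces it to $\order{\epsilon'/r}=\order{\epsilon'\sqrt{d_h^{(n)}}}$. Hence it suffices to take $\epsilon'\sim\widetilde\epsilon/\sqrt{d_h^{(n)}}=\widetilde\epsilon\, n^{-h/2}$; this choice also satisfies the condition $\epsilon'\ll 1/d_h^{(n)}$ from Step 2 in the regime of interest. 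Substituting into Step 1 gives $\order{n^{h}\cdot n^{h}\widetilde\epsilon^{-2}\log(n/\delta)}=\order{n^{2h}\widetilde\epsilon^{-2}\log(n/\delta)}$, which is Eq.~\eqref{eq:NQS_U1_state_tomography}.

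If only the cruder $\order{\epsilon'/r^2}$ bound can be justified, the exponent would worsen. In that case I would fall back on the sharpened square-root estimate, which is exactly what makes the claimed $n^{2h}$ come out.
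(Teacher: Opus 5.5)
Your proposal is correct and follows essentially the same route as the paper: estimate the amplitude products $\alpha_{\vec\mu_r}^*\alpha_{\vec\nu}$ with $\U(1)$-symmetric shadows of squared shadow norm $\order{n^h}$, use $r_{\vec\mu_r}\geq 1/\sqrt{d_h^{(n)}}$ to set $\epsilon'\sim\widetilde\epsilon/\sqrt{d_h^{(n)}}$, and obtain $n^h d_h^{(n)}\in\order{n^{2h}}$ samples. The differences are minor. You estimate all $(d_h^{(n)})^2$ products so that the reference can be chosen from the data, whereas the paper takes the maximizer as given and uses only $\order{d_h^{(n)}}$ quantities; the logarithmic factor is the same. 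You also spell out the error propagation, and your closing hedge is unnecessary: choosing the largest estimated population directly gives $\abs{\alpha_{\vec\nu}}^2\leq r_{\hat{\vec\mu}_r}^2+2\epsilon'$, and it is this bound, not the factor $1/2$, that yields the $\order{\epsilon'\sqrt{d_h^{(n)}}}$ error.
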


\begin{proof}
Assume that the quantities in
Eq.~\eqref{eq:u1_amplitude_product} are estimated to additive error
$\epsilon'$. From Eq.~\eqref{eq:u1_reference_bound},
\begin{equation}
    \abs{
        r_{\vec\mu_r}
        -
        \widetilde r_{\vec\mu_r}
    }
    \lesssim
    \epsilon'\sqrt{d_h^{(n)}}.
\end{equation}
The same scaling holds for the remaining reconstructed amplitudes,
\begin{equation}
    \abs{
        \alpha_{\vec\nu}
        -
        e^{i\phi}\widetilde\alpha_{\vec\nu}
    }
    \lesssim
    \epsilon'\sqrt{d_h^{(n)}},
\end{equation}
for one common global phase $\phi$. Hence, it suffices to choose
\begin{equation}
    \epsilon'
    \in
    \order{
        \frac{\widetilde\epsilon}
        {\sqrt{d_h^{(n)}}}
    }.
\end{equation}
Only $\order{d_h^{(n)}}$ expectation values relative to the fixed
reference configuration are required. Applying
Eq.~\eqref{eq:shadow_complexity} together with
Eq.~\eqref{eq:u1_amplitude_shadow_norm} gives
\begin{align}
    N_{\QS}
    &\in
    \order{
        \frac{n^h}{(\epsilon')^2}
        \log\left(
            \frac{d_h^{(n)}}{\delta}
        \right)
    }
    \nonumber\\
    &\subseteq
    \order{
        \frac{n^h d_h^{(n)}}
        {\widetilde\epsilon^2}
        \log\left(
            \frac{n}{\delta}
        \right)
    }
    \nonumber\\
    &=
    \order{
        \frac{n^{2h}}
        {\widetilde\epsilon^2}
        \log\left(
            \frac{n}{\delta}
        \right)
    },
\end{align}
where $d_h^{(n)}=\binom{n}{h}\in\order{n^h}$ for fixed $h$.
\end{proof}

We can now translate the amplitude accuracy into an error on the final
expectation value.

\begin{theorem}[Quantum Sample Complexity for Classical Simulation of
$\U(1)$-equivariant circuits]
\label{thm:u1_sample_appendix}
Consider $T$ $\U(1)$-equivariant circuit instances acting on the same
pure initial state $\ket{\psi}$ with fixed Hamming weight
$h\in\order{1}$. Let
\begin{equation}
    O
    =
    \sum_\alpha c_\alpha P_\alpha
\end{equation}
be a $\U(1)$-equivariant observable with
$\norm{P_\alpha}_\infty\leq1$. The Quantum Sample complexity required
to estimate all $f_{U^{(i)}}(\rho,O)$ to additive error $\epsilon$, with
success probability at least $1-\delta$, satisfies
\begin{equation}
    N_{\QS}
    \in
    \order{
        \frac{n^{3h}}{\epsilon^2}
        \log\left(\frac{n}{\delta}\right)
        \norm{\vec c}_1^2
    }.
    \label{eq:state_tomo_u1}
\end{equation}
\end{theorem}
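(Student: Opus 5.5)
The argument reduces the theorem to Lemma~\ref{lem:u1_state_tomography} by converting the target accuracy $\epsilon$ on $f_{U^{(i)}}(\rho,O)$ into a required per-amplitude accuracy $\widetilde\epsilon$. The key observation is that the reconstruction $\ket{\widetilde\psi}$ is obtained once and reused for all $T$ instances. Once the amplitude errors are uniformly controlled with probability $1-\delta$, every $f_{U^{(i)}}$ is controlled simultaneously, so no union bound over $T$ is needed. This explains why the bound carries $\log(n/\delta)$ rather than $\log(T/\delta)$.

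\textbf{Step 1: from amplitude error to vector error.} Suppose every amplitude satisfies $\abs{\alpha_{\vec z}-e^{i\phi}\widetilde\alpha_{\vec z}}\leq\widetilde\epsilon$. Summing over the $d_h^{(n)}$ configurations gives
\begin{equation}
\norm{\ket{\psi}-e^{i\phi}\ket{\widetilde\psi}}_2\leq\sqrt{d_h^{(n)}}\,\widetilde\epsilon\in\order{n^{h/2}\widetilde\epsilon}.
\end{equation}

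\textbf{Step 2: from vector error to expectation-value error.} Write $\widetilde O_i=(U^{(i)})^\dagger O U^{(i)}$. For each instance,
\begin{equation}
\abs{\bra{\psi}\widetilde O_i\ket{\psi}-\bra{\widetilde\psi}\widetilde O_i\ket{\widetilde\psi}}\leq 2\norm{\widetilde O_i}_\infty\norm{\psi-e^{i\phi}\widetilde\psi}_2+\order{\norm{\psi-e^{i\phi}\widetilde\psi}_2^2}.
\end{equation}
The global phase cancels in the expectation value. Unitary invariance gives $\norm{\widetilde O_i}_\infty=\norm{O}_\infty$, and since $\norm{P_\alpha}_\infty\leq 1$ we have $\norm{O}_\infty\leq\norm{\vec c}_1$. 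The resulting bound is therefore uniform in $i$.

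The reconstructed $\ket{\widetilde\psi}$ need not be normalized. I would renormalize it, which changes the distance by at most a factor of $2$. Alternatively, I would keep the unnormalized vector and absorb the quadratic term, since it is subleading once the linear term is below $\epsilon$.

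\textbf{Step 3: choosing $\widetilde\epsilon$ and invoking the lemma.} To make the total error at most $\epsilon$, choose
\begin{equation}
\widetilde\epsilon\in\Theta\!\left(\frac{\epsilon}{\norm{\vec c}_1\sqrt{d_h^{(n)}}}\right)=\Theta\!\left(\frac{\epsilon}{\norm{\vec c}_1 n^{h/2}}\right).
\end{equation}
Lemma~\ref{lem:u1_state_tomography} then requires
\begin{equation}
N_{\QS}\in\order{\frac{n^{2h}}{\widetilde\epsilon^2}\log\frac{n}{\delta}}=\order{\frac{n^{3h}}{\epsilon^2}\norm{\vec c}_1^2\log\frac{n}{\delta}},
\end{equation}
which is the claimed bound.

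\textbf{Main obstacle.} The delicate step is Step 1, specifically whether the error conversion costs $\sqrt{d_h^{(n)}}$ and not $d_h^{(n)}$. The lemma only controls the error of each amplitude individually. Converting that into an $\ell_2$ guarantee loses a factor $\sqrt{d_h^{(n)}}$, and this loss is exactly what produces the extra $n^h$ in the final bound.

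I would also need to check that the reference-amplitude normalization step in the lemma's proof does not introduce additional correlated error across amplitudes. Such errors would add coherently rather than in quadrature. I would handle this by noting that the lemma's error statement is already a worst-case bound on each amplitude. The $\ell_2$ bound then follows by direct summation, with no independence assumption needed.
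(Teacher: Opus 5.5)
Your proposal is correct and matches the paper's proof step for step: bound $\norm{\Delta}_2\leq\widetilde\epsilon\sqrt{d_h^{(n)}}$, convert this to an expectation-value error $2\norm{O}_\infty\norm{\Delta}_2$ with $\norm{O}_\infty\leq\norm{\vec c}_1$, choose $\widetilde\epsilon=\epsilon/(2\sqrt{d_h^{(n)}}\norm{\vec c}_1)$, and substitute into Lemma~\ref{lem:u1_state_tomography}, with $T$-independence coming from reusing the single reconstructed state. Your explicit treatment of the normalization and of the quadratic term is slightly more careful than the paper's ``up to a constant factor'' remark, but the argument is the same.
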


\begin{proof}
Let $\ket{\widetilde\psi}$ denote the reconstructed state and define
$\ket{\Delta}=\ket{\psi}-\ket{\widetilde\psi}$. If every amplitude is
reconstructed to error $\widetilde\epsilon$, then
\begin{equation}
    \norm{\Delta}_2
    \leq
    \widetilde\epsilon\sqrt{d_h^{(n)}}.
\end{equation}
After normalizing the reconstructed state, if necessary, the
expectation-value error for any circuit instance is bounded, up to a
constant factor, by
\begin{align}
    \abs{
        \bramatketq{\psi}{(U^{(i)})^\dagger O U^{(i)}}
        -
        \bramatketq{\widetilde\psi}{(U^{(i)})^\dagger O U^{(i)}}
    }
    &\leq
    2\norm{O}_\infty\norm{\Delta}_2
    \nonumber\\
    &\leq
    2\widetilde\epsilon
    \sqrt{d_h^{(n)}}
    \norm{\vec c}_1,
    \label{eq:u1_expectation_error}
\end{align}
where
$\norm{O}_\infty\leq\norm{\vec c}_1$. It therefore suffices to take
\begin{equation}
    \widetilde\epsilon
    =
    \frac{\epsilon}
    {2\sqrt{d_h^{(n)}}\norm{\vec c}_1}.
    \label{eq:u1_amplitude_precision}
\end{equation}
Substitution into Eq.~\eqref{eq:NQS_U1_state_tomography} gives
Eq.~\eqref{eq:state_tomo_u1}. The reconstructed state is independent of
the circuit parameters, so the same state description is reused for all
$T$ instances and no additional dependence on $T$ is required.
\end{proof}

\subsection{Classical Time Complexity
\label{adx:U1_classical_time}}

We first consider the parameter-dependent circuit evaluation. Within
$\HC_h^{(n)}$, an RBS gate acting on qubits $i$ and $j$ is nontrivial
only on basis-state pairs for which exactly one of the two qubits is
occupied. The remaining $h-1$ excitations can be placed on any subset
of $h-1$ of the remaining $n-2$ qubits. Hence, one RBS gate acts on
\begin{equation}
    \binom{n-2}{h-1}
\end{equation}
two-dimensional pairs and can be applied in
$\order{\binom{n-2}{h-1}}$ classical operations. The locations of these
pairs depend only on the generator and can be precomputed once.

For one $L$-gate circuit instance,
\begin{equation}
    N_{\rm eval}
    \in
    \order{
        L\binom{n-2}{h-1}
    }.
\end{equation}
The rotation angles change between circuit instances, so the
parameter-dependent rotations must be reapplied for all $T$ instances,
giving
\begin{equation}
    N_{\rm circ}
    \in
    \order{
        TL\binom{n-2}{h-1}
    }
    \subseteq
    \order{TLn^{h-1}}.
    \label{eq:NCT_U1_appendix}
\end{equation}

We next include the one-time shadow post-processing used to reconstruct
the initial state. The All-Pairs protocol requires
$\order{n+n_z4^{n_z}}$ operations per sample and observable, together
with a one-time $\order{n_z^3}$ channel-inversion cost
~\cite{hearth2024efficient}. Reconstructing the state requires
$\order{d_h^{(n)}}$ expectation values, and
$n_z\leq h\in\order{1}$. A conservative bound is therefore
\begin{equation}
    N_{\rm shadow\mbox{-}post}
    \in
    \order{
        d_h^{(n)}N_{\QS}n
    }
    \subseteq
    \order{
        N_{\QS}n^{h+1}
    }.
    \label{eq:U1_shadow_postprocessing}
\end{equation}
This cost is independent of $T$ because it constructs a reusable
description of the initial state.

Combining both contributions gives
\begin{align}
    N_{\CT}
    &\in
    \order{
        N_{\QS}n^{h+1}
        +
        TL\binom{n-2}{h-1}
    }
    \nonumber\\
    &\subseteq
    \order{
        \frac{n^{4h+1}}{\epsilon^2}
        \log\left(\frac{n}{\delta}\right)
        \norm{\vec c}_1^2
        +
        TLn^{h-1}
    }.
    \label{eq:NCT_U1_total_appendix}
\end{align}
If the initial state is known classically, the first term is absent.

\subsection{Quantum Time Complexity}

The All-Pairs measurement circuit consists of disjoint two-qubit gates.
Assuming the required pairwise gates can be implemented in parallel,
and ignoring additional routing overhead, the Quantum Time complexity
per sample is
\begin{equation}
    N_{\QT}\in\order{1}.
\end{equation}
If the initial state is known classically, no shadow acquisition is
required and $N_{\QS}=N_{\QT}=0$.

Equations~\eqref{eq:state_tomo_u1} and
\eqref{eq:NCT_U1_total_appendix}, together with the Quantum Time bound
above, prove Theorem~\ref{thm:classical_sim_U1}.

\section{Quantum Convolutional Neural Networks\label{adx:qcnn}}

\begin{figure}[h]
    \centering
    \includegraphics[width=0.4\linewidth]{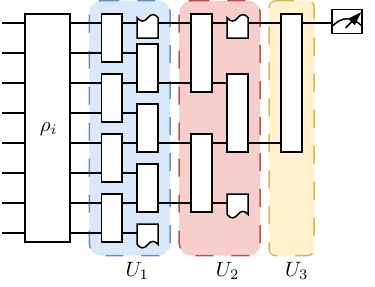}
    \caption{\textbf{Schematic diagram of a quantum convolutional neural network.} The input state $\rho_i$ is embedded into the $n$-qubit system. The QCNN circuit consists of convolutional filters, which are two-qubit unitaries acting on alternating pairs of neighboring qubits, and pooling layers that reduce the number of active qubits. At each hierarchical layer $U_j$, the pooling step reduces the number of active qubits by a factor of two.}
    \label{fig:qcnn}
\end{figure}

A QCNN has a hierarchical architecture in which the number of active
qubits decreases with the layer index. We write
\begin{equation}
    n_j=\frac{n}{2^{j-1}},
    \qquad
    j=1,\ldots,L,
    \qquad
    L=\log_2(n),
\end{equation}
and denote by $\overline{\PC}_{n_j}^{(\tau)}$ the
Hilbert--Schmidt-normalized Pauli strings supported on the $n_j$ active
qubits with weight at most $\tau$. The number of such strings satisfies
\begin{equation}
    D_\tau(n_j)
    =
    \sum_{r=0}^{\tau}3^r\binom{n_j}{r}
    \in
    \order{n_j^\tau}
\end{equation}
for fixed $\tau$.

\subsection{Low-weight Pauli propagation and Classical Time complexity}

The Pauli-path representation can be adapted directly to this
hierarchical structure. Using the Hilbert--Schmidt-normalized Pauli
basis introduced in Section~\ref{sec:pauli_propagation}, we write
\begin{equation}
    \Tr[U\ad O U\rho]
    =
    \sum_{\substack{s_0\in\overline{\PC}_n\\
                    s_j\in\overline{\PC}_{n_j}}}
    \Tr[Os_L]
    \prod_{j=1}^{L}
    \Tr[U_j\ad s_j U_j s_{j-1}]
    \Tr[s_0\rho].
    \label{eq:pauli_path_qcnn}
\end{equation}
Low-weight Pauli Propagation restricts the sum to paths satisfying
$\abs{s_j}\leq\tau$ at every layer. The important difference from a
non-hierarchical circuit is that the size of the retained Pauli space
shrinks after every pooling step.

For the first transition, the number of retained pairs is at most
$D_\tau(n)^2\in\order{n^{2\tau}}$. For $j\geq2$, the number of
transition amplitudes between layers $j-1$ and $j$ is upper bounded by
\begin{equation}
    D_\tau(n_{j-1})D_\tau(n_j)
    \in
    \order{\frac{n^{2\tau}}{2^{\tau(2j-3)}}}.
\end{equation}
Therefore,
\begin{align}
    D_\tau(n)^2
    +
    \sum_{j=2}^{L}
    D_\tau(n_{j-1})D_\tau(n_j)
    &\in
    \order{
        n^{2\tau}
        \left(
            1+
            \sum_{j=2}^{L}
            2^{-\tau(2j-3)}
        \right)
    }
    \\
    &\in
    \order{n^{2\tau}}.
    \label{eq:qcnn_geometric_runtime}
\end{align}
Thus, evaluating $T$ parameterized QCNN instances requires
\begin{equation}
    N_{\rm prop}
    \in
    \order{Tn^{2\tau}}.
    \label{eq:qcnn_propagation_cost}
\end{equation}
The $\log(n)$ circuit depth does not introduce an additional logarithmic
factor because the layer-dependent costs form a convergent geometric
series. In practice, many transition amplitudes vanish and the realized
cost can be smaller than this worst-case count.

\subsection{Truncation error}

The low-weight restriction introduces an approximation error. For the
locally scrambling circuit ensembles considered in
Ref.~\cite{angrisani2024classically}, the low-weight estimator obeys
\begin{equation}
    \Ebb_U\left[
        \abs{f_U(\rho,O)-f_U^{(\tau)}(\rho,O)}^2
    \right]
    \leq
    \left(\frac{2}{3}\right)^{\tau+1}
    \norm{O}_{\rm Pauli,2}^2,
    \label{eq:qcnn_truncation_error_appendix}
\end{equation}
where
$\norm{O}_{\rm Pauli,2}=2^{-n/2}\norm{O}_2$.
Equation~\eqref{eq:qcnn_truncation_error_appendix} is an average-case
statement over the circuit ensemble. The end-to-end simulations of
Ref.~\cite{bermejo2024quantum} provide numerical evidence that the same
low-weight mechanism remains effective along the training trajectories
of the benchmark QCNNs considered there.

When the input state is estimated from quantum data, the total error
can be separated as
\begin{align}
    \abs{
        f_U(\rho,O)
        -
        \widehat f_U^{(\tau)}(\rho,O)
    }
    &\leq
    \abs{
        f_U(\rho,O)
        -
        f_U^{(\tau)}(\rho,O)
    }
    \nonumber\\
    &\quad+
    \abs{
        f_U^{(\tau)}(\rho,O)
        -
        \widehat f_U^{(\tau)}(\rho,O)
    }.
    \label{eq:qcnn_error_decomposition}
\end{align}
The first term is the truncation error, while the second is the
statistical error associated with the initial-state data acquisition.

\subsection{Quantum Sample Complexity}

After truncation, the only input-dependent quantities are expectation
values $\Tr[s_0\rho]$ for low-weight Pauli strings. Randomized local
Pauli measurements can therefore be performed once on the initial state
and reused for every later circuit parameter value. Combining the
classical-shadow construction with the locally scrambling
low-weight estimator gives the sufficient scaling
\begin{equation}
    N_{\QS}
    \in
    \order{
        \frac{\exp(\order{\tau})}{\epsilon^2}
        \log\left(\frac{n}{\delta}\right)
        \norm{\vec{c}}_1^2
    }.
    \label{eq:NQS_qcnn_appendix}
\end{equation}
The exponential dependence is only on the truncation threshold
$\tau$. For fixed $\tau$, the quantum data acquisition therefore grows
only logarithmically with the system size. The shadow is used to
estimate a fixed collection of low-weight Pauli expectations of the
input state. Once these quantities are controlled simultaneously, they
can be reused for all $T$ QCNN instances, which is why
Eq.~\eqref{eq:NQS_qcnn_appendix} does not acquire an additional
dependence on $T$. If the input data
admit an efficient classical description, this quantum data-acquisition
step is absent.

\subsection{Classical shadow post-processing and Quantum Time complexity}

A direct implementation can process the $N_{\QS}$ snapshots into the
$D_\tau(n)\in\order{n^\tau}$ low-weight Pauli expectation values used by
the surrogate. This gives the one-time Classical Time cost
\begin{equation}
    N_{\rm pre}^{(\rm shadow)}
    \in
    \order{N_{\QS}n^\tau}.
    \label{eq:qcnn_shadow_postprocessing}
\end{equation}
Together with Eq.~\eqref{eq:qcnn_propagation_cost}, the total Classical
Time complexity is
\begin{equation}
    N_{\CT}
    \in
    \order{
        N_{\QS}n^\tau
        +
        Tn^{2\tau}
    },
\end{equation}
which is Eq.~\eqref{eq:NCT_qcnn} in the main text. Since the quantum
data acquisition uses only local basis rotations followed by
single-qubit measurements, all basis changes can be performed in
parallel and
\begin{equation}
    N_{\QT}\in\order{1}.
\end{equation}
If the required low-weight input-state expectations are known
classically, the shadow acquisition and its post-processing are absent,
so the first term in $N_{\CT}$ vanishes and
$N_{\QS}=N_{\QT}=0$. This proves the resource scalings stated in
Theorem~\ref{thm:classical_sim_qcnn}.

\section{ADAPT-VQE for Quantum Chemistry
\label{adx:adapt_vqe}}

\subsection{Architecture}

ADAPT-VQE constructs the ansatz iteratively rather than fixing its
architecture before the optimization~\cite{grimsley2019adaptive,
tang2019qubit,grimsley2022adapt}. Starting from the Hartree--Fock
state $\ket{\psi_{\rm HF}}$, each adaptive step evaluates the energy
gradient associated with the operators in a pool $\AC$, selects the
operator with the largest gradient magnitude, appends the corresponding
parameterized gate, and then re-optimizes all parameters in the enlarged
ansatz.

We restrict the operator pool to Majorana strings with bounded fermionic
locality, as in Section~\ref{sec:adapt_vqe}. After $\ell$ adaptive
steps, the circuit can be written as
\begin{equation}
    U^{(\ell)}(\vec{w})
    =
    \prod_{j=1}^{\ell}
    e^{-iw_jG^{\kappa_j}_{\vec{\mu}_j}/2},
    \qquad
    G^{\kappa_j}_{\vec{\mu}_j}\in\AC.
    \label{eq:adapt_appendix_circuit}
\end{equation}
The final circuit contains $L$ selected generators. We use
$T_{\rm VQE}$ for the total number of energy evaluations performed over
all parameter re-optimization stages of the complete adaptive run.

The operation-counting arguments below apply once a truncation threshold
$\tau$ has been fixed. The randomness assumptions enter separately when
establishing that this truncation provides a controlled approximation.
In particular, the rigorous error guarantee used in this work is the
one obtained for the randomized fermionic ADAPT-VQE setting considered
in Ref.~\cite{miller2025simulation}.
\subsection{Majorana Propagation and truncation error}

Majorana Propagation~\cite{miller2025simulation} is the fermionic
analogue of Pauli Propagation. For an observable expanded in Hermitian
Majorana strings,
\begin{equation}
    O
    =
    \sum_{\kappa,\vec{\mu}}
    c^{(\kappa)}_{\vec{\mu}}
    G^\kappa_{\vec{\mu}},
    \label{eq:adapt_majorana_expansion}
\end{equation}
we backpropagate each retained Majorana string through the circuit.
Conjugation by a Majorana rotation either leaves a string unchanged or
generates a second Majorana string. Weight truncation discards strings
whose fermionic locality exceeds $\tau$.

Under the assumption $\tau\leq n/2$, the number of retained Majorana
strings is bounded as
\begin{equation}
    \sum_{\kappa=0}^{\tau}
    \binom{2n}{\kappa}
    \in
    \order{n^\tau}.
    \label{eq:adapt_retained_majoranas}
\end{equation}
Hence, for a fixed truncation threshold, backpropagating one observable
through an $L$-gate circuit requires at most
\begin{equation}
    N_{\CT}
    \in
    \order{Ln^\tau}
    \label{eq:adapt_single_propagation}
\end{equation}
operations.

The truncation makes this simulation approximate. For the randomized
fermionic circuit ensemble considered in
Ref.~\cite{miller2025simulation}, together with an observable whose
Majorana coefficients are unbiased and uncorrelated, the mean-squared
truncation error satisfies
\begin{equation}
    \Ebb\left[
        \left(
            f_U(\rho,O)
            -
            f_U^{(\tau)}(\rho,O)
        \right)^2
    \right]
    \leq
    \left(\frac{1}{2^{n-1}} + \left(
        \frac{e\tau}{n}
    \right)^{\tau/2} 
    \right)
    \Ebb\left[
        \majoranaNorm{O}^2
    \right],
    \label{eq:adapt_appendix_mse}
\end{equation}
where $n$ denotes the number of fermionic modes, and
\begin{equation}
    \Ebb\left[
        \majoranaNorm{O}^2
    \right]
    =
    \Ebb\left[
        \frac{\Tr[OO^\dagger]}{2^n}
    \right].
\end{equation}
For the Jordan--Wigner encoding considered here, the number of
fermionic modes coincides with the number of qubits.

Equation~\eqref{eq:adapt_appendix_mse} can also be converted into a
high-probability guarantee using Markov's inequality,
\begin{equation}
    \Pr\left[
        \abs{
            f_U(\rho,O)-f_U^{(\tau)}(\rho,O)
        }
        >
        \epsilon
    \right]
    \leq
    \frac{1}{\epsilon^2} 
    \left( 
    \frac{1}{2^{n-1}} + 
    \left(
        \frac{e\tau}{n}
    \right)^{\tau/2}
    \right)
    \Ebb\left[
        \majoranaNorm{O}^2
    \right].
    \label{eq:adapt_appendix_high_probability}
\end{equation}
Thus, in the parameter regime considered in
Ref.~\cite{miller2025simulation}, a truncation threshold
\begin{equation}
    \tau
    \in
    \order{
        \log\left(
            \epsilon^{-1}\delta^{-1}
        \right)
    }
\end{equation}
provides a controlled approximation while retaining a polynomial
Classical Time complexity.

We stress that the randomness assumptions are required for the rigorous
error guarantee in Eq.~\eqref{eq:adapt_appendix_mse}, rather than for
the counting argument in Eq.~\eqref{eq:adapt_retained_majoranas}.
Numerical results in Ref.~\cite{chakraborty2026scalable} suggest that
Majorana Propagation can remain accurate beyond this randomized setting.

\subsection{Classical Time Complexity}

We now apply the truncated Majorana representation above to the
randomized ADAPT-VQE procedure. The Classical Time cost contains a
gradient-screening contribution and an energy-reevaluation contribution.  At each of the $L$ adaptive steps, the gradients associated with the
$\abs{\AC}$ pool operators must be evaluated from the current retained
Majorana representation. Under the reuse strategy assumed here, the
representation is extended when the new gate is appended rather than
reconstructed from scratch. With at most $\order{n^\tau}$ retained
strings, this gives
\begin{equation}
    N_{\rm grad}
    \in
    \order{L\abs{\AC}n^\tau}.
\end{equation}

After each generator is selected, the VQE parameters are re-optimized.
Each surrogate energy evaluation requires at most
$\order{n^\tau}$ operations, so the $T_{\rm VQE}$ evaluations
accumulated over the complete run contribute
\begin{equation}
    N_{\rm opt}
    \in
    \order{T_{\rm VQE}n^\tau}.
\end{equation}
Combining both contributions gives
\begin{equation}
    N_{\CT}
    \in
    \order{
        \left(
            L\abs{\AC}
            +
            T_{\rm VQE}
        \right)n^\tau
    }.
    \label{eq:NCT_adapt_vqe_appendix}
\end{equation}
This is the Classical Time scaling reported in
Theorem~\ref{thm:classical_sim_adapt_vqe}. The first term is generally
the cost of building and extending the gradient surrogates, while the
second cannot be neglected when the inner VQE optimization requires
many energy evaluations.

\subsection{Quantum Sample and Quantum Time Complexities}

For the quantum-chemistry setting considered here, the initial state is
the Hartree--Fock state. Its Majorana correlators are determined by the
classical occupation pattern, so the input quantities required by
Majorana Propagation can be evaluated analytically. Therefore, the
CS does not require a quantum data-acquisition phase,
and
\begin{equation}
    N_{\QS}=0,
    \qquad
    N_{\QT}=0.
    \label{eq:adapt_zero_quantum_resources_appendix}
\end{equation}
If a generic unknown input state were used instead, these two quantities
would have to be replaced by the cost of an appropriate fermionic
shadow or tomography protocol. Together with
Eq.~\eqref{eq:NCT_adapt_vqe_appendix}, this proves the resource scalings
stated in Theorem~\ref{thm:classical_sim_adapt_vqe}.

\section{Monetary cost across quantum hardware platforms\label{adx:monetary_cost_adx}}
Figure~\ref{fig:pricing_adx} presents the estimated monetary cost for all quantum hardware platforms listed in Table~\ref{tab:pricing}. These results are omitted from the main text for readability. 
\begin{figure*}[h]
    \centering
    \includegraphics[width=\linewidth]{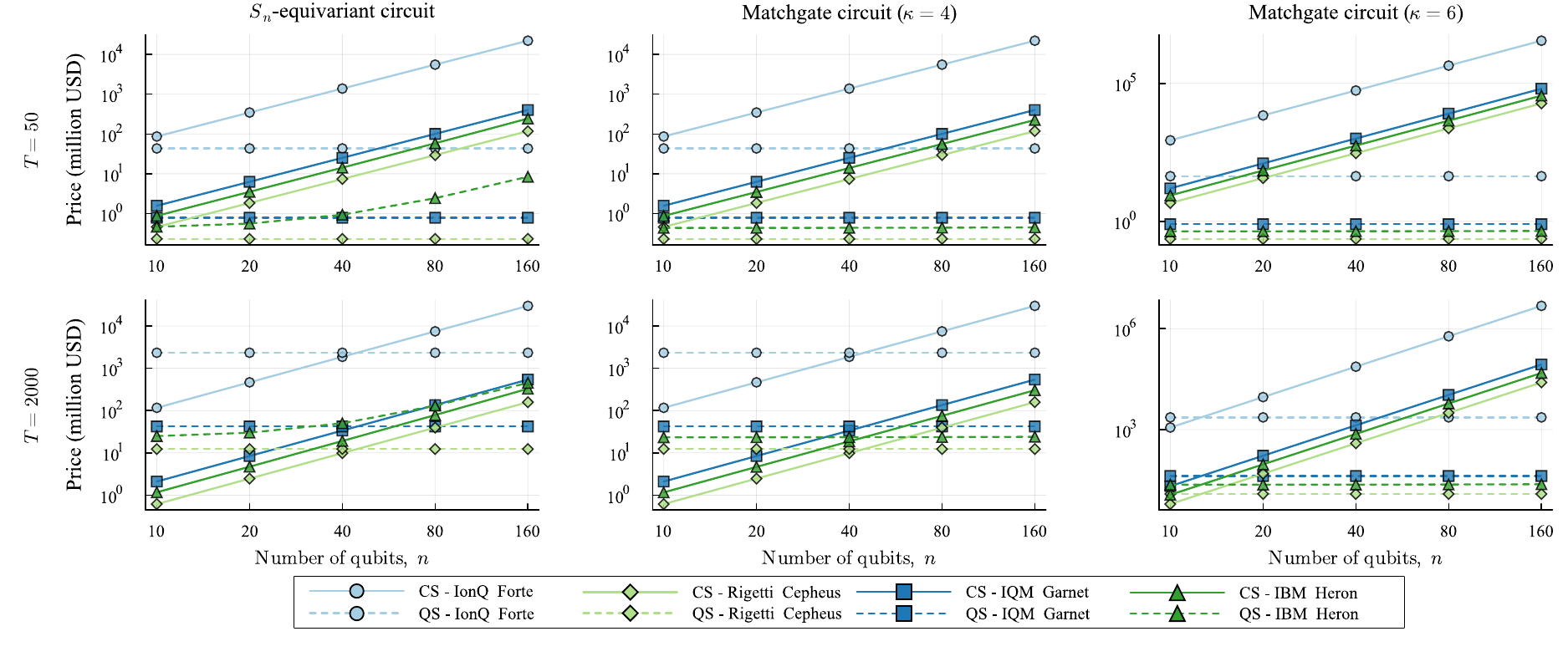}
    \caption{\textbf{Estimated quantum hardware access cost of QS and CS over $T = 50$ and $T=2000$ circuit instances.}
    We consider the $S_n$-equivariant and Matchgate examples with $\kappa = 4$ and $\kappa = 6$ described
    above, using the publicly available pricing data summarized in
    Table~\ref{tab:pricing}. The plotted values account for
    quantum hardware access and do not include the monetary cost of the
    classical computation. Results exceeding the available device size, namely
    $n>36$ for IonQ Forte, $n > 20$ for IQM Garnet, $n > 108$ for Rigetti Cepheus,  and $n>156$ for IBM Heron, extrapolate
    the reported hardware pricing and should therefore be
    interpreted only as scaling estimates. }
    \label{fig:pricing_adx}
\end{figure*}

\end{document}